\begin{document}
\title{Rendezvous on a Known Dynamic Point on a Finite Unoriented Grid}
%
%
\author{Pritam Goswami\orcidID{0000-0002-0546-3894\thanks{The first three authors are full time research scholars in Jadavpur University.}} \and
Avisek Sharma\orcidID{0000-0001-8940-392X}\and
Satakshi Ghosh\orcidID{0000-0003-1747-4037}\and
Buddhadeb Sau\orcidID{0000-0001-7008-6135}}
\authorrunning{P. Goswami, A. Sharma, S. Ghosh, and B. Sau}
%
\institute{Jadavpur University, 188, Raja S.C. Mallick Rd,
Kolkata 700032, India\\
\email{\{pritamgoswami.math.rs, aviseks.math.rs, satakshighosh.math.rs,  buddhadeb.sau\}@jadavpuruniversity.in}}
\maketitle              
\begin{abstract}
In this paper, we have considered two fully synchronous $\mathcal{OBLOT}$ robots having no agreement on coordinates entering a finite unoriented grid through a door vertex at a corner, one by one. There is a resource that can move around the grid synchronously with the robots until it gets co-located along with at least one robot. Assuming the robots can see and identify the resource, we consider the problem where the robots must meet at the location of this dynamic resource within finite rounds. We name this problem "Rendezvous on a Known Dynamic Point". 

Here, we have provided an algorithm for the two robots to gather at the location of the dynamic resource. We have also provided a lower bound on time for this problem and showed that with certain assumption on the waiting time of the resource on a single vertex, the algorithm provided is time optimal. We have also shown that it is impossible to solve this problem if the scheduler considered is semi-synchronous.

\keywords{Rendezvous \and Finite Grid \and Dynamic Resource.}
\end{abstract}
\section{Introduction}
A swarm of robots is a collection of inexpensive and simple robots that can do a task collaboratively by executing one single distributed algorithm. In recent days swarm robot algorithm has become an exciting topic for research for several different reasons. Firstly, from the economic perspective, it is in general cheaper than using powerful robots. Moreover, a swarm of robots can be easily scaled based on the size of the environment they are deployed. Also, a swarm of robots is more robust against different faults (eg. crash faults and byzantine faults). There are many other positive sides to using a swarm of robots for executing a task.  Thus, this topic has become quite relevant in the field of research and application. The application of swarm robots is huge. For example, it can be used for patrolling, different military operations, rescue operations, cleaning large surfaces, disaster management, network maintenance and there are several others. 
\subsection{Background and Motivation}
There are several tasks a swarm of robots can do like, arbitrary pattern formation (\cite{BAKS19}), gathering (\cite{CP05}), network exploration (\cite{OT18}), dispersion (\cite{DBS21}) and many more. Here, we are interested in the problem of gathering. Gathering is a very classical problem where a collection of robots deployed in an environment meets at a single point within a finite time. 
 This problem has been solved under different environments and different settings (\cite{CP05,DSKN16,DSN14,GSGS22,KKN10,KMP08,P07}).  Rendezvous is a special case of gathering where the number of robots that need to gather is exactly two (\cite{DFPSY16,FSVY16,HDT18,ISKIDWY12,SIYM99}). 
 
 Since the deployed robots are simple it is hard for them to exchange important information being far apart. So the main motivation for gathering is to meet at a single point where the robots can exchange information for doing some task. Now let the information is stored at a single point or a set of points in the environment. And the robots need to be on those specific points to exchange information. In that case, the robots must gather at one of those specific points for exchanging information.

Now, let there be one single point of resource in the environment and the resource itself is a robot doing some other task (i.e., the algorithm for the resource is independent of the algorithm presented in this paper) in the same environment and thus, can move freely until it meets with another robot. So, the question is can two robots gather at the location of this moving resource? this is the question that has been the motivation behind this paper. 

Now, it is quite obvious that the environment should be a bounded region otherwise it would be impossible to reach the resource. Also for a bounded region in a plane, finite point robots can't meet at the location of the resource as there are infinitely many empty points where the resource can move to avoid the meeting. Thus it is natural to consider this problem for a bounded network. Now a finite grid is a widely used network in various fields and has many real life applications. For that reason, we have considered a finite grid as the environment in this work. Also, This problem can be framed as the problem where two cops are chasing and catching a robber on the run,  on the streets of some city. Many cities have their road network in the pattern of a grid (e.g., Manhattan). For this reason also, studying this problem on a finite grid is interesting. 

Note that if two robots with weak multiplicity detection can gather at the location of the resource, in some bounded networks, then any number of robots can gather. This is because after two robots meet with the resource, the resource becomes still and the other robots simply move to the location of the resource. That is why we have considered this problem with two robots only, rather than using any number of robots.
\subsection{Earlier Works}
In this paper, we are focusing on the problem of rendezvous on a known dynamic vertex. Rendezvous is a special case of gathering involving two robots. Gathering has been studied under different environments and different models throughout the span of research on swarm robot algorithms. In \cite{CP05}, authors have shown that gathering on a plane is possible for fully synchronous $\mathcal{OBLOT}$ robots but in \cite{P07} it has been proved that for semi-synchronous and asynchronous $\mathcal{OBLOT}$ robots it is impossible to gather without any axis agreement and multiplicity detection capabilities. So considering multiplicity detection only a solution has been provided in \cite{CFPS12} under the asynchronous scheduler. Gathering has been studied under different networks also (\cite{DSKN16,DSN14,KKN10,KMP08}). In \cite{KMP08} Klasing et al. first proposed the problem on a ring and proved that gathering on a ring is impossible without the robots having multiplicity detection capabilities. In \cite{DSKN16}, the authors examined the problem on the grid and trees and they found out that gathering is impossible even with global multiplicity detection if the configuration is periodic or symmetric and the line of symmetry is passing through any of the grid lines. Considering limitations in the view of robots many works have been done recently in \cite{FPSW05,GSGS22,DUVY20,PS21}. Among these, the work in \cite{PS21} and \cite{GSGS22} considered infinite rectangular and triangular grids respectively.

Now Rendezvous is a special case of gathering which has been studied extensively in \cite{DFPSY16,FSVY16,HDT18,ISKIDWY12,SIYM99}. In \cite{SIYM99}, Suzuki et al. have shown that two $\mathcal{OBLOT}$ robots can't gather in a semi-synchronous setting if the robots do not have any agreement on their local coordinate system even with multiplicity detection. So in \cite{DFPSY16,FSVY16,HDT18} authors have solved the problem considering robots with $O(1)$ memory or $O(1)$ bits of message communication under an asynchronous scheduler.
\subsection{Our Contribution} 
Till now all work in gathering considered the meeting point to be not known from earlier. contrary to that, in this work it is assumed that, two fully synchronous robots entering a finite unoriented grid through a door at a corner of the grid, know the meeting point (i.e., can see and identify the resource). But the problem is, the meeting point (i.e the location of the resource)  can also move to an adjacent vertex along with the robots in a particular round. 

Assuming the robots to be of $\mathcal{OBLOT}$ model, a deterministic, distributed algorithm has been provided that solves the rendezvous problem on a known dynamic meeting point within $O(T_f \times (m+n))$ rounds, where $T_f$ is the upper bound of the number of consecutive rounds the meeting point i.e., the resource can stay at a single vertex alone and $ m \times n$ is the dimension of the grid. We have also shown that for solving rendezvous on a known dynamic point on a finite grid of dimension $m \times n$ at least $\Omega(m+n)$  epochs is necessary. Hence, if we assume that the maximum number of consecutive rounds, the location of the resource can stay the same is $O(1)$ then, the algorithm provided in this paper is time optimal. We have also proved that solving rendezvous on a known dynamic point on a finite grid is impossible if the scheduler considered is semi-synchronous. This justifies why a fully synchronous scheduler has been considered in this work.
\subsection{Organization of the Paper}
In section~\ref{Sec:2}, we have defined the problem formally and discussed the models of the robot, resource, and scheduler in detail. We also have some definitions and notations in this section which will be needed for the contents in Section~\ref{sec:4}. In Section~\ref{sec:3}, we have discussed the lower bound of time required to solve this problem and also proved an impossibility result about solving this problem under semi synchronous scheduler. In Section~\ref{sec:4}, we have described each phase of the algorithm with the correctness results mentioned in different theorems and lemmas. Finally, in Section~\ref{sec:5}, we conclude the paper with some future possibilities and pathways for this research to continue.
\section{Problem Definition and Model}
\label{Sec:2}
\subsection{Problem Definition}
Let $G$ be a finite grid of dimension $m \times n$. Suppose there is a doorway in a corner of the grid through which two synchronous robots $r_1$ and $r_2$ can enter the grid. The robots can only identify the door if they are located on it. Consider a movable resource that is placed arbitrarily on a vertex of $G$. Both robots can see the resource. The resource will become fixed if at least one of $r_1$ or $r_2$ is on the same vertex with the resource. Now the problem is to design a distributed algorithm such that after finite execution of which both the robots gather at the vertex of the resource.

\subsection{Model}
Let $G=(V, E)$ be a graph embedded on an euclidean plane where $V=\{(i,j) \in \mathbb{R}^2: i,j \in \mathbb{Z} , 0\le i< n , 0 \le j < m \}$  and there is an edge $e \in E$ between two vertices, say $(i_1,j_1)$ and $(i_2,j_2)$, only if  either $i_1 = i_2$ and $|j_1-j_2|=1$ or, $j_1 = j_2$ and $|i_1-i_2| =1$. We call this graph a finite grid of dimension $m \times n$. Though the graph is defined here using coordinates, the robots has no perception of this coordinates which makes this grid unoriented. A corner vertex is a vertex of $G$ of degree two. A vertex is called a boundary if either the degree of that vertex is three or the vertex is a corner.  $G$ has  four corner vertex among which exactly one corner vertex has a door. This vertex having a door is called  the door vertex. Robots can enter the grid by entering through that door. There is a movable resource, initially placed arbitrarily at a vertex $g_0$ ($g_0$ is not the door) of $G$. 

\subsubsection{Robot Model:}
The robots are considered to be
\begin{itemize}
    \item[$\blacksquare$] \textbf{Autonomous:} There is no centralized control.
    \item[$\blacksquare$]\textbf{Anonymous:} The robots do not have any unique identifiers for distinction.
    \item[$\blacksquare$] \textbf{Homogeneous:} All robots run the same distributed algorithm.
    \item[$\blacksquare$] \textbf{identical:} The robots are physically indistinguishable.
\end{itemize}
Also, the robots are considered to be point $\mathcal{OBLOT}$ robots (i.e., robots with no persistent memory).  The robots can enter through the door one by one. A robot can distinguish if a vertex is on the boundary or a corner of the grid. Also, a robot can identify the door only if it is on the door vertex. Observe that, if the robots could distinguish the door vertex from any other vertex while located on some arbitrary vertex of the grid, then an orientation of the grid can be agreed upon by the robots. But since that is not the case here there is no such orientation of the grid on which the robots can agree. This makes this model quite interesting. A robot can distinguish the resource from other robots. Each robot has its local coordinate system but they do not agree on any global coordinate system. 

The robots operate in a \textit{LOOK-COMPUTE-MOVE} (LCM) cycle. In each of the cycles, a robot that was previously idle wakes and does the following phases,

\textbf{\textit{LOOK}:} In \textit{LOOK} phase a robot takes a snapshot of its surroundings and gets the location of other robots and the resource according to its local coordinate system.

\textbf{\textit{COMPUTE}:} In this phase a robot performs an algorithm with the locations of resource and other robots as input and as an output of that algorithm it gets the location of a neighboring vertex called the destination point.

\textbf{\textit{MOVE:}} In \textit{MOVE} phase a robot moves to the destination point through the edge of $G$ joining its current location and destination vertex. It is assumed that no two robots can cross each other through one edge without collision.

After completion of \textit{MOVE} phase, the robot becomes idle  until it is
activated again.

The activation of the robots is controlled by an entity called a scheduler. In the literature, there are mainly three types of schedulers. In the following, we discuss all the scheduler models and the scheduler we have chosen among them for solving this problem.
\subsubsection{Scheduler Model:} There are mainly three types of schedulers that have been considered throughout the literature of swarm robotics. The models are as follows:

\textbf{Fully Synchronous Scheduler (FSYNC)} \begin{itemize}
    \item[$\Diamondblack$] Time is divided into rounds of equal lengths
    \item[$\Diamondblack$] At the beginning of each round all robots are activated.
    \item[$\Diamondblack$] In a particular round all activated robots perform the \textit{LOOK}, \textit{COMPUTE} and \textit{MOVE} phases together.
\end{itemize} 

\textbf{Semi Synchronous Scheduler (SSYNC)} \begin{itemize}
    \item [$\Diamondblack$]Time is divided into rounds of equal lengths
    \item [$\Diamondblack$] At the beginning of each round a subset of robots are activated.
    \item [$\Diamondblack$]In a particular round all activated robots perform the \textit{LOOK}, \textit{COMPUTE} and \textit{MOVE} phases together.
\end{itemize}

\textbf{Asynchronous Scheduler (ASYNC)} \begin{itemize}
    \item [$\Diamondblack$]There is no sense of rounds.
    \item [$\Diamondblack$]A robot can either be idle or in any of the \textit{LOOK}, \textit{COMPUTE}, or \textit{MOVE} phases while some other robots are activated.
\end{itemize}

In this work, we have shown that it is impossible to solve the problem of rendezvous on a known dynamic point if the scheduler is semi-synchronous or asynchronous. Hence considering a fully synchronous scheduler we have provided an algorithm \textsc{Dynamic Rendezvous} that solves the problem within finite rounds.

\subsubsection{Resource Model:} The resource $res$ is a movable entity, initially which is placed arbitrarily on a vertex (except the door) of $G$. The resource moves synchronously along with the robots. let the position of $res$ at round $i$ is denoted as $g_i$ ($g_0$ is the initial location). for some round $i$, $g_i$ and $g_{i+1}$ are at most 1-hop away. The movement of the resource $res$ is controlled by an adversary. So $g_{i+1}$ can be any neighbor of $g_i$. We assume that resource will stay fixed if it meets with at least a robot among $r_1$ and $r_2$. Otherwise, it can not stay fixed on a vertex forever. Let $T_f$ be the upper bound of the number of rounds that $res$ can stay fixed alone on a vertex of $G$. Also, it is assumed that the resource can not cross a robot on an edge without collision. Now if a robot and the resource collides on an edge then, the colliding robot would carry the resource to its destination vertex and then terminates.
\subsection{Notation and Definitions}
For a robot $r$ we denote the resource as $res$ and the other robot as $r'$. Now we have the following definitions. 
\begin{definition}[Door boundary of a robot] 
    If a robot $r$ is located on a boundary of the grid on which the door vertex is also located then that boundary is called the door boundary of the robot $r$ and is denoted as $BD(r)$.
\end{definition}
\begin{definition}[Perpendicular Line of robot $r$]
For a robot $r$ on a boundary, the straight line perpendicular to $BD(r)$ passing through $r$ is called the perpendicular line of robot $r$. It is denoted as $PD(r)$.
\end{definition}
\begin{definition}[Distance from resource along $BD(r)$]
    Distance of the resource $res$ along boundary $BD(r)$ is defined as the hop distance of robot $r$ from the vertex $v$ on $BD(r)$ such that the line joining $v$ and $res$ is perpendicular to $BD(r)$. We denote this distance as $dist(r)$ for a robot $r$ on $BD(r)$.
\end{definition}
\begin{definition}[InitGather Configuartion]
    A configuration $\mathcal{C}$ is called  a \textsc{InitGather Configuartion} if:
    \begin{enumerate}
    \item two robots $r$ and $r'$ are not on same line.
     \item there is a robot $r$ such that $r$ and  the resource $res$ are on a grid line (say $L$).
     \item the perpendicular distance of the other robot $r'$ to the line passing through $res$ and perpendicular to $L$ is at most one.
        
    \end{enumerate}
\end{definition}
In the following Fig.~\ref{Fig:def1} and Fig.~\ref{Fig:def2} we have mentioned the entities we have defined above.
\begin{figure}[h!]
\begin{minipage}[ht]{0.45\linewidth}
\centering
\includegraphics[width=6cm, height=3.5cm]{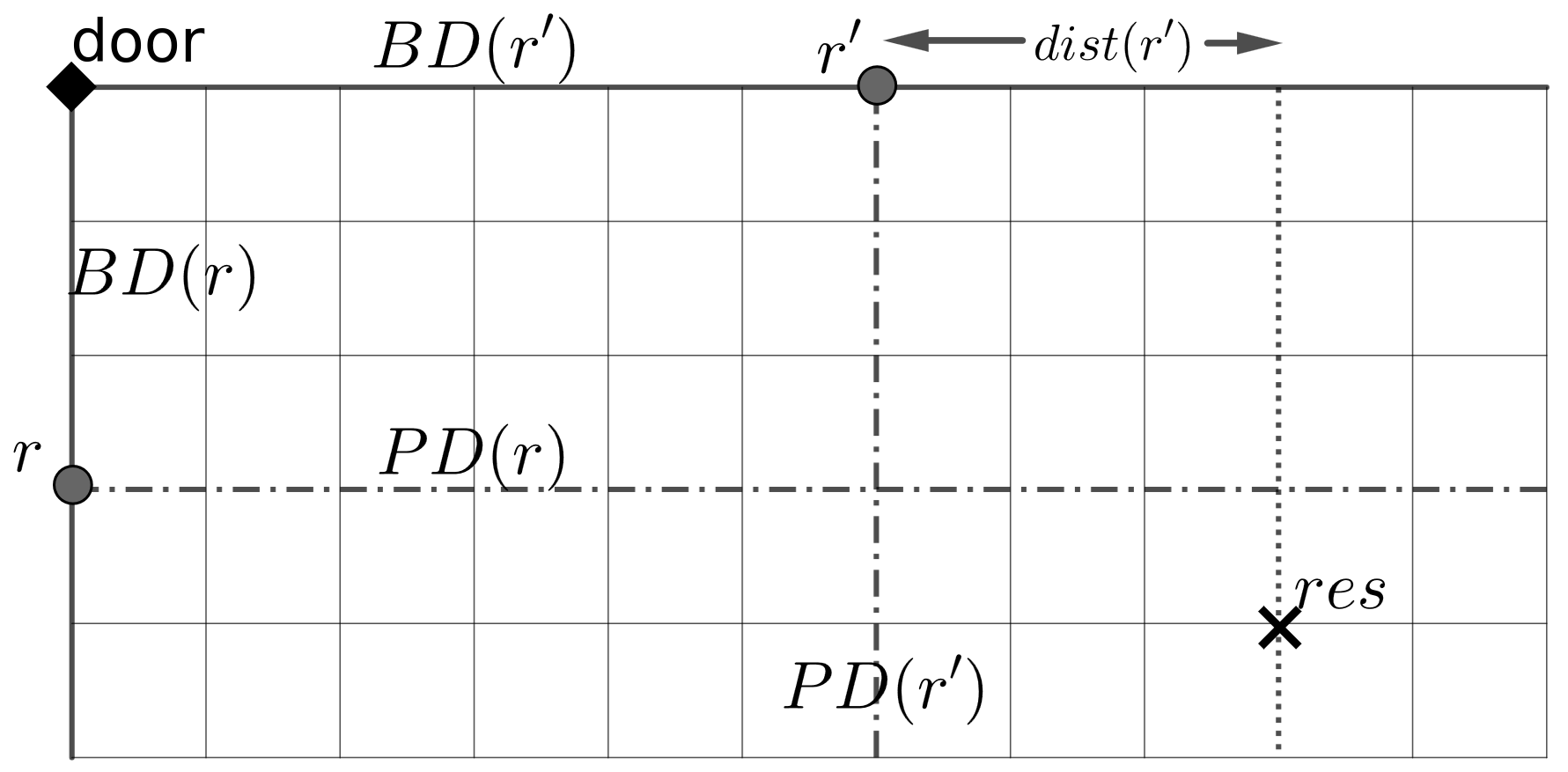}
     \caption{Diagram of a configuration mentioning $BD(r)$, $BD(r')$, $PD(r), PD(r')$ and $dist(r')$.}
     \label{Fig:def1}
\end{minipage}
\hfill
\begin{minipage}[ht]{0.45\linewidth}
\centering
\includegraphics[height=3.5cm, width=6cm]{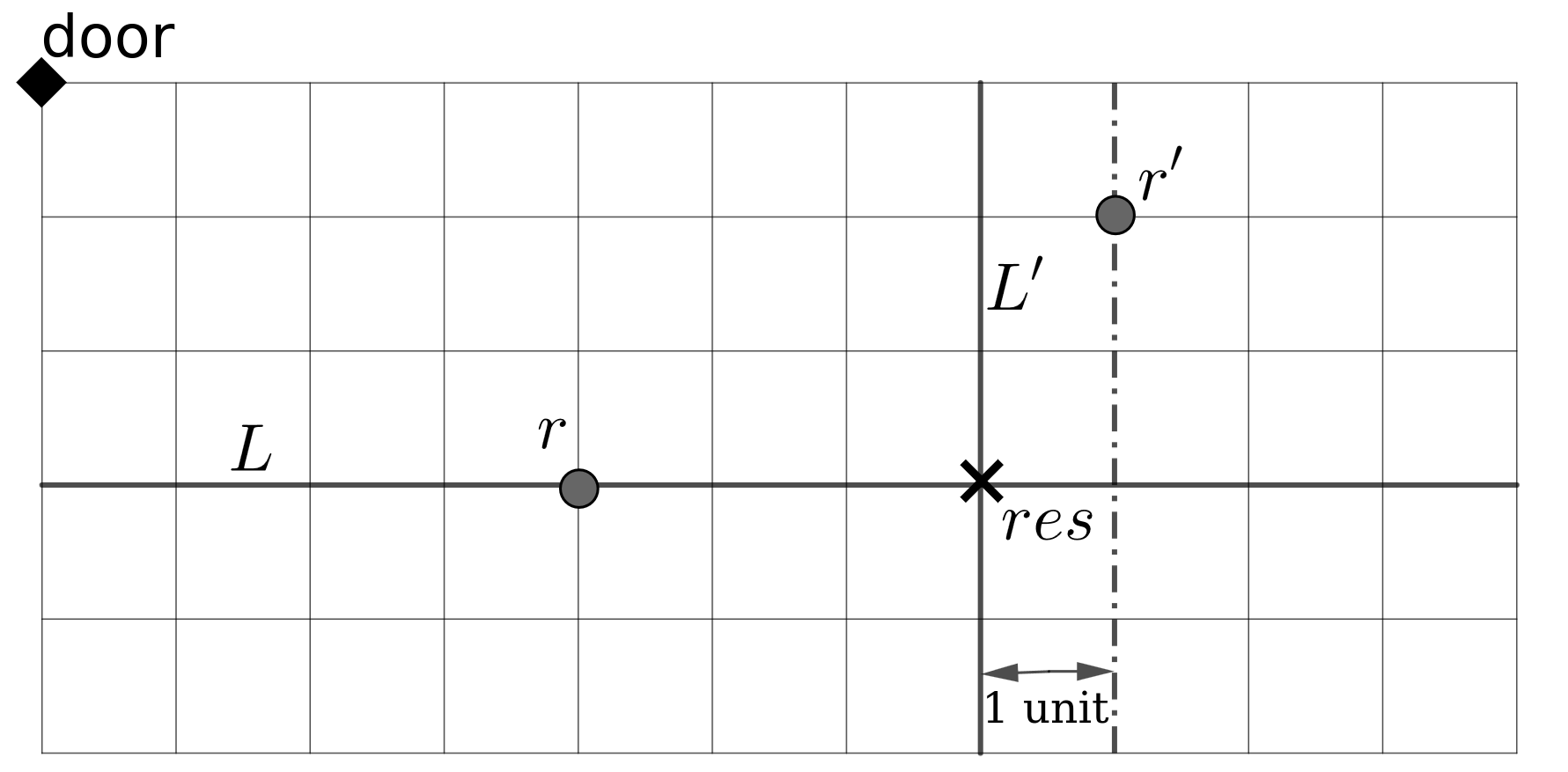}
    \caption{Diagram of an \textsc{InitGather Configuration}}
    \label{Fig:def2}
\end{minipage}
\end{figure}
\section{Lower Bound of Time and Impossibility}
\label{sec:3}
In this section, we will discuss the lower bound of time required to solve the problem of rendezvous on a known dynamic point on a finite grid of dimension $m \times n$. Also, we will prove an impossibility result which will justify our assumption of considering a fully synchronous scheduler to solve this problem. But first, let us define ``\textit{epoch}". An epoch is a time interval within which each robot in the system has been activated at least once. In the case of a fully synchronous scheduler, an epoch is equivalent to a round but for other schedulers, an epoch interval is finite but unpredictable. Now in the following theorem, we will discuss the time lower bound of solving rendezvous at a known dynamic point on a finite grid.
\begin{theorem}
    \label{theorem:timeLowerBound}
    Any algorithm that solves rendezvous at a known dynamic point on a finite grid of dimension $m \times n$ takes $\Omega(m+n)$ epochs in the worst case.
\end{theorem}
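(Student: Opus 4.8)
The plan is to prove the bound by an elementary distance-versus-speed argument: I would have the adversary place the resource as far from the door as the grid allows, and then observe that, since a robot traverses at most one edge per round, it simply cannot cover that distance in fewer than $\Omega(m+n)$ rounds. Because the scheduler is fully synchronous, an epoch coincides with a round, so a round lower bound is immediately an epoch lower bound.

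Concretely, first I would fix the worst-case instance. Let $c$ be the corner vertex diagonally opposite the door vertex; since $m,n\ge 2$, $c$ is distinct from the door, hence a legal initial position, and its hop distance from the door is exactly $(m-1)+(n-1)$. I would set $g_0=c$. The key structural fact is that every robot originates at the door: when a robot $r$ is present on the grid at round $t$, its position lies within hop distance $t$ of the door, because it started at the door vertex and the \emph{MOVE} phase changes its position by at most one edge per round. For rendezvous to be declared both robots must be co-located with the resource, so in particular some robot $r$ must, at some round $t^*$, satisfy $\mathrm{pos}_{t^*}(r)=g_{t^*}$, which forces $\mathrm{dist}(\text{door},g_{t^*})\le t^*$.

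Next I would specify the adversary's movement rule and show it keeps the resource far. The resource stays at $c$ for the maximal $T_f$ consecutive rounds, then steps to a fixed neighbor $u$ of $c$ for a single round, then returns to $c$ and repeats. This respects the constraint that the resource may remain alone on a vertex for at most $T_f$ rounds, since the ``consecutive'' counter resets whenever it leaves and returns. Throughout this oscillation $\mathrm{dist}(\text{door},g_t)\ge (m-1)+(n-1)-1$ for every round $t$, because a corner's neighbors are exactly one hop closer to the door than the corner itself. Combining this with the previous inequality yields $t^*\ge (m-1)+(n-1)-1=\Omega(m+n)$, which is the claimed bound.

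The genuinely delicate points here are bookkeeping rather than conceptual, and I would state them carefully. I must confirm that the oscillation never violates the $T_f$ rule while never letting the resource drift within reach of a robot, and that the ``robots enter one by one'' mechanism gives neither robot a head start on distance; both follow from the single observation that every robot starts at the door and moves at unit speed, so its reachable set at round $t$ is contained in the ball of radius $t$ about the door. The step I expect to need the most care is arguing that the adversary maintains the distance bound simultaneously against \textbf{both} robots: since they share the same origin and the resource stays pinned near the far corner $c$, one oscillation near $c$ keeps it far from both, but this must be phrased so the conclusion is manifestly independent of the (unknown) algorithm the robots execute.
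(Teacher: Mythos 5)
Your proposal is correct and follows essentially the same argument as the paper: the adversary keeps the resource oscillating between two adjacent vertices near the corner diagonally opposite the door (respecting the $T_f$ constraint), and since every robot starts at the door and moves at most one edge per round, any meeting requires $\Omega(m+n)$ rounds, which equals epochs under the fully synchronous scheduler. Your version is slightly more careful about the bookkeeping (the $T_f$ counter reset and the exact distance bound $m+n-3$), but the underlying idea is identical to the paper's proof.
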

\begin{proof}
    Let us consider the scheduler to be a fully synchronous scheduler. Thus an epoch is equivalent to a round.
    Consider the following diagram (Fig.~\ref{Fig:lowerBoundTime}) where after each $T_f$ consecutive rounds, the resource changes its location from either $P$ to $Q$ or $Q$ to $P$. This implies after entering from the door vertex the robots must meet the resource either in vertex $P$ or in vertex $Q$. Now from the door vertex, the shortest path to $P$ or $Q$ is of length $m+n-1$. So to meet at either $P$ or $Q$ with the resource, each robot must travel through a path of length at least $m+n-1$. Now since in a round a robot can only move a path of length one, to travel a path of length $m+n-1$ at least $m+n-1$ round i.e epoch is necessary to solve this problem. Hence the result.
    \qed
\end{proof}

\begin{figure}[ht]
    \centering
     \includegraphics[height=4cm,width=7cm]{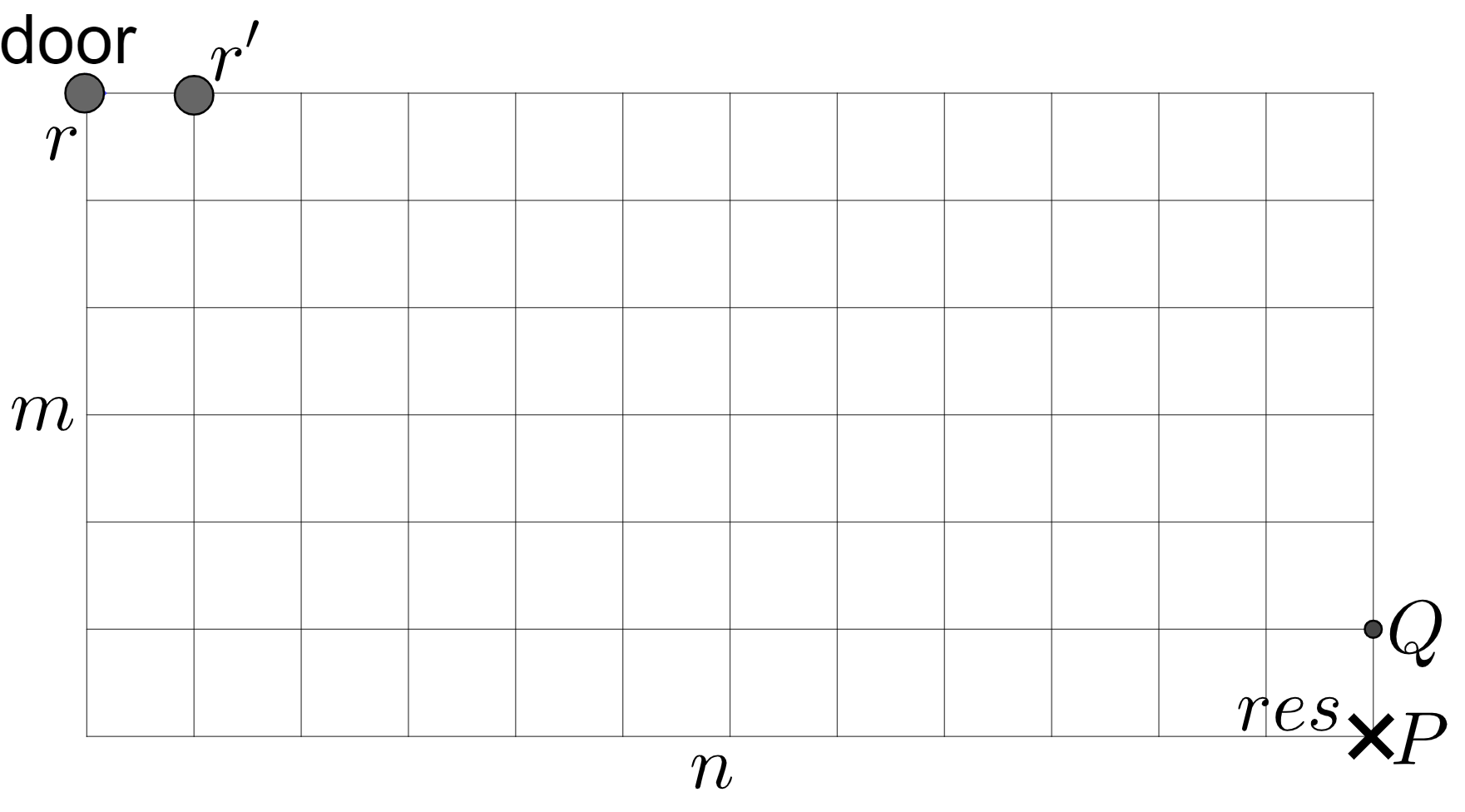}
     \caption{from the door vertex to reach $P$ or $Q$ the robot $r$ needs to travel at least a path of length $m+n-1$.}\label{Fig:lowerBoundTime}
    \end{figure}
Now we will discuss the impossibility result in the next theorem.
\begin{theorem}
    \label{theorem:impossible}
    No algorithm can solve the problem of rendezvous on a known dynamic point on a finite grid of dimension $m \times n$ if the scheduler is semi-synchronous.
\end{theorem}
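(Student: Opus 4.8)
The plan is to argue by contradiction and to exploit the one capability that a semi-synchronous adversary has but a fully synchronous one does not: the freedom to leave an already-present robot \emph{un-activated} for a finite stretch. Using this, I would force the two robots to become, and forever remain, co-located, so that they behave exactly like a single agent; I would then show that a single agent can never capture an adversarially controlled resource on the grid, contradicting the assumption that the algorithm solves the problem. So suppose, for contradiction, that some deterministic algorithm $\mathcal{A}$ solves rendezvous on a known dynamic point under SSYNC for every instance. Since the robots are disoriented, in the worst-case instance I am free to fix their local coordinate systems; I would choose them \emph{identical} (same orientation and chirality).

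Next I would build the co-location. Let $r_1$ enter through the door at round $0$, landing on the door vertex. At round $1$ the SSYNC adversary activates only $r_2$ (leaving $r_1$ idle, which is permitted so long as $r_1$ is activated again later): $r_1$ therefore stays on the door vertex, while $r_2$ enters onto the same door vertex, so the two robots are co-located at the end of round $1$. From round $2$ onward the adversary activates both robots in every round. The key lemma is that co-located robots with identical coordinate systems take identical snapshots (they see the resource at the same place and see themselves as a single multiplicity point), so $\mathcal{A}$ returns the same destination for both; when activated together they traverse the same edge and remain co-located. By induction the two robots coincide in every round thereafter. It is worth stressing where this breaks under FSYNC: there $r_1$ \emph{must} be activated at round $1$ and may leave the door before $r_2$ arrives, so the merge cannot be forced — which is precisely the reason a fully synchronous scheduler is assumed in this work.

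Once merged, the pair acts as a single effective agent $R$ whose move each round is a deterministic function of the current configuration (which now involves only the resource and $R$'s own doubled position). Capturing the resource means $R$ either landing on the resource's vertex or swapping with it across an edge. I would show the adversary can avoid this forever: because $\mathcal{A}$ is deterministic, the resource controller can predict $R$'s move and respond as a ``second mover.'' Starting from distance at least $1$ (the initial position $g_0$ is not the door), I would maintain the invariant that the resource and $R$ are never co-located: on any grid with $m,n\ge 2$ every vertex has at least two neighbours, while $R$'s move forbids at most two of the resource's candidate vertices (its destination, and its current vertex to block an edge-swap), so the resource always retains a safe neighbour keeping distance at least $1$. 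Since the resource moves every round, the bound $T_f$ (which only forces it to \emph{leave} a vertex) never triggers a capture.

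This yields the contradiction: the resource evades indefinitely, so although $r_1$ and $r_2$ are permanently together, they are never together \emph{on the resource vertex}, and $\mathcal{A}$ fails on this instance. I expect the main obstacle to be the evasion sub-lemma — stating and verifying the invariant precisely at corners and along the boundary, against simultaneous moves and against edge-collision capture, so that a safe neighbour always exists. A secondary technical point to handle carefully is the maintenance of co-location despite the robots owning independent local coordinate systems, which I dispose of by selecting identical coordinate systems in the worst-case instance and by checking that the entering model indeed lets a just-entered robot rest on the door for one round.
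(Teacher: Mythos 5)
Your evasion lemma is essentially sound: against a single, deterministic (hence predictable) pursuer the resource always has a safe move, since among its candidate actions (stay, or move to one of its at least two neighbours) at most two are forbidden --- the pursuer's destination, and the swap along the pursuer's edge, which only arises when the pursuer steps onto the resource's current vertex --- so a safe \emph{neighbour} always exists, and moving every round also discharges the $T_f$ constraint. The genuine gap is the step you yourself flag as a ``secondary technical point'': the merge. Your entire proof funnels through the claim that the SSYNC adversary can make $r_2$ enter the grid onto the door vertex while the un-activated $r_1$ is still standing on it. The paper's model does not support this: robots enter ``one by one'' through the door, entering is not an LCM action of a robot already placed on the grid (so ``activating $r_2$'' before it has entered is not a defined adversarial move), and the paper's own handling of entry (the \textsc{Entry Phase}) has the first robot vacate the door before the second one appears. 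Nothing in the model allows the scheduler to create a multiplicity: for two robots on distinct vertices the \emph{algorithm}, not the adversary, chooses the destinations, so co-location can never be forced. And if the merge fails, your argument has no fallback, because the single-pursuer evasion lemma genuinely does not extend to two separated robots --- two separated robots \emph{can} corner the resource (that is exactly what the paper's FSYNC algorithm does) --- so the intended contradiction evaporates.

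The paper's proof attacks precisely the case your construction sidesteps. It considers the first round $t$ after which some robot reaches the resource, with the adversary activating only one robot per round, and shows: (i) at the beginning of round $t$ both robots must already be adjacent to the resource and the resource must sit at a corner, since otherwise the resource always has an empty neighbour that remains empty after the single activated robot moves; and (ii) this cornered configuration $\mathcal{C}_{corner}$ can never be formed, because in any configuration one round earlier only one robot is adjacent to the resource, and upon activating exactly that robot the resource can escape to a non-corner vertex that stays empty. To repair your proposal you would either need a correct forcing of co-location within the stated entry model, or an argument of this backward, capture-configuration kind, which makes no assumption about how or when the robots entered.
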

\begin{proof}
    Let there is an algorithm $\mathcal{A}$ such that after finite execution of which two robots on a finite grid of dimension $m \times n$ meet at the location of the dynamic resource. Let $m, n >2$. Also, let $t$ be the round such that after completion of which at least one robot reaches the location of the resource and terminates.

    Let no robot is adjacent to the resource at the beginning of round $t$. This implies at the beginning of round $t$, the resource has at least two empty neighbor vertices. Now let the adversary activates only one robot during this round. Thus, even if the activated robot moves to one of the resource's empty adjacent vertex, another empty vertex remains empty. So even if the resource has to move during round $t$ it can always find an empty vertex to move that remains empty after the completion of the round. Hence after completion of round $t$, no robot can move to the location of the resource. Thus we reach a contradiction. Now, let exactly one robot is adjacent to the resource $res$ at the beginning of round $t$. Then, at least $res$ has one empty vertex which is not reachable by the adjacent robot in one round. So, if the adversary activates only the adjacent robot, say $r$, and $res$ moves to the empty vertex not reachable by $r$ then again we reach a contradiction.  Hence both the robots must be adjacent to the resource at the beginning of round $t$. Now if the resource is not at the corner and both the robots are adjacent to the resource at the beginning of round $t$ then, the resource must have an empty adjacent vertex that is not reachable by the robots in one round. Thus if the resource moves to that vertex during round $t$, we again reach a contradiction.

    \begin{figure}[h!]
\label{Fig:Impo}
\begin{minipage}[ht]{0.5\linewidth}
\centering
\includegraphics[width=5cm, height=3cm]{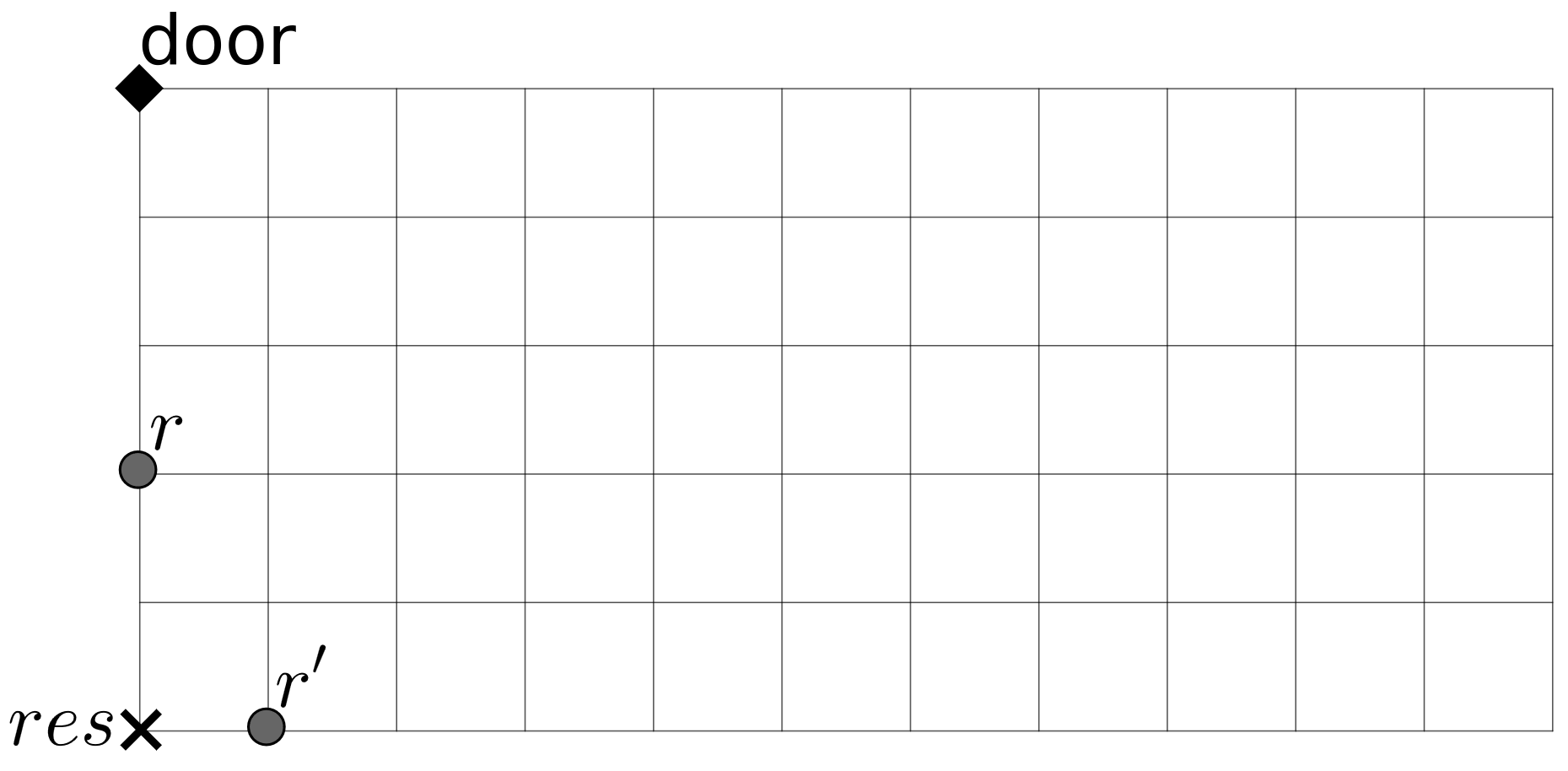}
     \captionsetup{labelformat=empty}
     \caption{ Configuration $\mathcal{C}_1$}
     \label{Fig:c1}
\end{minipage}
\hfill
\begin{minipage}[ht]{0.48\linewidth}
\centering
\includegraphics[height=3cm, width=5cm]{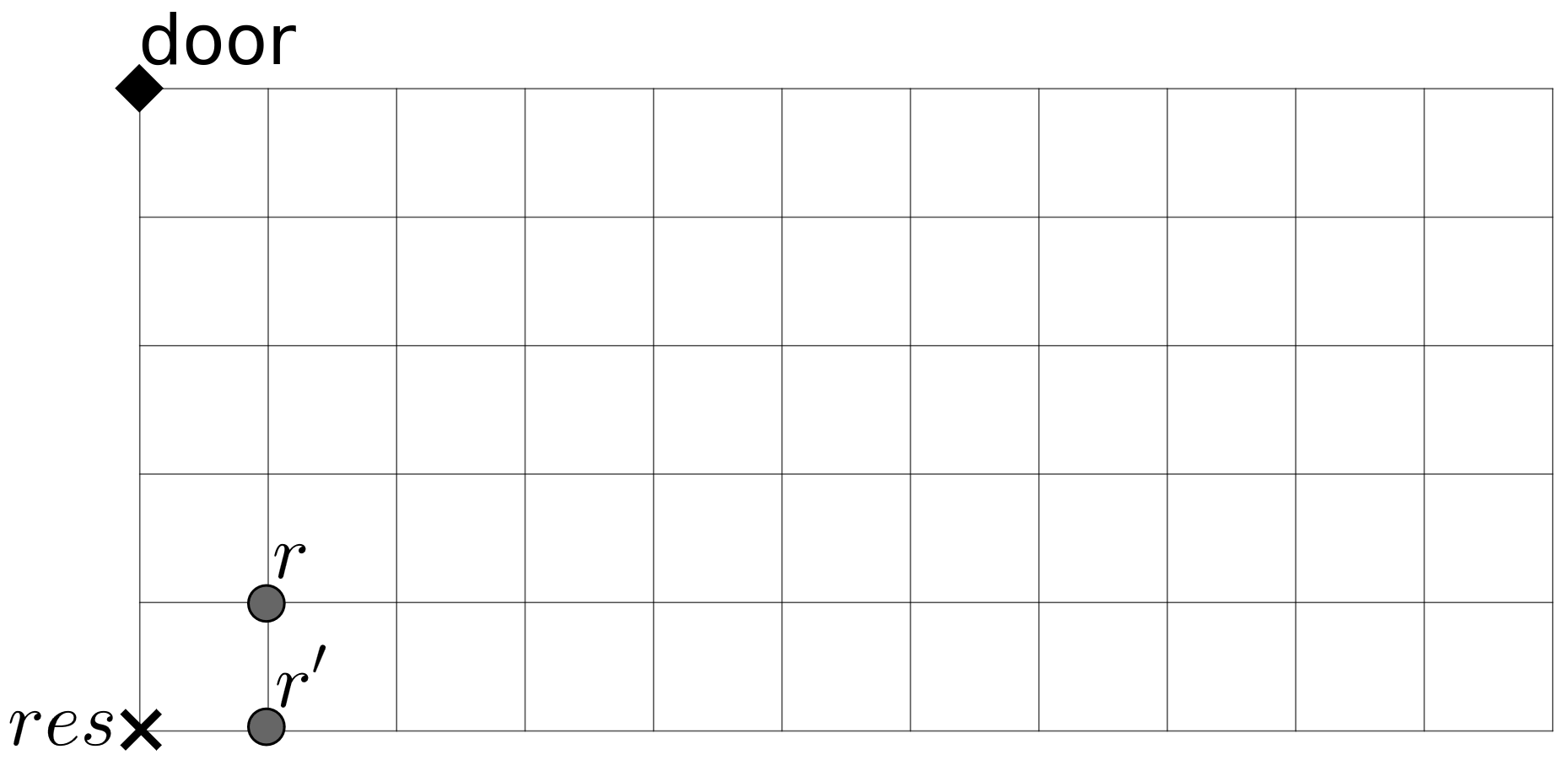}
    \captionsetup{labelformat=empty}
    \caption{Configuration $\mathcal{C}_2$}
    \label{Fig:c2}
\end{minipage}
\begin{minipage}[ht]{0.5\linewidth}
\centering
\includegraphics[width=5cm, height=3cm]{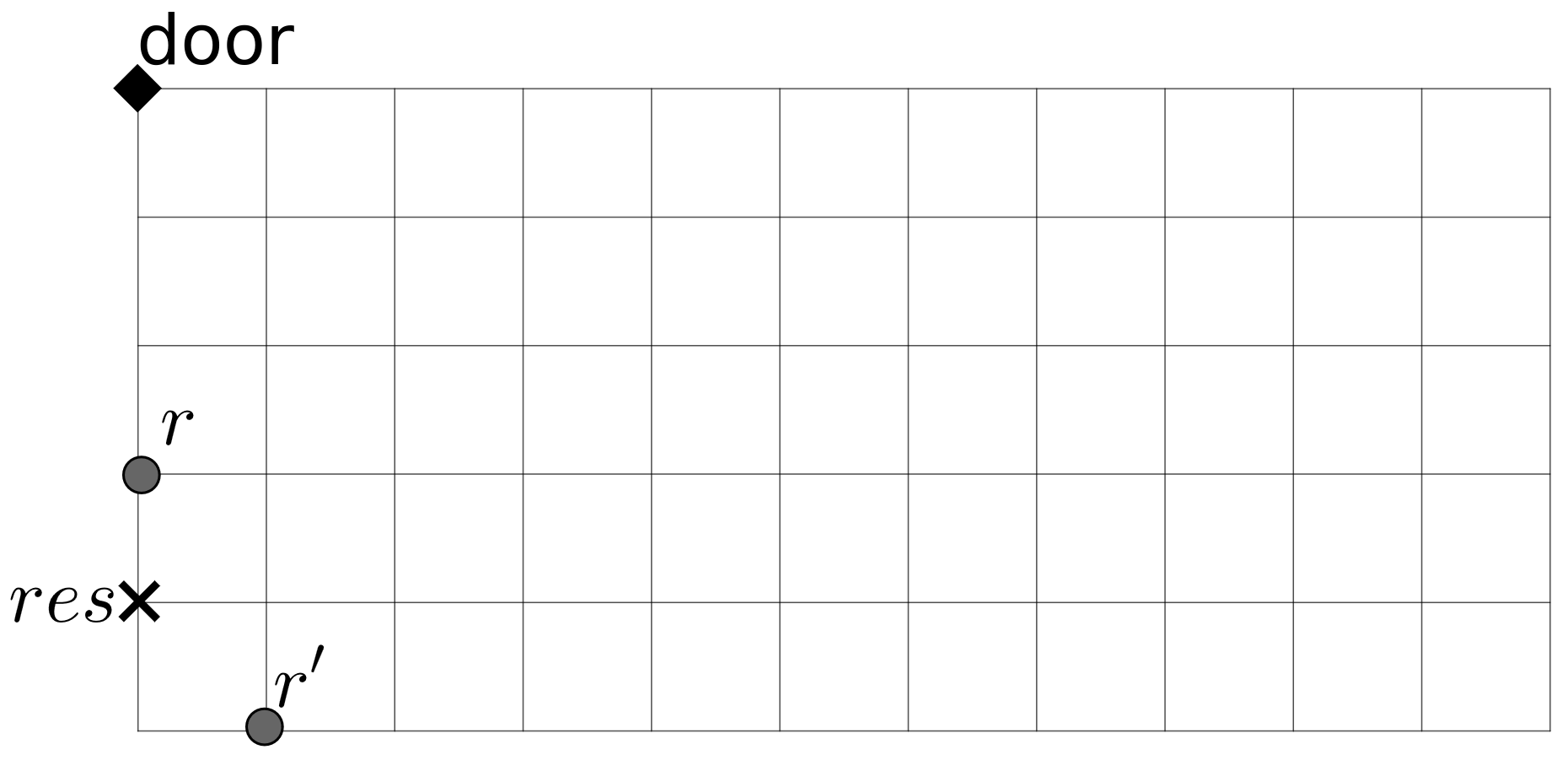}
     \captionsetup{labelformat=empty}
     \caption{ Configuration $\mathcal{C}_3$}
     \label{Fig:c3}
\end{minipage}
\hfill
\begin{minipage}[ht]{0.5\linewidth}
\centering
\includegraphics[height=3cm, width=5cm]{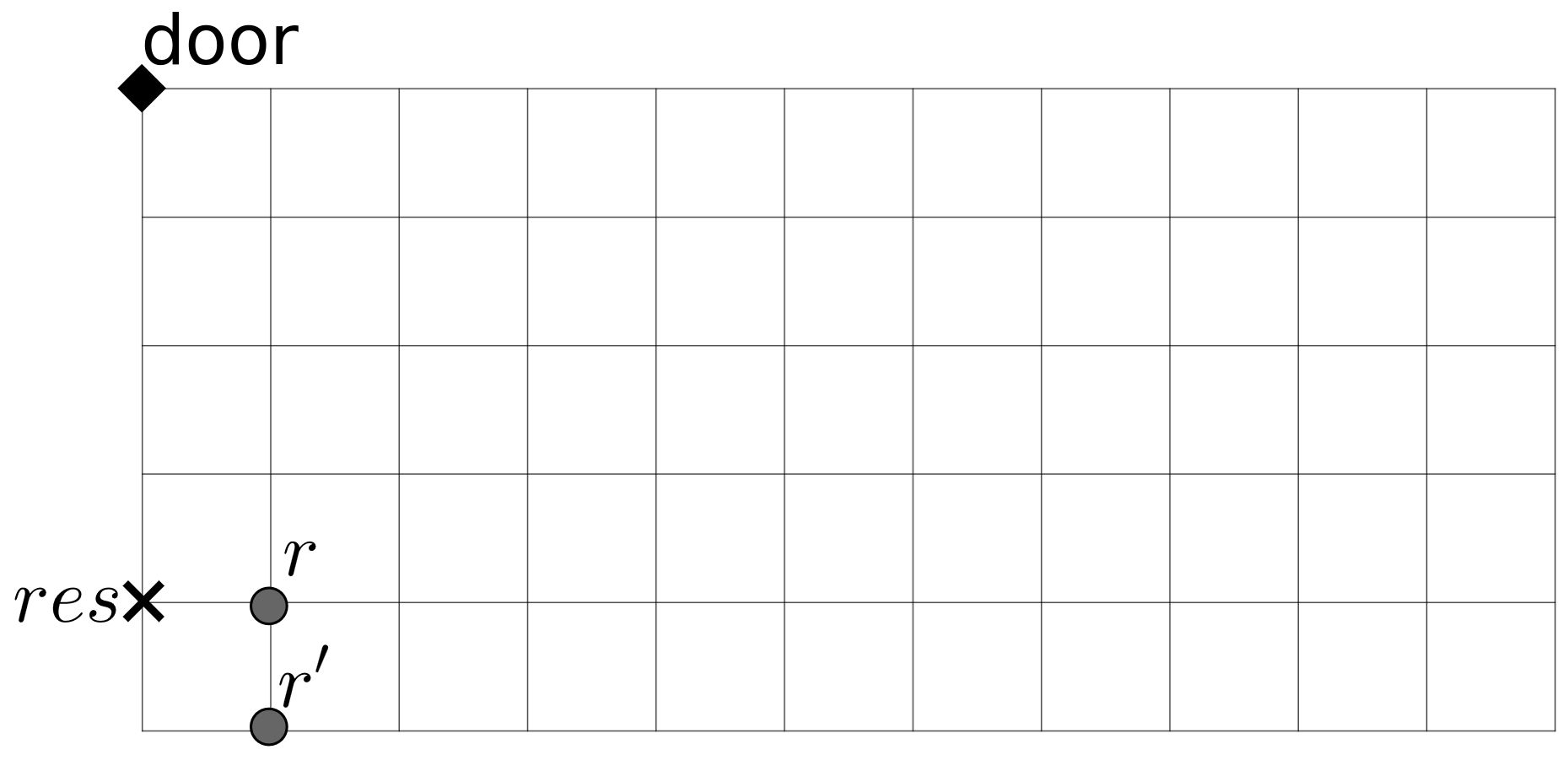}
    \captionsetup{labelformat=empty}
    \caption{Configuration $\mathcal{C}_4$}
    \label{Fig:c4}
\end{minipage}
\setcounter{figure}{3}
\caption{Some examples of configuration $\mathcal{C}_{corner-1}$. }
\end{figure}
    Now if we can prove that the configuration (say, $\mathcal{C}_{corner}$) where the resource is at a corner and both the robots are adjacent to it, is never formed then we are done. Let the adversary always activates only one robot in a particular round. Now, if possible let at the beginning of round $t$, the configuration is $\mathcal{C}_{corner}$. This implies the configuration, say, $\mathcal{C}_{corner-1}$, that was formed just before   $\mathcal{C}_{corner}$ must be one of $\mathcal{C}_1$, $\mathcal{C}_2$, $\mathcal{C}_3$ or $\mathcal{C}_4$ (Fig.4). Since adversary is compelled to activate only one robot in a particular round, so in configuration $\mathcal{C}_{corner-1}$ one the robot must be adjacent to the corner vertex. Without loss of generality let $r'$ be that robot. Note that, in $\mathcal{C}_1$ and $\mathcal{C}_2$ $res$ did not move to form $\mathcal{C}_{corner}$ and, in $\mathcal{C}_3$ and $\mathcal{C}_4$ $res$ had to move to form $\mathcal{C}_{corner}$. 

    Note that in all of these configurations, there is only one robot that is adjacent to the resource. If the adversary activates the adjacent robot then, in all of these configurations the resource can find an empty adjacent vertex that is not a corner and remains empty even after the move of the resource. Thus from any of the four configurations $\mathcal{C}_1$, $\mathcal{C}_2$, $\mathcal{C}_3$ and $\mathcal{C}_4$, $\mathcal{C}_{corner}$ is not formed. Hence we arrive at a contradiction. Thus $\mathcal{C}_{corner}$ will never be formed and hence the result.\qed 
\end{proof}

This justifies the necessity of a fully synchronous scheduler to solve this problem. In the next section assuming a fully synchronous scheduler, we have provided an algorithm that solves this problem of rendezvous on a known dynamic point on a finite grid.
\section{Algorithm}
\label{sec:4} 
It is quite obvious to observe that without the help of the other robot, a robot can not independently reach the location of the resource if the resource is controlled by an adversary. So to solve this problem the two robots must work together collaboratively and push the resource toward a corner. This is the main idea that is used to develop the proposed algorithm. Now since there is no agreement on the coordinates of the robots and the robots are oblivious, the main challenge here is to agree on the direction for the robots to move.

The rendezvous algorithm \textsc{Dynamic Rendezvous}, proposed in this section is executed in three phases. \textsc{Entry Phase}, \textsc{Boundary Phase} and \textsc{Gather Phase}. In the \textsc{Entry Phase}, the robots move in the grid one by one through the door vertex. This phase ends when the robots are located on the two adjacent vertices of the door vertex. Then in \textsc{Boundary Phase} the robots move along their corresponding boundary to form a special kind of configuration called \textsc{InitGather Configuration} (Definition.~\ref{def:def2}). Then in \textsc{Gather Phase} the robots move maintaining the \textsc{InitGather Configuration} and pushing the resource to a corner. 

In the first two phases, the agreement on the direction of movement for the robots is constructed from the fact that the robots know the location of the resource and identify vertices on the boundaries and corners of the grid. In these two phases, it is ensured that the robots are on the two boundaries of the grid of which the door is a part and always remain on their corresponding boundaries. 
In the \textsc{Gather Phase} though,  the robots move inside the grid leaving its boundary. In this situation as the robots are not on boundaries, they can not decide on a specific boundary for agreement. In this scenario, the agreement on the direction comes from the fact that at least one robot must be on a line along with the resource during each round of this phase.  After this brief overview of the algorithm let us describe it in detail. The algorithm \textsc{Dynamic Rendezvous} is as follows.

\begin{algorithm}[H]
\caption{Dynamic Rendezvous}\label{algo:main}
\textbf{Input:} A configuration $\mathcal{C}$.

\textbf{Output:} A destination point of robot $r$.

\eIf{a robot, say $r$, is at a corner $\land$ no robot terminates $\land$ (there is no other robot on the grid $\lor$ another robot is adjacent to $r$ )}
{
    Execute \textsc{Entry Phase}\;
}
{
    \eIf{$\mathcal{C}$ is \textsc{InitGather Configuration}}
    {
        Execute \textsc{Gather Phase}\;
    }
    {
        Execute \textsc{Boundary Phase}\;
    }
}
 
\end{algorithm}

The three phases are described in more detail in the following subsections.

\subsection{\textsc{Entry Phase}}
The first phase is called the \textsc{Entry Phase} (Algorithm~\ref{algo:EntryPhase}). During this phase, both the robots enter through the door vertex one by one into the grid $G$.

 \begin{algorithm}[H]
\caption{Entry Phase for robot $r$}\label{algo:EntryPhase}
\eIf{$r$ is on door vertex}
{
    \eIf{no other robot on boundary}
    {
        move through any edge on the boundary\;
    }
    {
        move through the edge on the boundary where there is no other robot\;
    }
}
{
    \If{ the other robot $r'$ is at a corner}
    {
        does not move\;
    }
}
 
\end{algorithm}
 A robot on the door vertex first checks if it can see another robot already on the grid. If it does not find any other robot on the grid, it moves to any of the two adjacent vertices of the door vertex. On the other hand, if there is already a robot on an adjacent vertex of the door vertex then, the robot on the door vertex moves to the other adjacent vertex of the door vertex. Note that, a robot on one adjacent vertex of the door vertex does not move and does not start executing any other phases if it sees another robot in the corner adjacent to it.

The \textsc{Entry Phase} ends when both robots are at the two distinct adjacent vertices of the door vertex in the corner. After the \textsc{Entry Phase} the robots will check if the configuration is an \textsc{InitGather Configuartion} or not. If the configuration is not an \textsc{InitGather Configuartion} then the robots execute the \textsc{Boundary Phase}, otherwise, they execute the \textsc{Gather Phase}.

\subsection{\textsc{Boundary Phase}}
The \textsc{Boundary Phase} starts after the end of the \textsc{Entry Phase} when the configuration is not an \textsc{InitGather Configuration} and no robots are at corners. In the initial configuration of \textsc{Boundary Phase}, both the robots are at the boundary and on the two distinct adjacent vertices of the door vertex. It is ensured in the algorithm of this phase (Algorithm~\ref{algo:BdryPhase}) that no robot moves to corner during this phase. So, from this phase a robot can never initiate the \textsc{Entry Phase} again and also the corresponding boundary of a robot stays the same during this phase. \textsc{Boundary Phase} only terminates when the configuration becomes an \textsc{InitGather Configuration} or both the robots reach the location of the resource.

In this phase, a non terminated robot say $r$ first checks if it can see any other robot on the grid. If it does not see any other robot on the grid that implies another robot, say $r'$, has already reached the location of $res$. In this case, $r$ simply moves towards the location of the resource $r$ avoiding any corner vertices. When a robot reaches the location of the resource it terminates.
  On the other hand, if $r$ sees another robot $r'$ on the grid then, $r$ first finds out the adjacent vertex on its corresponding boundary which is nearest to the resource. Note that, there can be two such vertices only if $r$ and the resource are on the same line $PD(r)$ but in this case $r$ does not move. Now if $v$ is unique then $r$ calculates the distance from the resource $res$ for the robots $r$ and $r'$ along $BD(r)$ and $BD(r')$ respectively. If both are non zero then, $r$ finds out if $v$ is a corner or not. If $v$ is not a corner and the other robot $r'$ is  not adjacent to some corner on its corresponding boundary, then, $r$ moves to $v$. Otherwise, if $r'$ is adjacent to a corner on its boundary then, $r$ moves to $v$ only if distance of the resource along $BD(r)$ is not equal to one. This technique is required to avoid a livelock scenario. There can be another configuration where distance of the other robot $r'$ along $BD(r')$ is zero but distance of $r$ along $BD(r)$ is strictly greater than one. In this case the robot $r$ simply moves to the vertex $v$. Note that in this case $v$ can not be a corner vertex as $dist(r) >1$. So, during this phase no robot ever moves to a corner and this phase terminates only when both the robot reaches the location of resource or an \textsc{InitGather Configuration} is achieved. Now, we have to ensure that the \textsc{Boundary Phase} terminates within finite rounds. But, before that we need to define a quadrant and proof some results which will be needed to proof the termination of \textsc{Boundary Phase} within $O(T_f \times \max\{m,n\})$ rounds. 

\begin{definition}[Quadrant]
    The grid $G$ is divided into four segments by the two lines $PD(r)$ and $PD(r')$. Each of these segments are called a quadrant. 
\end{definition}

\begin{figure}[ht!]
    \centering
    \includegraphics[width=.6\textwidth]{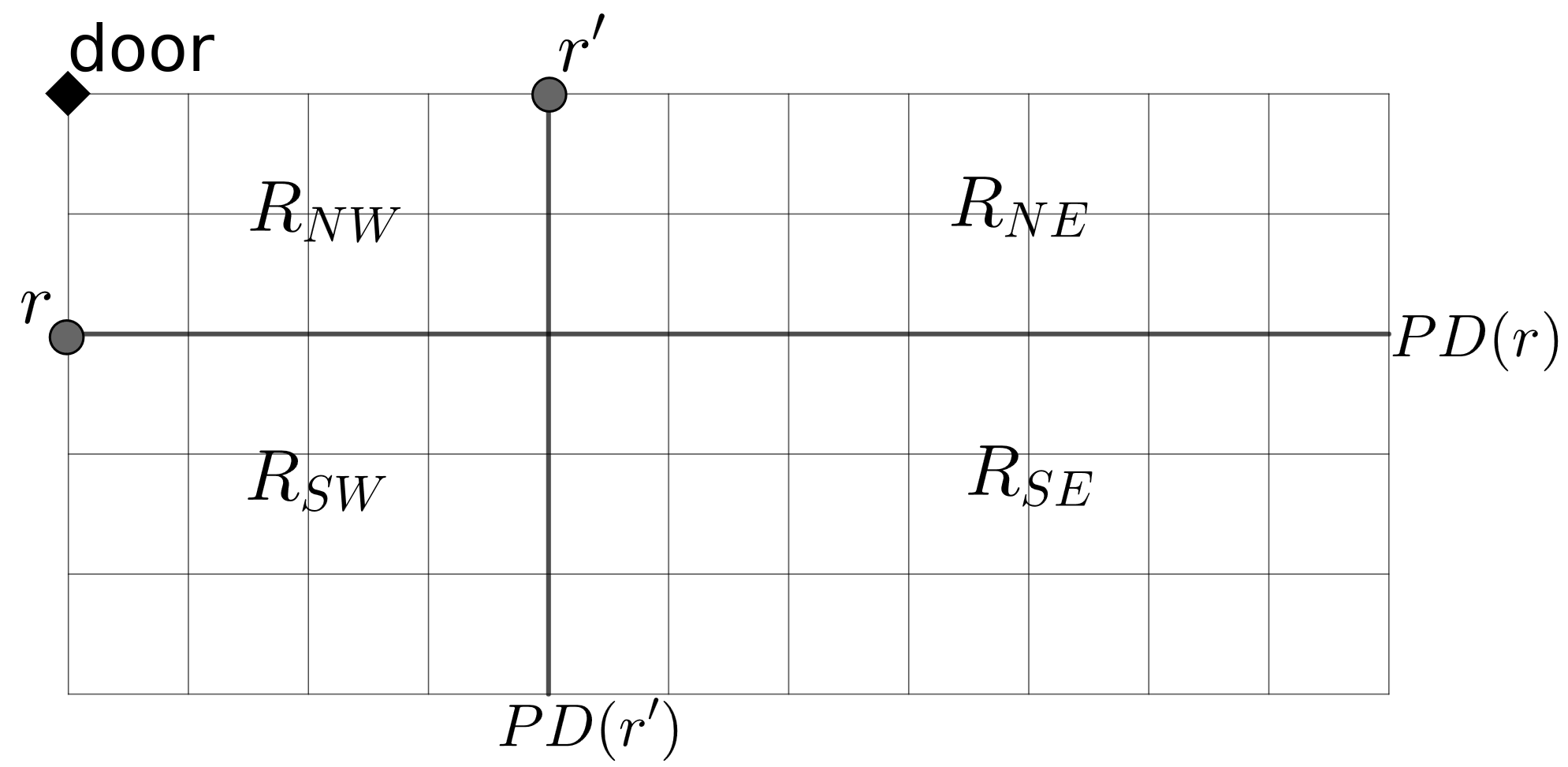}
    \caption{Four quadrants divided by $PD(r)$ and $PD(r')$}
    \label{fig:qudrants}
\end{figure}

 The quadrants on the northeast, northwest, southeast, and southwest are denoted as $R_{NE}, R_{NW}, R_{SE}$ and $R_{SW}$ respectively (Fig.~\ref{fig:qudrants}). Note that the intersection of any two quadrant is a section of either $PD(r)$ or $PD(r')$.
At the beginning of the \textsc{Boundary Phase}, quadrant $R_{NW}$ is a $2 \times 2$ grid, $R_{SW}$ is a $(m-1) \times 2$ grid, $R_{NE}$ is a $2 \times (n-1)$ grid and $R_{SE}$ is a $(m-1) \times (n-1)$ grid. At the beginning of \textsc{Boundary Phase}, the resource $res$ must be either inside or on one of $R_{NE}, R_{SW}$ and $R_{SE}$.    

\begin{algorithm}[H]
\caption{Boundary Phase for robot $r$}\label{algo:BdryPhase}
\eIf{on the same vertex with $res$}
{
    terminate\;
}
{
    \eIf{$r'$ is on the same vertex with $res$}
    {
        move to $res$ along any shortest path avoiding corner in between\;
    }
    {
        $v \leftarrow$ adjacent vertex on $BD(r)$ which is near $res$ along the boundary\;
      \uIf{ $dist(r) \ne 0$ and $dist(r') \ne 0$}
        {
            
            \If{$v$ is not corner}
            {   
                \eIf{$r'$ is not adjacent to a corner}
                {
                     move to $v$\;
                }
                {
                    \If{$dist(r) \ne 1$}
                    {
                        move to $v$\;
                    }
                }
                
            }
        }
        
       \ElseIf{$dist(r') =0$ and $dist(r) >1$}
        {
            Move to $v$\;
        }
          
    }
     
}

\end{algorithm}
\begin{lemma}
    \label{lemma:crossesPDR}
    If \textsc{Boundary Phase} never terminates then, within $O(T_f \times \max\{m, n\})$ rounds, the resource must crosses or moves on to any one of $PD(r)$ or $PD(r')$ at least once.
\end{lemma}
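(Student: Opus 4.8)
The plan is to argue in coordinates. Place the door at the origin, so that (after the \textsc{Entry Phase}) robot $r$ sits at column $c_r$ of the horizontal door-boundary (which has $n$ vertices) and $r'$ at row $c_{r'}$ of the vertical one (which has $m$ vertices), while $res$ is at $(x,y)$. Then $dist(r)=|x-c_r|$ and $dist(r')=|y-c_{r'}|$, and the event ``$res$ crosses or moves onto $PD(r)$'' is exactly the event that the signed gap $x-c_r$ either vanishes or flips sign between two consecutive rounds (similarly for $PD(r')$). Suppose, toward the bound, that no such event occurs during the first $K$ rounds. Then both signed gaps keep a fixed sign and never vanish, so $res$ stays inside one (moving) quadrant; by the symmetry of the four quadrants I would treat the large case $R_{SE}$, i.e.\ $x>c_r$ and $y>c_{r'}$ throughout, and remark that $R_{NE}$ and $R_{SW}$ follow by relabelling directions (and are reached faster, being only two vertices wide). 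In this regime the \textsc{Boundary Phase} rule makes each robot, whenever it moves at all, step one vertex toward the resource's projection, so $c_r,c_{r'}$ are non-decreasing and bounded by $n-2$ and $m-2$ (robots never enter a corner).

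Next I would classify each round as \emph{advancing} (some robot moves) or \emph{stuck} (neither moves). Since $c_r+c_{r'}$ strictly increases on every advancing round and is bounded, there are at most $m+n-O(1)$ advancing rounds in all; the work is to bound the stuck rounds. Reading off Algorithm~\ref{algo:BdryPhase}, in the generic regime $dist(r),dist(r')\ge 1$ robot $r$ fails to move only when (i) its target $v$ is a corner, which forces $c_r=n-2$ and $x=n-1$, or (ii) the livelock exception fires, namely $dist(r)=1$ together with $r'$ being adjacent to its far corner ($c_{r'}=m-2$). I would intersect conditions (i)/(ii) for $r$ with the mirror conditions for $r'$ to enumerate every configuration in which both robots are simultaneously stuck; a short case check shows these all pin $res$ either at the far corner $(n-1,m-1)$ or one hop away from it, with $c_r=n-2$ and/or $c_{r'}=m-2$.

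The role of $T_f$ enters here. Throughout a maximal stuck period the robots are frozen and $res$ is alone, so by definition it cannot stay fixed for more than $T_f$ consecutive rounds. I would then check that once $res$ is forced to move from such a configuration, exactly one of three things happens: it steps onto a $PD$ line (the desired event); it steps toward the far corner; or it unsticks a robot and the period ends in an advance. Grid-boundedness caps the ``toward the corner'' option at a single step, after which $res$ occupies $(n-1,m-1)$, whose only two neighbours both lie on $PD(r)$ or $PD(r')$, so its next forced move is necessarily a crossing. Hence every stuck period lasts at most $2T_f+O(1)$ rounds before either terminating in an advance or producing a crossing.

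Finally, since stuck periods alternate with advancing rounds, their number is at most one more than the number of advances, i.e.\ $O(m+n)$, and each has length $O(T_f)$; adding the $O(m+n)$ advancing rounds bounds $K$ by $O\big(T_f(m+n)\big)=O(T_f\max\{m,n\})$, contradicting the assumption of no crossing beyond that horizon and giving the claim. I expect the delicate core to be precisely the enumeration of the simultaneously-stuck configurations together with the verification that the livelock exception (ii) really stops the adversary from freezing both robots \emph{away} from the corner: the whole bound rests on showing that from each such configuration every forced move of $res$ is either a crossing or an unsticking, so that no stuck period can outlast $O(T_f)$ rounds.
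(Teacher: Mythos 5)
Your proposal follows, in substance, the same route as the paper's own proof: under the no-crossing hypothesis you confine $res$ to a single quadrant (the paper's three cases $R_{SE}$, $R_{NE}$, $R_{SW}$, with $R_{SE}$ treated in detail and the others dispatched by symmetry), observe that every move of a robot strictly advances it along its boundary toward the resource's projection, charge each stretch in which both robots are frozen to the resource's $T_f$-bounded waiting time, and force the endgame at the corner diagonally opposite the door. The differences are bookkeeping rather than strategy: the paper tracks the shrinking dimensions of $R_{SE}$ in stages (both sides shrink until the quadrant is a width-two strip, then one side shrinks once per $T_f+1$ rounds, ending in a $2\times 2$ block where any move of $res$ creates an \textsc{InitGather Configuration}, contradicting non-termination), whereas you run an amortized count (at most $m+n-O(1)$ advancing rounds via the potential $c_r+c_{r'}$, stuck periods interleaved with advances, each stuck period of length $O(T_f)$) and close the argument by noting that both neighbours of the far corner lie on $PD(r)\cup PD(r')$, so the forced move there is itself the desired crossing event rather than an appeal to \textsc{InitGather Configuration}. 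Both yield $O(T_f\times\max\{m,n\})$.

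There is, however, a concrete error in what you yourself flag as the delicate core. Algorithm~\ref{algo:BdryPhase} suppresses $r$'s move when $dist(r)=1$ and $r'$ is adjacent to \emph{a} corner --- not specifically its far corner. Since the door vertex is itself a corner and both robots begin the \textsc{Boundary Phase} adjacent to it, your livelock condition (ii) also fires with $c_{r'}=1$, and consequently your claim that every simultaneously-stuck configuration pins $res$ at or one hop from $(n-1,m-1)$ with $c_r=n-2$ and/or $c_{r'}=m-2$ is false. For instance, the initial configuration with $res$ at $(2,2)$ (so $dist(r)=dist(r')=1$, both robots adjacent to the door corner) is both-stuck with $res$ nowhere near the far corner; likewise combining your condition (i) for $r$ (so $c_r=n-2$, $x=n-1$) with condition (ii) for $r'$ leaves $c_{r'}$, hence the row of $res$, completely arbitrary, pinning $res$ only to the far column. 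The damage is repairable: in every both-stuck configuration each legal non-crossing move of $res$ raises some $dist$ from $1$ to $2$ and thereby unsticks that robot (except at the far corner, where no non-crossing move exists), so your $O(T_f)$ bound per stuck period, and with it the whole count, survives once the enumeration is redone honestly --- but the ``short case check'' as you state it does not hold. (You are in good company: the paper's own assertion that both dimensions of $R_{SE}$ shrink every round while both exceed two silently overlooks exactly this door-corner stuck case, and is rescued by the same $T_f$ argument.)
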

\begin{proof}
    If \textsc{Boundary Phase} never terminates then, no robot ever reaches the resource $res$ and \textsc{InitGather Configuration} is never formed during execution of the \textsc{Boundary Phase}. If possible,  let our claim is false, i.e., the resource $res$ never moves onto and never crosses $PD(r)$ and $PD(r')$ during the execution of \textsc{Boundary Phase}. So $res$ can never be on $PD(r)$ or on $PD(r')$ at the beginning of \textsc{Boundary Phase}. Now there are three cases depending on the location of $res$ at the beginning of the \textsc{Boundary Phase}. The cases are as following:

At the beginning of \textsc{Boundary Phase} the $res$ is on some vertex of,
\begin{itemize}
    \item [I.]$R_{SE} \setminus \{PD(r) \cup PD(r')\}$
    \item [II.]$R_{NE} \setminus \{PD(r) \cup PD(r')\}$
    \item [III.]$R_{SW} \setminus \{PD(r) \cup PD(r')\}$
\end{itemize}

\textit{Case I:} Let $res$ is on some vertex of  $R_{SE} \setminus \{PD(r) \cup PD(r')\}$ at the beginning of the \textsc{Boundary Phase}. According to our assumption $res$ must stay on some vertex of $R_{SE}~\setminus~\{PD(r)~\cup~ PD(r')~\}$ for infinitely many consecutive rounds.

Let at the beginning of some round $t$, the dimension of $R_{SE}$ is $m' \times n'$, where $m',n'~>~2$. then according to algorithm~\ref{algo:BdryPhase}, the dimension of $R_{SE}$ decreases to $(m'-1)\times(n'-1)$ (Fig.~\ref{Fig:L1C11}). So, within $\min\{m-3, n-3\}$ rounds dimension of $R_{SE}$ becomes either $2\times n_1$ or, $m_1 \times 2$, where $ 2\le m_1 < m$ and $2 \le n_1 <n$.

\begin{figure}[h!]
\begin{minipage}[ht]{0.45\linewidth}
\centering
\includegraphics[width=6cm, height=3.5cm]{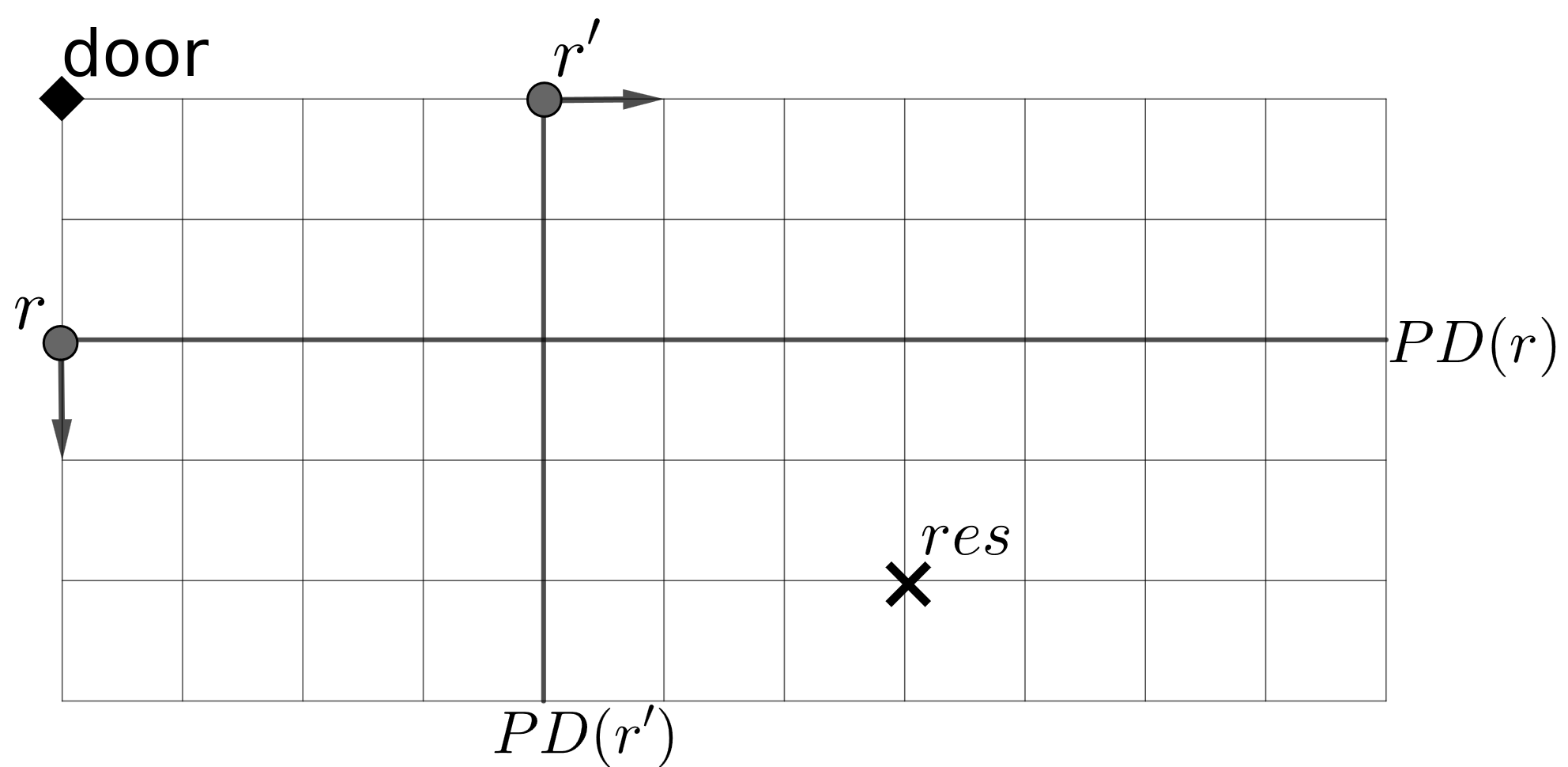}
     \caption{Both height and width of $R_{SE}$ decreases.}
     \label{Fig:L1C11}
\end{minipage}
\hfill
\begin{minipage}[ht]{0.45\linewidth}
\centering
\includegraphics[height=3.5cm, width=6cm]{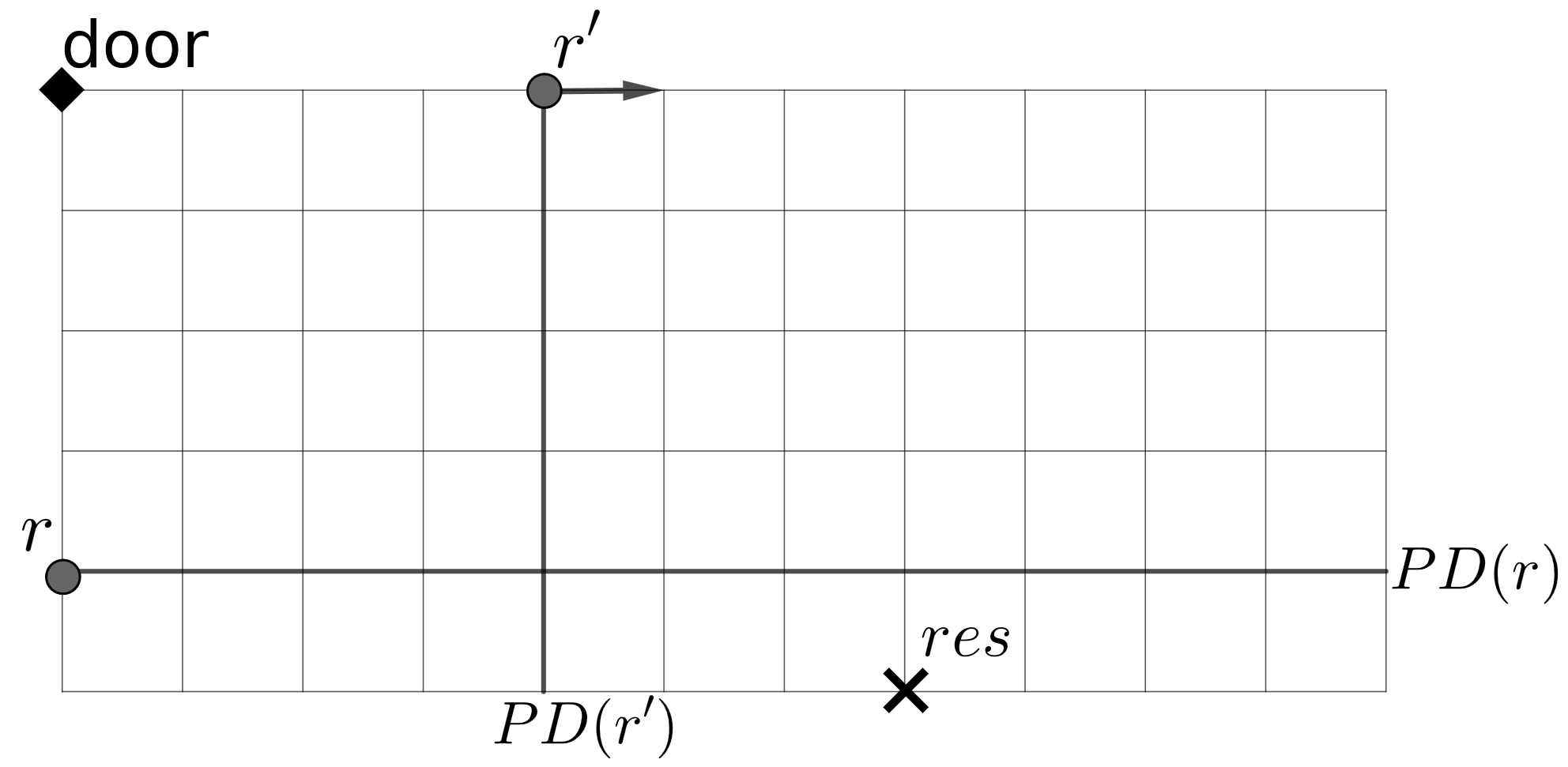}
    \caption{Height of $R_{SE}$ remains same but width decreases.}
    \label{Fig:L1c12}
\end{minipage}
\end{figure}

without loss of generality let, dimension of $R_{SE}$ is $2 \times n_1$ and $2< n_1 < n$ at the beginning of some round $t'$.  In this scenario exactly one robot is adjacent to a corner. Without loss of generality let $r$ be the robot whose adjacent vertex on $BD(r)$ is corner. In this scenario $dist(r) =1$ as $res$ is on $R_{SE}~\setminus~\{PD(r)~\cup~ PD(r')~\}$. Now, if $dist(r') > 1$ then, after completion of round $t'$ dimension of $R_{SE}$ decreases to $2 \times (n_1-1)$ as $r'$ moves towards $res$ along $BD(r')$. On the other hand if  $dist(r') =1$ at the beginning of round $t'$ then, none of $r$ and $r'$ moves according to algorithm~\ref{algo:BdryPhase}. In the worst case, after $T_f$ round during the round $t'+T_f$ the resource $res$ must move. Note that, during this round $res$  can only move parallel to $PD(r)$ and away from $r$ as otherwise the configuration becomes an \textsc{InitGather Configuration} contrary to our assumption. Observe that when $res$ completes the move, $dist(r')$ becomes strictly greater than one and $r'$ moves in the next round and decreases the dimension of $R_{SE}$ to $2 \times(n_1-1)$ (Fig.~\ref{Fig:L1c12}). So, In the worst case at some round $t''$, within\\ $(T_f+1)[\max\{m-3,n-3\}-\min\{m-3,n-3\}] = (T_f+1)[|m-n|]$ rounds the dimension of $R_{SE}$ becomes $2 \times 2$. At the beginning of round $t''+1$ the configuration is as follows (Fig.~\ref{fig:L1C13}),
\begin{itemize}
    \item [1.]$res$ is at the corner which is diagonally opposite to the door vertex.
    \item[2.] $r$ and $r'$ are adjacent to a corner which is not  the door vertex on $BD(r)$ and $BD(r')$ respectively.
\end{itemize}
\begin{figure}[h]
    \centering
    \includegraphics[height=3.5cm,width=6cm]{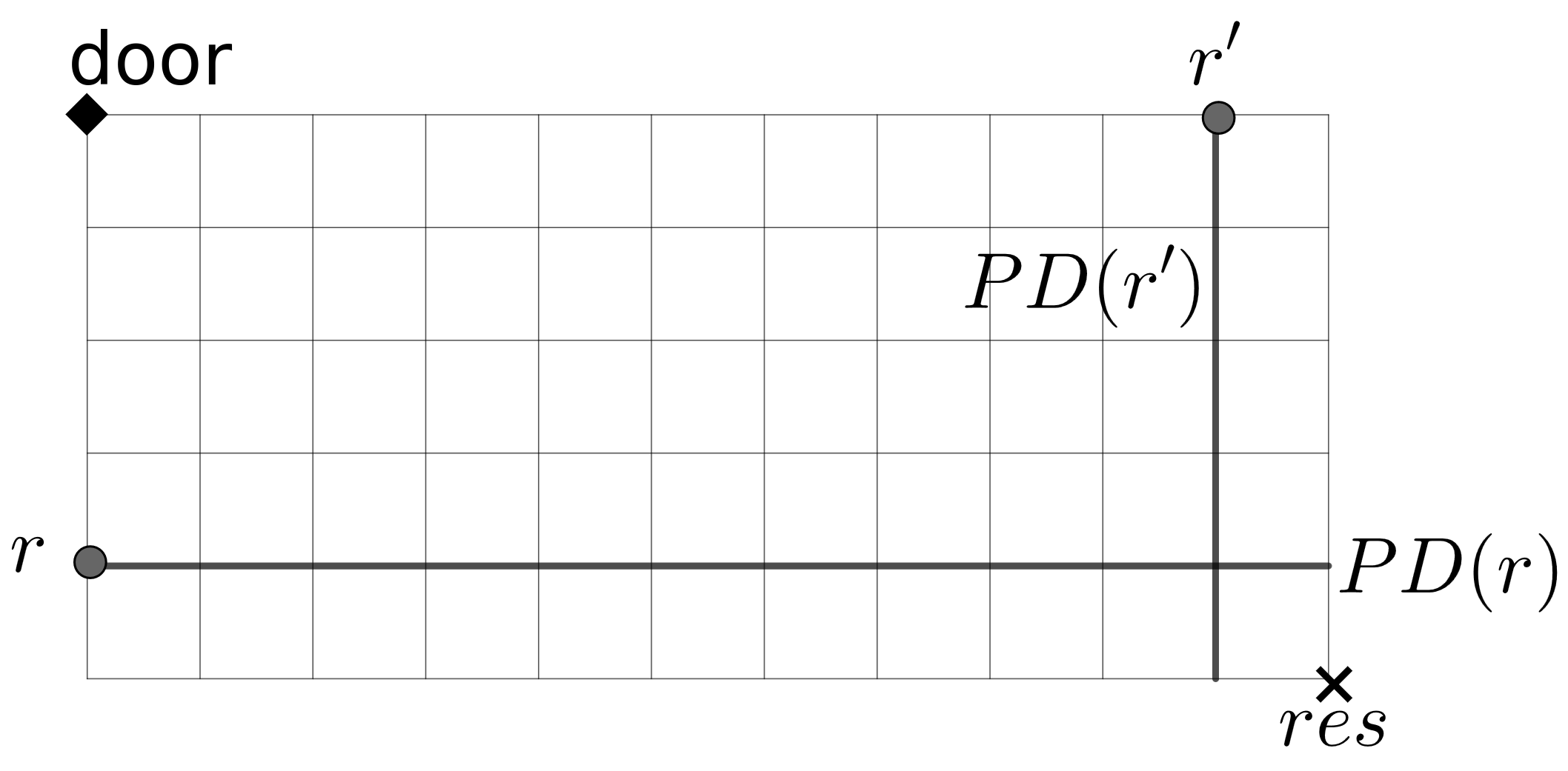}
    \caption{In this configuration $r$ and $r'$ does not move. Whenever $res$ moves the configuration becomes an \textsc{InitGather Configuration}}
    \label{fig:L1C13}
\end{figure}

In this configuration none of $r$ and $r'$ moves and whenever $res$ moves the configuration becomes an \textsc{InitGather Configuration} contrary to our assumption that \textsc{Boundary Phase} never terminates.

Now for the remaining two cases with similar argument it can be shown that the configuration becomes an \textsc{InitGather Configuration} that leads to a contradiction.
So our assumption  that $res$ never crosses or moves onto any of $PD(r)$ and $PD(r')$ was wrong.
Thus, if \textsc{Boundary Phase} doesn't terminate then, $res$ must move onto or crosses any one of $PD(r)$ or $PD(r')$ within $(T_f+1)[|m-n|]+T_f$ rounds in the worst case, which is asymptotically equals to $O(T_f \times \max\{m, n\})$.
\qed
\end{proof}
\begin{lemma}
    \label{lemma:distrlessorequalsone}
    If \textsc{Boundary Phase} never terminates and the $res$ has moved onto or crosses  $PD(R)$ ($R \in \{r, r'\}$) at some round (say $t$), then $dist(R) \le 1$ from round $t$ on wards.
\end{lemma}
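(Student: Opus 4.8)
The plan is to prove the statement by induction on the rounds $\tau \ge t$, controlling the \emph{signed} position of $res$ relative to $R$ along $BD(R)$. Fix $R=r$ (the case $R=r'$ being symmetric), and let $D_\tau$ be the signed hop-distance along $BD(r)$ from $r$ to the foot of the perpendicular dropped from $res$ onto $BD(r)$, so that $dist(r)=|D_\tau|$. The quantitative fact driving everything is that in one round both $r$ (which stays on $BD(r)$ throughout \textsc{Boundary Phase}) and that foot can each shift by at most one hop along $BD(r)$; hence $D$ changes by at most $2$ per round, and can change sign only by passing through a value of absolute value at most one. For the base case, at round $t$ the resource either moves onto $PD(r)$, giving $D=0$ and $dist(r)=0$ at the end of $t$, or it crosses $PD(r)$, i.e. $D$ changes sign; since a sign change from $+a$ to $-b$ with $a,b\ge 1$ forces $a+b\le 2$ and hence $a=b=1$, we again get $dist(r)\le 1$ at the end of round $t$.

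For the inductive step, assume $dist(r)\le 1$ at the start of some round $\tau>t$, with \textsc{Boundary Phase} not terminating during $\tau$. If $dist(r)=0$ then $res$ lies on $PD(r)$, the target $v$ is not unique, so by Algorithm~\ref{algo:BdryPhase} robot $r$ does not move; as $res$ moves at most one hop, $dist(r)\le 1$ at the end of $\tau$. If $dist(r)=1$, say $D=+1$, the unique $v$ is the neighbour of $r$ on the side of $res$, so whenever $r$ moves it moves toward $res$ and $D$ becomes a value in $\{-1,0,1\}$, preserving $dist(r)\le 1$. The only remaining danger is a round in which $dist(r)=1$ and $r$ stays still while $res$ steps one hop further from $PD(r)$, raising $dist(r)$ to $2$; I must rule this out.

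Reading off Algorithm~\ref{algo:BdryPhase}, there are exactly three reasons $r$ can stay still while $dist(r)=1$, and I would dispatch them in increasing difficulty. First, if the target $v$ is a corner, the foot of the perpendicular from $res$ already sits at an extreme vertex of $BD(r)$, so $res$ occupies the extreme column and has no neighbour further from $PD(r)$; thus $dist(r)$ cannot increase. Second, if $dist(r')=0$, then $res$ lies on $PD(r')$, so $r'$ and $res$ are collinear and the perpendicular distance of $r$ to the line through $res$ orthogonal to $PD(r')$ equals $dist(r)=1$; this is precisely an \textsc{InitGather Configuration}, contradicting non-termination, so this reason never arises during \textsc{Boundary Phase}. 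The third and genuinely delicate case is the livelock-avoidance branch, where $r$ stays still only because the \emph{other} robot $r'$ is adjacent to a corner and $dist(r)=1$.

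The main obstacle is exactly this livelock branch. The approach I would attempt is to exploit that $r'$ pinned adjacent to a corner of $BD(r')$ forces the foot of the perpendicular from $res$ onto $BD(r')$ to be that corner, constraining $res$ to an extreme row, and then to argue — applying the \textsc{InitGather} definition with the roles of $r$ and $r'$ exchanged — that any step of $res$ away from $PD(r)$ drives the configuration into \textsc{InitGather}, yielding the desired contradiction. I expect the case analysis here (according to which corner $r'$ abuts, and whether $res$ lies on a grid boundary) to be where essentially all the work lies, and it is the step most at risk: a naive check suggests $res$ may be able to slide along the extreme row while $r$ is repeatedly pinned, so it is plausible that the single-robot invariant $dist(r)\le 1$ must be \emph{strengthened} to a joint invariant that simultaneously controls $dist(r)$ and $dist(r')$ (together with the positions of both robots relative to the corners) in order to close this branch. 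Making that strengthened invariant precise, and verifying it is preserved exactly in the livelock configuration, is the crux I would focus on.
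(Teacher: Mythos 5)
Your proposal is incomplete, and the crux you flag is genuine --- in fact it is fatal to the round-by-round induction, both in your sketch and in the paper's own proof. Your base case and your first two dispatches are correct: if $v$ is a corner then the foot of the perpendicular from $res$ is an extreme vertex of $BD(r)$, so $res$ cannot step further from $PD(r)$; and $dist(r')=0$ together with $dist(r)\le 1$ is an \textsc{InitGather Configuration}, contradicting non-termination. But the livelock branch cannot be closed by the plain invariant $dist(r)\le 1$. Concretely, take a $10\times 10$ grid with the door at $(0,0)$, $BD(r)$ the line $y=0$ and $BD(r')$ the line $x=0$, and consider the configuration $r=(2,0)$, $r'=(0,8)$, $res=(3,9)$. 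It is reachable after $res$ has crossed $PD(r)$: let $res$ zig-zag between the columns $x=2$ and $x=3$ along the north boundary (each swap is a crossing of $PD(r)$ with $dist(r)=1$, exactly as in Fig.~\ref{Fig:L1C23}) while $r'$ climbs from $(0,1)$ to $(0,8)$, where it halts because its target $(0,9)$ is a corner; no intermediate configuration is an \textsc{InitGather Configuration}. At this point $r'$ is adjacent to the corner $(0,9)$ and $dist(r)=1$, so by the livelock branch of Algorithm~\ref{algo:BdryPhase} $r$ does not move, $r'$ does not move, and $res$ may step east to $(4,9)$, yielding $dist(r)=2$ with no termination and no \textsc{InitGather Configuration}. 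So the inductive step, as a statement about single rounds, is simply false, and your suspicion that $res$ ``may be able to slide along the extreme row'' while a robot is pinned is exactly right: from here $res$ can run east keeping $dist(r)=2$ for many rounds.

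Two further remarks. First, your intermediate claim that $r'$ pinned adjacent to a corner ``forces the foot of the perpendicular from $res$ onto $BD(r')$ to be that corner'' is unjustified: the livelock branch tests only the position of $r'$, not of $res$ (though in the configuration above the foot does happen to be that corner). Second, be aware that the paper's proof does not resolve this case either; it asserts, without examining the blocking branches of Algorithm~\ref{algo:BdryPhase}, that when $res$ moves away from $PD(r)$, ``$dist(r)$ remains one \dots\ as $r$ also moves,'' which the configuration above contradicts. If the lemma is to be rescued at all, it must be by the global route you anticipate but do not carry out: one would have to show that every continuation of such a prefix forces \textsc{Boundary Phase} to terminate (in the scenario above $res$ eventually traps itself near the corner $(9,9)$, where each escape move creates an \textsc{InitGather Configuration}), so that rounds with $dist(r)>1$ never occur in an execution satisfying the hypothesis. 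Making that argument --- a strengthened joint invariant on $dist(r)$, $dist(r')$ and the robots' positions relative to the corners, or an explicit termination analysis of the runaway prefixes --- is precisely the missing content; as it stands, your proposal, like the paper's proof, leaves the lemma unproved at its critical case.
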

\begin{proof}
    If \textsc{Boundary Phase} never terminates then, no robot ever reaches the resource $res$ and \textsc{InitGather Configuration} is never formed during execution of the \textsc{Boundary Phase}. Without loss of generality, let $res$ have crossed or moved onto $PD(r)$ at round $t$. So at the beginning of round $t+1$, $dist(r)$, must be less or equal to one. 

\textit{Case I:} Let $res$ be on $PD(r)$ at the beginning of round $t+1$. Then $dist(r) =0$.
Now during round $t+1$, $res$ either moves along $PD(r)$ (horizontally in Fig.~\ref{fig:qudrants})  or, Perpendicular to $PD(r)$ (vertically in Fig.~\ref{fig:qudrants}) or does not move at all. Now if $res$ moves parallel to $PD(r)$ or does not move at all, then  $dist(r)$ remains the same after completion of round $t+1$ according to the algorithm of \textsc{Boundary Phase}. On the other hand, If $res$ moves Perpendicular to $PD(r)$ during round $t+1$, then $dist(r)$ becomes one after the completion of round $t+1$.

\textit{Case II:} Let $res$ crosses $PD(r)$ at round $t$. Then at the beginning of the round $t+1$, $dist(r) =1$. Now if $res$ moves parallel to $PD(r)$ or does not move at all during round $t+1$ then after the completion of the round, $dist(r)$ either stays one or decreases to zero (Fig.~\ref{Fig:L1C21}, Fig.~\ref{Fig:L1c22}). Now let $res$ moves Perpendicular to $PD(r)$ during round $t+1$, then if $res$ moves towards $PD(r)$ then $dist(r)$ either remains same as one (as $r$ can move along $BD(r)$, $res$ crosses $PD(r)$) or becomes zero in case $r$ does not move along $BD(r)$ (Fig.~\ref{Fig:L1C23}, Fig.~\ref{Fig:L1c24}). On the other hand, if $res$ moves away from $PD(r)$ during round $t+1$, then $dist(r)$ remains one after completion of round $t+1$ as $r$ also moves during round $t+1$ towards the direction of $res$ along $BD(r)$.

So after completion of round $t+1$,\  $dist(r)$ is still less or equal to 1. Now with similar arguments, it is easy to see that if after completion of round $t+i,\  dist(r) \le 1$, then $dist(r) \le 1$ after completion of round $t+i+1$ for some natural number $i$. Hence by Mathematical induction, we can conclude the lemma.
\begin{figure}[h!]
\begin{minipage}[ht]{0.45\linewidth}
\centering
\includegraphics[width=5cm, height=3cm]{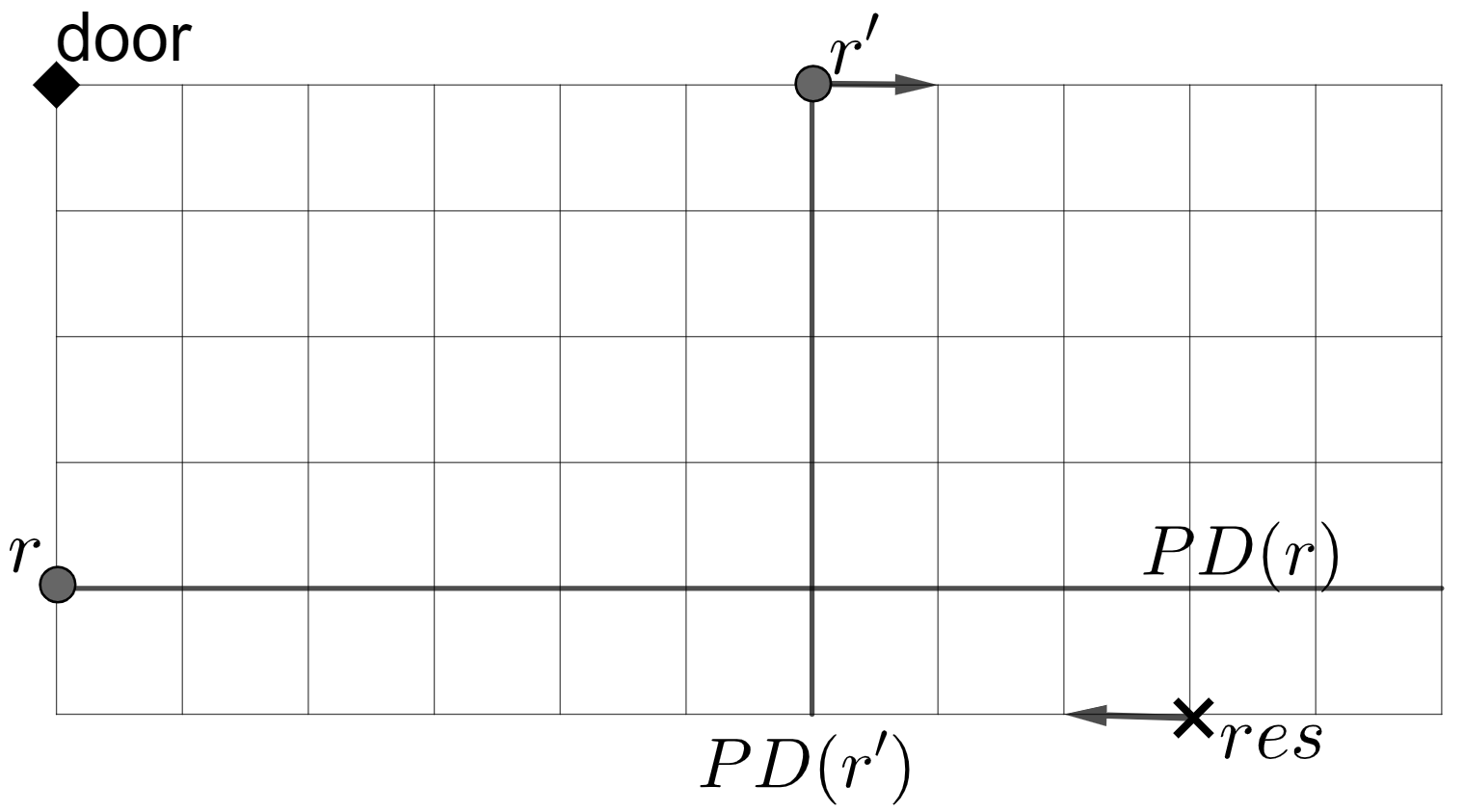}
     \caption{$dist(r)$ remains one.}
     \label{Fig:L1C21}
\end{minipage}
\hfill
\begin{minipage}[ht]{0.45\linewidth}
\centering
\includegraphics[height=3cm, width=5cm]{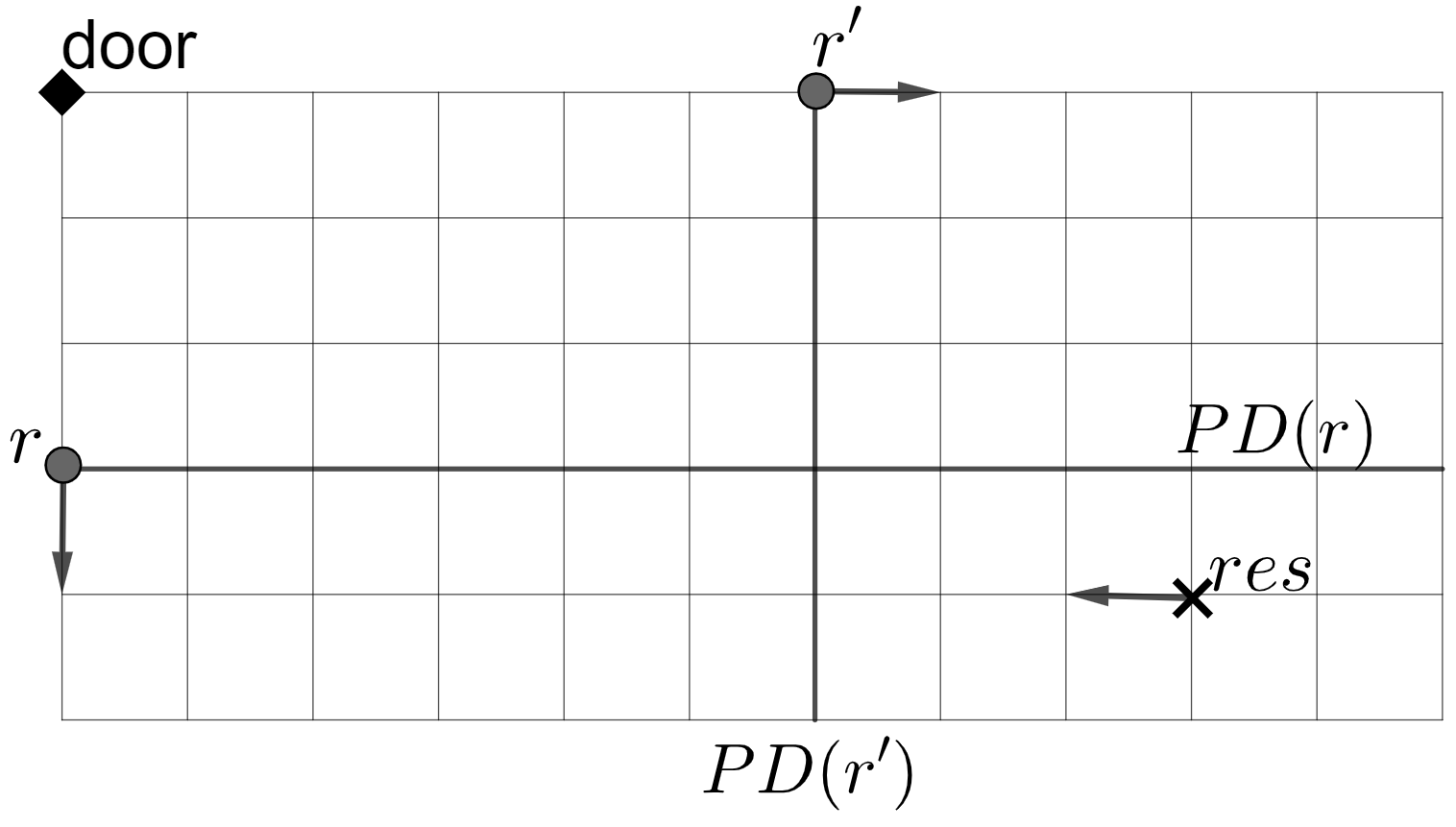}
    \caption{$dist(r)$ becomes zero.}
    \label{Fig:L1c22}
\end{minipage}
\begin{minipage}[ht]{0.45\linewidth}
\centering
\includegraphics[width=5cm, height=3cm]{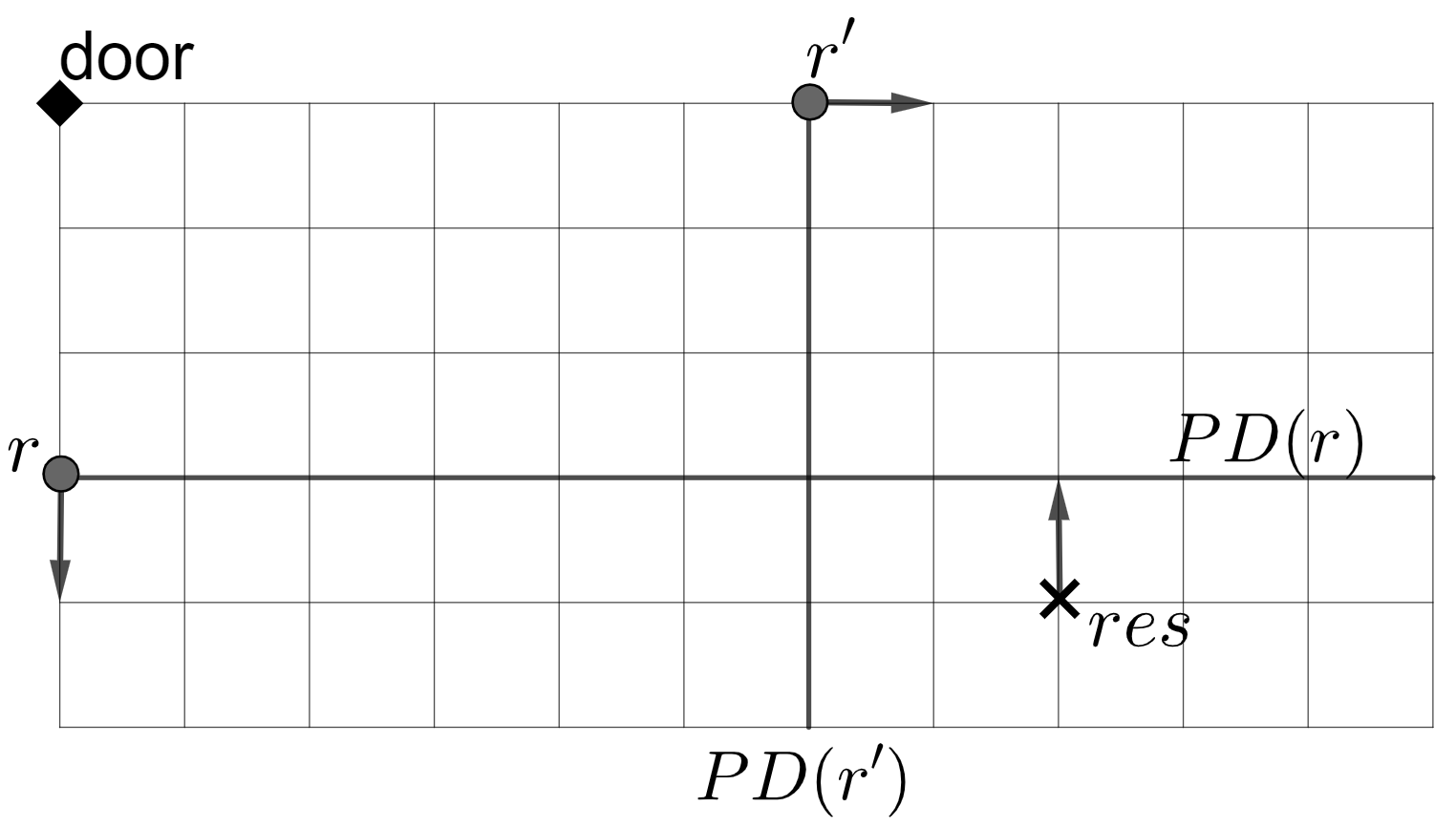}
     \caption{$dist(r)$ remains one as $res$ crosses $PD(r)$.}
     \label{Fig:L1C23}
\end{minipage}
\hfill
\begin{minipage}[ht]{0.45\linewidth}
\centering
\includegraphics[height=3cm, width=5cm]{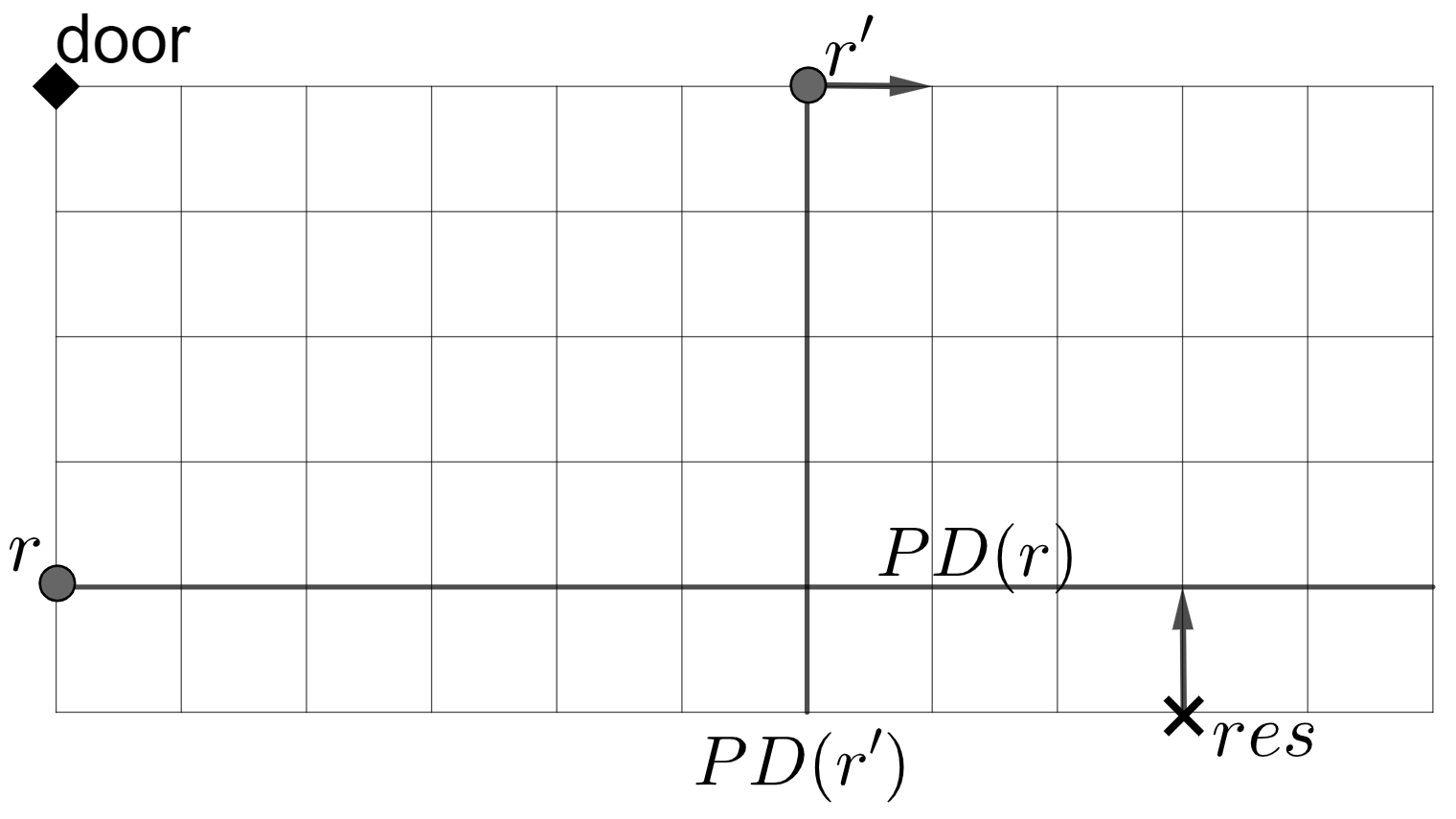}
    \caption{$dist(r)$ becomes zero as $r$ does not move.}
    \label{Fig:L1c24}
\end{minipage}
\end{figure}
\qed
\end{proof}
\begin{lemma}
    \label{lemma:nevercrossesPDR'}
    If \textsc{Boundary Phase} never terminates and $res$ has moved onto or crossed $PD(R)$ where $R \in \{r, r'\}$ at some round $t$, then $res$ never crosses $PD(R')$ from round $t$ on wards (Here $R' = r $ if $R = r'$ and $R' = r'$ if $R = r$).
\end{lemma}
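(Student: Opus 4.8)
The plan is to argue by contradiction: assume \textsc{Boundary Phase} never terminates and that $res$ nonetheless crosses $PD(R')$ at some round after $t$. Without loss of generality take $R=r$, so $res$ has moved onto or crossed $PD(r)$ at round $t$; by Lemma~\ref{lemma:distrlessorequalsone} this gives $dist(r)\le 1$ for every round from $t$ onward, i.e.\ the resource stays pinned to the three columns around $PD(r)$. The first step is to localize the crossing event. Since in one round both $res$ and $r'$ move by at most one hop in the $BD(r')$-direction, the side of $res$ relative to the \emph{moving} line $PD(r')$ can flip in a single round only when $dist(r')=1$ at the start of that round and, during the round, $res$ moves toward $PD(r')$ while $r'$ moves toward $res$ (so that their two one-hop moves shift the relative offset by two and swap sides). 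Thus a crossing forces a very specific local picture, and in particular it requires $r'$ to actually move toward $res$.

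Next I would split on the value of $dist(r)$ at the start of the crossing round, which by Lemma~\ref{lemma:distrlessorequalsone} lies in $\{0,1\}$. If $dist(r)=0$, the resource is on $PD(r)$, robot $r$ does not move (it sees two equidistant boundary vertices), and a direct check of the post-move positions shows that $r$ and $res$ then share a grid line while the perpendicular distance of $r'$ to the line through $res$ perpendicular to that line is exactly one; this is an \textsc{InitGather Configuration}, contradicting non-termination. If $dist(r)=1$ and neither robot is blocked by the corner/livelock clause of Algorithm~\ref{algo:BdryPhase}, then $r$ also advances one hop toward $res$ along $BD(r)$, driving $dist(r)$ to $0$; recomputing the positions again yields a robot collinear with $res$ and the other robot within perpendicular distance one, i.e.\ again an \textsc{InitGather Configuration}, a contradiction. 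Hence every crossing in which the robots move freely is ruled out by routine collinearity checks.

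The genuine difficulty is the remaining situation, where $dist(r)=1$ but $r$ is \emph{blocked} from moving precisely because $r'$ is adjacent to a corner (the livelock-avoidance clause forbids $r$ to move when the other robot is corner-adjacent and $dist(r)=1$), while $r'$ is nevertheless free to step away from that corner toward $res$. Tracing the geometry, this reduces to the two mirror configurations in which $r'$ sits adjacent to one of the two corners of $BD(r')$ and $res$ is one hop beyond $r'$ on the far side; here the crossing move does \emph{not} immediately produce an \textsc{InitGather Configuration}, so it escapes the argument above. The hard part is therefore to show that these corner-adjacent configurations are simply \emph{unreachable} during a non-terminating \textsc{Boundary Phase} once $res$ has crossed $PD(r)$. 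I expect to close this gap with a confinement invariant: at the start of the phase $res$ lies outside the door-corner quadrant $R_{NW}$, and the combination of $dist(r)\le 1$ with the way $r$ and $r'$ chase the projections of $res$ prevents the resource from being driven back adjacent to a corner of $BD(r')$ without an \textsc{InitGather Configuration} having formed earlier. Establishing this invariant cleanly --- rather than the mechanical checks of the generic cases --- is where the real work lies; once it is in place, an induction on rounds (the side of $res$ relative to $PD(r')$ being preserved at every step) completes the proof.
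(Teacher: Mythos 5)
Your localization of the crossing event and your first two cases match the paper's proof: a crossing of $PD(r')$ forces $dist(r')=1$ at the start of the round with $res$ and $r'$ stepping toward each other, and when $dist(r)=0$, or when $dist(r)=1$ with $r$ free to move, a routine position check yields an \textsc{InitGather Configuration}, contradicting non-termination. The genuine gap is your final case: you explicitly leave it unproven, deferring to a ``confinement invariant'' that you never formulate precisely, let alone establish. A proof that announces where ``the real work lies'' without doing that work does not prove the lemma; as written, your argument establishes nothing about the blocked configurations beyond the fact that your generic collinearity checks do not cover them.

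What is missing is an idea you walked right past: the livelock clause of Algorithm~\ref{algo:BdryPhase} is \emph{symmetric}, and the paper uses it to freeze both robots simultaneously, with no reachability or confinement argument at all. In the paper's analysis, when $dist(r)=1$ and $r$ fails to move in the crossing round, $r$ is adjacent to a corner of $BD(r)$ (its target vertex $v$ is that corner); but then $r'$, which has $dist(r')=1$ and sees the other robot corner-adjacent, is itself forbidden to move by exactly the clause you treated as the obstacle. With $r'$ static, $PD(r')$ does not shift, so from $dist(r')=1$ a one-hop move of $res$ can land \emph{on} $PD(r')$ but never beyond it; and moving onto $PD(r')$ makes $res$ collinear with $r'$ while $dist(r)\le 1$ holds by Lemma~\ref{lemma:distrlessorequalsone}, i.e., an \textsc{InitGather Configuration} --- contradiction. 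So the clause is the tool, not the difficulty: crossing requires $r'$ to move, and the hypothesis that blocks $r$ blocks $r'$ as well. One caveat in your favor: the precise sub-case you isolate ($r$ blocked because $r'$ is corner-adjacent while $r'$ is still free to step away from its corner toward $res$) is dispatched in the paper only by the tacit assertion that $r$ can be blocked in this round only when $r$ itself is corner-adjacent; if you distrust that assertion you are raising an objection to the paper's case analysis, but your proposal does not settle that sub-case either --- you would still have to prove it unreachable (your unestablished invariant) or show that an \textsc{InitGather Configuration} must form before it can arise.
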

\begin{proof}
    If \textsc{Boundary Phase} never terminates then, no robot ever reaches the resource $res$ and \textsc{InitGather Configuration} is never formed during execution of the \textsc{Boundary Phase}. Without loss of generality let $res$ move onto or crosses $PD(r)$ at some round $t_0$ and $BD(r')$ is the boundary of $G$ at the north (Fig.~\ref{fig:qudrants}). By Lemma~\ref{lemma:distrlessorequalsone} ,for any round $t > t_0$, $dist(r)$ remains less or equals to one. We have to show that from round $t$ on wards $res$ never crosses or moves onto $PD(r')$. If possible let $res$ moves onto or crosses $PD(r')$ at some round $t_1 > t_0$. Then $res$ must moves parallel to $PD(r)$ or doesn't move at all during round $t_1$.

Note that, if $dist(r) = 0$ at the beginning of round $t_1$ then, after completion of the round $dist(r)$ remains zero and $dist(r')$ becomes less than or equals to one. So, after completion of round $t_1$ the configuration becomes an \textsc{InitGather Configuration} contrary to our assumption. So,let $dist(r)=1$ at the beginning of round $t_1$. Now, after completion of the round, if $res$ moves onto $PD(r')$ i.e., $dist(r')$ becomes zero then, the configuration again becomes an \textsc{InitGather Configuration} as $dist(r)$ either remains one or becomes zero. So, $res$ can only cross $PD(r')$ during  round $t_1$. After crossing $PD(r')$, $dist(r')$ becomes one. Now, for $res$ to cross $PD(r')$ without achieving an \textsc{InitGather Configuration} is possible only when $r$ does not move otherwise $dist(r)$ becomes zero and we arrive at an \textsc{InitGather Configuration} contrary to the assumption. Observe that, $r$ does not move during round $t_1$ only if it is located adjacent to a corner on $BD(r)$. Now at the beginning of round $t_1$, $dist(r')$ is also one (as it is assumed that it crosses $PD(r')$ during round $t_1$) (Fig.~\ref{fig:L1C31}). In this scenario $r'$ doesn't move during the round $t_1$ according to algorithm~\ref{algo:BdryPhase}. So, $res$ can't cross $PD(r')$ without moving onto it. As described earlier this leads to an \textsc{InitGather Configuration} which is a contradiction. So, for \textsc{Boundary Phase} to never terminate, if $res$ crosses $PD(R)$ during round $t_0$ then from round $t_0$ on wards it never crosses or moves onto $PD(R')$. 

\begin{figure}
    \centering
    \includegraphics[height=3.5cm,width=6cm]{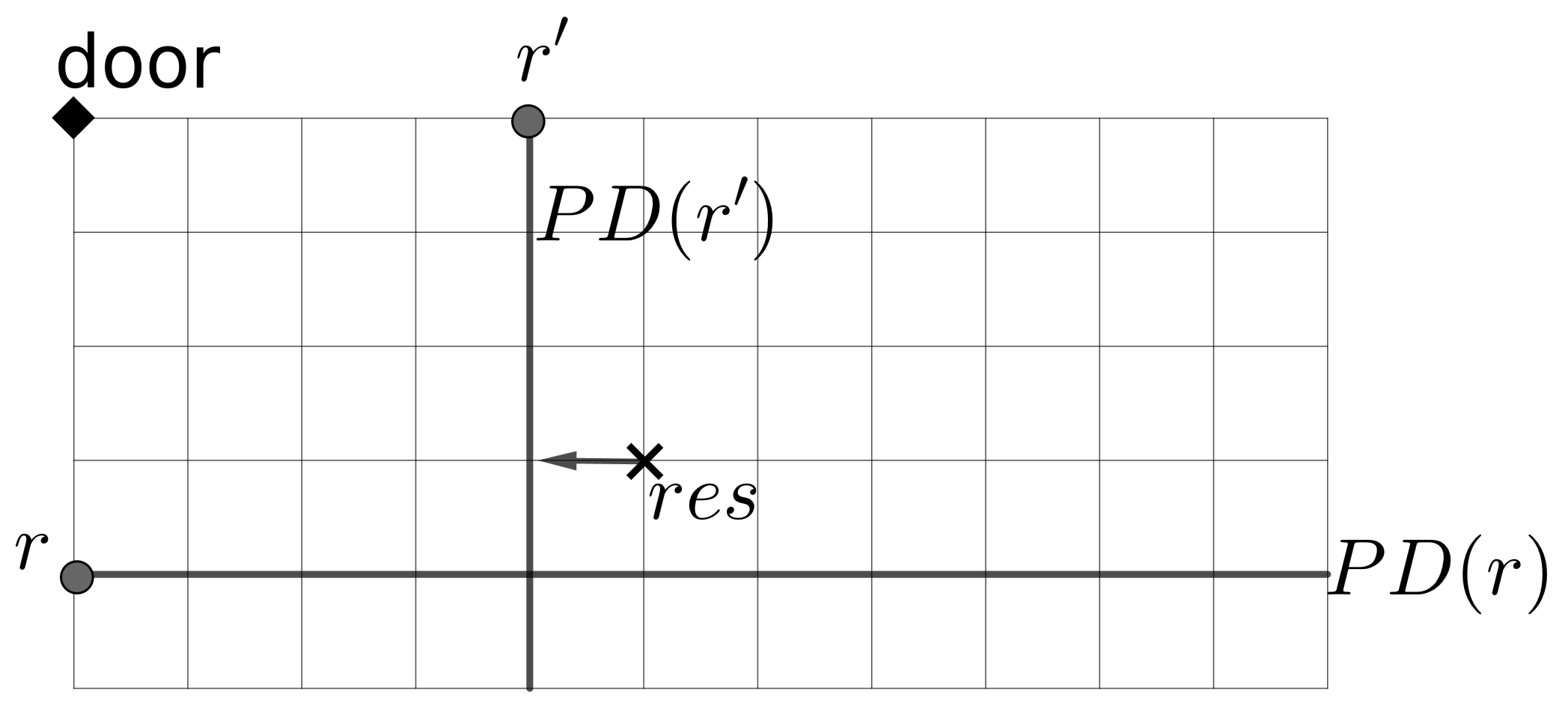}
    \caption{$res$ moves onto $PD(r')$ and creates an \textsc{InitGather Configuration}.}
    \label{fig:L1C31}
\end{figure}
\qed
\end{proof}

Now using the above lemmas we prove the following theorem.

\begin{theorem}
    \label{Thm:bdryPhsTrmintn}
     For a grid of dimension $m\times n$, the \textsc{Boundary Phase} terminates within $O(T_f \times \max\{m, n\})$ rounds.
\end{theorem}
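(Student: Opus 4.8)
The plan is to argue by contradiction: assume the \textsc{Boundary Phase} never terminates and show that the three preceding lemmas together force an \textsc{InitGather Configuration} within $O(T_f \times \max\{m,n\})$ rounds, contradicting the assumption. First I would invoke Lemma~\ref{lemma:crossesPDR} to obtain a round $t_0 = O(T_f \times \max\{m,n\})$ at which $res$ moves onto or crosses one of the two perpendicular lines; without loss of generality say it is $PD(r)$. From $t_0$ onward, Lemma~\ref{lemma:distrlessorequalsone} gives the invariant $dist(r) \le 1$, and Lemma~\ref{lemma:nevercrossesPDR'} gives that $res$ never crosses or moves onto $PD(r')$, hence $dist(r') \ge 1$ for all rounds $t \ge t_0$.

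Next I would translate these invariants into the geometry of the \textsc{InitGather Configuration}. In the coordinate picture of Fig.~\ref{fig:qudrants} (with $r$ on a vertical boundary and $r'$ on the horizontal boundary), $dist(r)$ is the offset of $res$ from $PD(r)$ measured along $BD(r)$, while $dist(r')$ is its offset from $PD(r')$ along $BD(r')$. Reading the definition, the configuration is \textsc{InitGather} precisely when $res$ lies on $PD(r)$ with the other robot within perpendicular distance one, i.e.\ $dist(r)=0$ and $dist(r') \le 1$ (or symmetrically with $r,r'$ swapped). Since $dist(r')\ge 1$ is now permanent, the symmetric case is excluded, so it suffices to prove that $res$ is eventually forced into the state $dist(r)=0,\ dist(r')=1$, which is an \textsc{InitGather Configuration} and therefore terminates the phase.

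The main step, and the step I expect to be the principal obstacle, is the squeezing argument establishing this. With $res$ trapped permanently on one side of $PD(r')$ and with $dist(r)\le 1$ holding throughout, I would track the strip of the grid in which $res$ can still hide and show it shrinks monotonically: whenever $res$ idles, robot $r'$ closes the horizontal gap within $T_f$ rounds, and whenever $res$ flees it can only retreat toward the far boundary (it may not cross $PD(r')$, by Lemma~\ref{lemma:nevercrossesPDR'}), exactly the monotone-shrinking behaviour already exploited in the proof of Lemma~\ref{lemma:crossesPDR}. Mirroring that case analysis, including the livelock-avoiding guards of Algorithm~\ref{algo:BdryPhase} that stall a robot adjacent to a corner with $dist=1$, shows that within a further $(T_f+1)\cdot O(\max\{m,n\})$ rounds $res$ is pressed against the far boundary with $dist(r')=1$ while $r$ maintains $dist(r)=0$, forming the target \textsc{InitGather Configuration}. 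Summing the $O(T_f \times \max\{m,n\})$ rounds of Lemma~\ref{lemma:crossesPDR} with this second bound yields the claimed $O(T_f \times \max\{m,n\})$ total and the desired contradiction. The delicate part will be the bookkeeping of this second shrinking phase — specifically ruling out a livelock in which $res$ oscillates while both robots stall — which must reuse the algorithm's $dist(r)\neq 1$ exceptions rather than a naive pursuit count, just as in the three lemmas.
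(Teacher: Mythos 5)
Your proposal is correct and takes essentially the same route as the paper's own proof: assume the \textsc{Boundary Phase} never terminates, use Lemma~\ref{lemma:crossesPDR} to obtain the crossing of $PD(r)$ within $O(T_f\times\max\{m,n\})$ rounds, use Lemma~\ref{lemma:distrlessorequalsone} and Lemma~\ref{lemma:nevercrossesPDR'} to get the permanent invariants $dist(r)\le 1$ and $dist(r')\ge 1$, let $r'$ squeeze $res$ against the boundary parallel to $BD(r)$, and finish by repeating the Case~I argument of Lemma~\ref{lemma:crossesPDR} with its $dist\ne 1$ corner guards. One cosmetic slip: in the endgame $r$ does not ``maintain $dist(r)=0$''; rather $res$ is pinned at the corner diagonally opposite the door while both robots stall, and its forced move within $T_f$ rounds is what produces the \textsc{InitGather Configuration}, which is exactly the Case~I analysis you cite.
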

\begin{proof}

If possible, let us assume \textsc{Boundary Phase}  never terminates. This implies no robot ever moves to the location of $res$ and \textsc{InitGather Configuration} is never formed during the execution of \textsc{Boundary Phase}.
From the above three lemmas (i.e., Lemma~\ref{lemma:crossesPDR}, Lemma~\ref{lemma:distrlessorequalsone} and Lemma~\ref{lemma:nevercrossesPDR'}) it can be said that, $res$ must crosses or moves onto  $PD(R)$ where $R \in \{r, r'\}$ at some round $t$ for the first time ,within $T_f \times O(\max\{m, n\})$ rounds. Then from round $t$ on wards, $dist(R)$ always remains less or equals to one and $res$ never crosses $PD(R')$  where $R' =r$ if $R = r'$ and $R' =r'$ if $R = r$. 

Now without loss of generality let, $res$ has moved onto or crossed $PD(r)$ at some round $t$ for the first time. Also, let, $BD(r')$ is the boundary on the north of the grid (Fig.~\ref{fig:qudrants}). Then from round $t$ on wards $res$ must lie inside $(R_{NE} \cup R_{SE}) \setminus PD(r')$ (i.e., $dist(r') \ge 1$ from round $t$ on wards). So, in the worst case, $r'$ must move away from the door along $BD(r')$ in each  $T_f$ consecutive rounds once, from round $t$ on wards. Let after completion of round $t$ the dimension of $(R_{NE} \cup R_{SE})$ is $m' \times n'$, where the length $n' < n$. Note that after  $(n'-2)\times(T_f+1)$ rounds $(R_{NE} \cup R_{SE}) \setminus PD(r')$ is the boundary of the grid, say $BD_{par}(r)$,  which is parallel to $BD(r)$ (Fig.~\ref{fig:L1f}). In this scenario $res$ must be on some vertex of $BD_{par}(r)$. Note that in this configuration $r'$ is adjacent to a corner so it does not move and $res$ can not be on $PD(r)$ as otherwise it would be a \textsc{InitGather Configuration} and \textsc{Boundary Phase} terminates. Now in this configuration $res$ can be either on some vertex of  $R_{NE}\setminus\{PD(r)\cup PD(r')\}$ or on some vertex of $R_{SE}\setminus\{PD(r)\cup PD(r')\}$ and also remains so in the upcoming rounds. Let without loss of generality $res$ is on some vertex of  $R_{NE}\setminus\{PD(r)\cup PD(r')\}$ at the beginning of some round $t$. Let dimension of $R_{NE}$ is $m'\times 2$ at the beginning of round $t$ (Fig.~\ref{fig:L1f}). Then by the similar argument used to prove \textit{Case I.} of Lemma~\ref{lemma:crossesPDR} we can conclude that if no robot reaches the location of $res$, within at most $(T_f+1)\times (m'-2)+T_f$ more rounds the configuration becomes an \textsc{InitGather Configuration}, which is a contradiction. Hence our assumption that \textsc{Boundary Phase} never terminates is wrong. And as calculated the \textsc{Boundary Phase} terminates in $O(T_f\times\max\{m,n\})+(T_f+1)\times(m+n-6)+T_f$ rounds in the worst case which is asymptotically equals to $O(T_f\times\max\{m,n\})$.

\begin{figure}
    \centering
    \includegraphics[height=3cm,width=6cm]{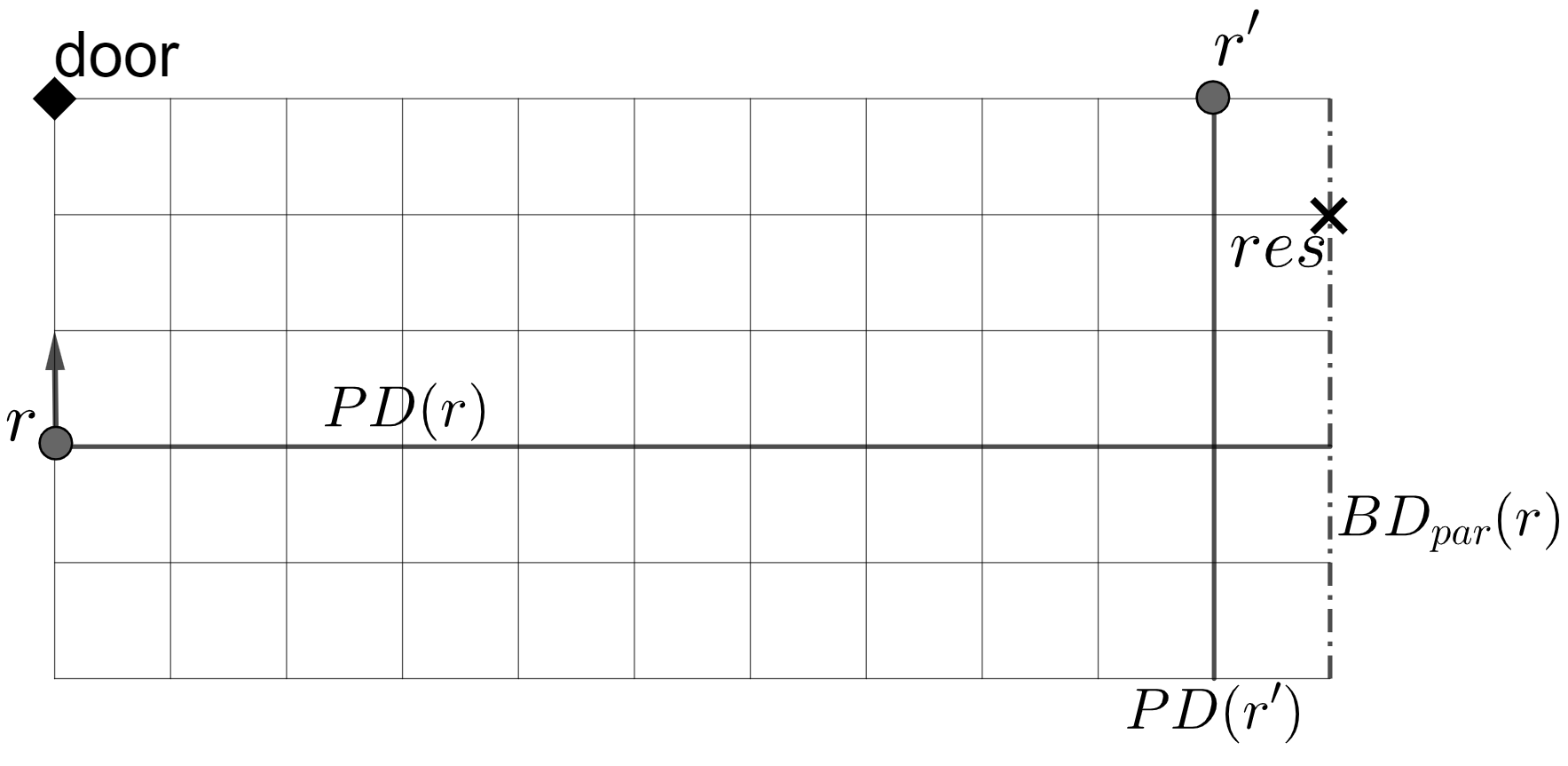}
    \caption{dimension of $R_{NE}$ decreases from $m' \times 2$ to $(m'-1)\times 2$}
    \label{fig:L1f}
\end{figure}
\qed
\end{proof}

\subsection{\textsc{Gather Phase}}
\textsc{Gather Phase} starts if none of the two robots reaches the location of $res$  after the termination of the \textsc{Boundary Phase}. Throughout the execution of this phase, the configuration will remain an \textsc{InitGather Configuration} (Lemma~\ref{Lemma:InitGAther}). So, in each round, a robot will lie on the same line (say $L$) along with $res$, and the perpendicular distance of the other robot to the line passing through $res$ and perpendicular to $L$ must be at most one.  
During this phase, if a robot is in the same location with $res$, it terminates and the other robot moves to the location of $res$ along any shortest path. On the other hand when none of the robots are on the same vertex with $res$, the robot on $L$ checks if $res$ is adjacent to it. If $res$ is not on its adjacent vertex, it moves towards $res$ along $L$. Otherwise, if $res$ is on its adjacent vertex then it moves towards $res$ along $L$  only if it sees $res$ is on a corner and the other robot, say $r'$, is also on another adjacent vertex of $res$. Now if the robot is not on any line along with the resource $res$, (i.e., perpendicular distance of the robot to the line through $res$ and perpendicular to $L$ is one) then, the robot will move parallel to $L$ towards $res$. The pseudo code of \textsc{Gather Phase} is as follows.

\begin{algorithm}[H]
\caption{Gather Phase for robot $r$}\label{algo:GathrPhase}
\eIf{$r$ is on same vertex with $res$}
{
    terminate\;
}
{
    \eIf{$r'$ is on the same vertex with $res$}
    {
        move to $res$ along any shortest path avoiding door vertex\;
    }
    {
        \eIf{$r$ is on a line $L$ with $res$}
        {
            \eIf{$res$ is not adjacent to $r$}
            {
                move towards $res$ along $L$\;
            }
            {
                \If{$res$ is at a corner and $r'$ is adjacent to $res$}
                {
                    move towards $res$ along $L$\;
                }
            }
        }
        {
            move parallel to $L$ towards $res$\;
        }
    }
}

\end{algorithm}
 Before proving the correctness of the \textsc{Gather Phase}, let us discuss some notations used in the following proofs.

 Let, $r$ be the robot which is on the same line along with the resource $res$ in an \textsc{InitGather Configuration}. This line is denoted as $L$. We denote the line passing through $res$ and perpendicular to $L$ as $L'$. Now by $L^1$ ans $L^{-1}$ we denote the lines parallel to $L'$ and one hop distance apart from $L'$. $L^1$ is the line furthest to $r$ compared to $L^{-1}$. Observe that $r'$ can be on any one of $L^{-1}, L', L^1$ in an \textsc{InitGather Configuration}.
Now we can prove the following lemma.

 \begin{lemma}
     \label{lemma:notonSameLine}
     By executing Dynamic Rendezvous algorithm two robots $r$ and $r'$ never moves to same line from an \textsc{InitGather Configuration}.
 \end{lemma}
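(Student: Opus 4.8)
The plan is to set up the coordinate frame determined by the \textsc{InitGather Configuration} and then track only the positions of the two robots after a single round of Algorithm~\ref{algo:GathrPhase}. In the fully synchronous setting every robot computes its destination from the start-of-round snapshot, so the robots' moves are completely determined by the initial positions and do not depend on the simultaneous (adversarial) step of $res$; hence collinearity of the two robots can be decided from the start-of-round data alone. Concretely, I would draw the line $L$ carrying $r$ and $res$ horizontally, place $res$ at the origin, and write $r=(-d,0)$ where $d\ge 1$ is the hop-distance $dist(r)$ along $L$. By the third condition of the \textsc{InitGather Configuration}, $r'=(x',y')$ lies on one of the three vertical lines $L^{-1},L',L^{1}$, so $x'\in\{-1,0,1\}$; also $y'\neq 0$, and when $d=1$ the robot $r'$ cannot be on $L^{-1}$, for otherwise $r$ and $r'$ would already share the column $x=-1$, contradicting the first condition.

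The heart of the argument is a pair of landing invariants that I would read directly off the guards of Algorithm~\ref{algo:GathrPhase}. First, $r$ moves only along $L$, so after the round it is still on row $0$, and it reaches the column $x=0$ of $res$ only in the single case where $res$ is at a corner with both robots adjacent to it, i.e. the terminating corner-gather. Second, after the round $r'$ always lands on the perpendicular line $L'$ (the column $x=0$) while keeping a nonzero row: if $r'$ was on $L^{\pm1}$ it is on no grid line with $res$, so it ``moves parallel to $L$ towards $res$'' and steps onto column $0$ with $y'$ unchanged; if $r'$ was already on $L'$ it is itself a robot on a line with $res$, and it either stays put or moves towards $res$ along $L'$, in both cases staying on column $0$, reaching row $0$ only in the corner-gather. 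Granting these two facts, any non-terminating round ends with $r$ on row $0$ and column $\le -1$, and $r'$ on column $0$ and nonzero row; thus the two robots share neither a row nor a column and are not on the same line.

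I would verify the two invariants by a compact case analysis on the line carrying $r'$ (one of $L^{-1},L',L^{1}$), split further by the adjacencies to $res$, i.e. by $d=1$ versus $d\ge 2$ and $|y'|=1$ versus $|y'|\ge 2$; in each cell one simply evaluates the guards of Algorithm~\ref{algo:GathrPhase} to get the two destinations and checks ``$r$ stays on row $0$, column $\le -1$'' and ``$r'$ lands on column $0$, nonzero row''. The only place that needs a small extra observation is $d=1$ with $r'$ on $L^{\pm1}$: there one notes that a robot on $L^{\pm1}$ is never grid-adjacent to $res$ (its Manhattan distance to the origin is $1+|y'|\ge 2$), so the corner-push guard for $r$ cannot fire and $r$ stays at column $-1$. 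Finally, the single-round statement upgrades to ``never'' by induction along the \textsc{Gather Phase}: each round begins from an \textsc{InitGather Configuration} (Lemma~\ref{Lemma:InitGAther}), and the argument shows non-collinearity is preserved at every step.

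The hard part will be the one genuinely delicate configuration, $r'$ on $L'$ at hop-distance one from $res$ with $d=1$ and $res$ at a corner. Here both robots are adjacent to $res$, both corner-push guards fire, and both robots step onto the vertex of $res$, so they do end up on the same vertex. Since this is precisely the terminating gather at the corner (and the single vertex is a degenerate ``line''), the statement must be read as excluding this successful final move; I would state this exception explicitly and then argue that in every other configuration at least one corner-push guard fails, so no robot ever steps onto the row or column of $res$ prematurely and the two robots remain off a common line.
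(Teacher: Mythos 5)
Your proof is correct and follows essentially the same route as the paper's: the same trichotomy on the line carrying $r'$ (namely $L^{-1}$, $L'$, or $L^{1}$), and the same key facts (a robot adjacent to $res$ on its line does not move outside the corner-gather case, $r'$ always lands on $L'$, and $d=1$ with $r'$ on $L^{-1}$ contradicts condition~1 of the \textsc{InitGather Configuration}), merely repackaged as forward landing invariants instead of the paper's case-by-case contradiction. Your explicit exception for the terminating corner-gather round, where both corner-push guards fire and both robots step onto the vertex of $res$, is in fact more careful than the paper, whose Case~I flatly asserts that an adjacent $r$ cannot move and which excludes that round only implicitly, via the ``no robots terminate'' hypothesis under which the lemma is applied in Lemma~\ref{Lemma:InitGAther}.
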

 \begin{proof}
     Let at the beginning of some round $t$, the configuration is an \textsc{InitGather Configuration}. Observe that $r$ and $r'$ are not on the same line at the beginning of the round $t$. If possible let after completion of round $t$, $r$ and $r'$ moves to the same line. Without loss of generality let, $r$ be the robot on the same line $L$ along with $res$. Then, $r'$ can be on any one of $L^{-1}, L', L^1$ at the beginning of round $t$. Now, we have three cases depending on the location of $r'$.

     \textit{Case I:} Let $r'$ is on $L'$ at the beginning of round $t$. Then for being on the same line after the completion of the round either $r$ moves to $L'$ or, $r'$ moves to $L$. Without loss of generality, let $r$ moves to $L'$ during round $t$. This implies at the beginning of round $t$, $r$ is adjacent to $res$ on $L$ (Fig.~\ref{Fig:GL1C1}). Now according to the algorithm~\ref{algo:GathrPhase}, $r$ doesn't move during round $t$, a contradiction.

     \textit{Case II:} Let $r'$ is on $L^1$ at the beginning of round $t$. Then during round $t$, $r'$ moves parallel to $L$ to $L'$. So to be on the same line after completion of the round, $r$ must move onto $L'$ during this round (Fig.~\ref{Fig:GL1C2}). But for the same reason used in \textit{Case I}, $r$ can not move during round $t$ which leads to a contradiction again.

     \textit{Case III:} Let $r'$ is on $L^{-1}$ at the beginning of round $t$. According to algorithm~\ref{algo:GathrPhase}, $r'$ moves to $L'$ during this round. So for being on the same line $r$ must move to $L'$ too during round $t$. For this to happen, $r$ must be on the line $L^{-1}$ along with $r'$ at the beginning of round $t$ (Fig.~\ref{Fig:GL1C3}). But that is a contradiction as at the beginning of round $t$ the configuration is an \textsc{InitGather Configuration}.

     For all of this cases we reach a contradiction assuming that after completion of round $t$, $r$ and $r'$ moves to the same line. Hence the lemma.
     \qed
 \end{proof}

 \begin{figure}[h!]
\begin{minipage}[ht]{0.45\linewidth}
\centering
\includegraphics[width=4.5cm, height=3 cm]{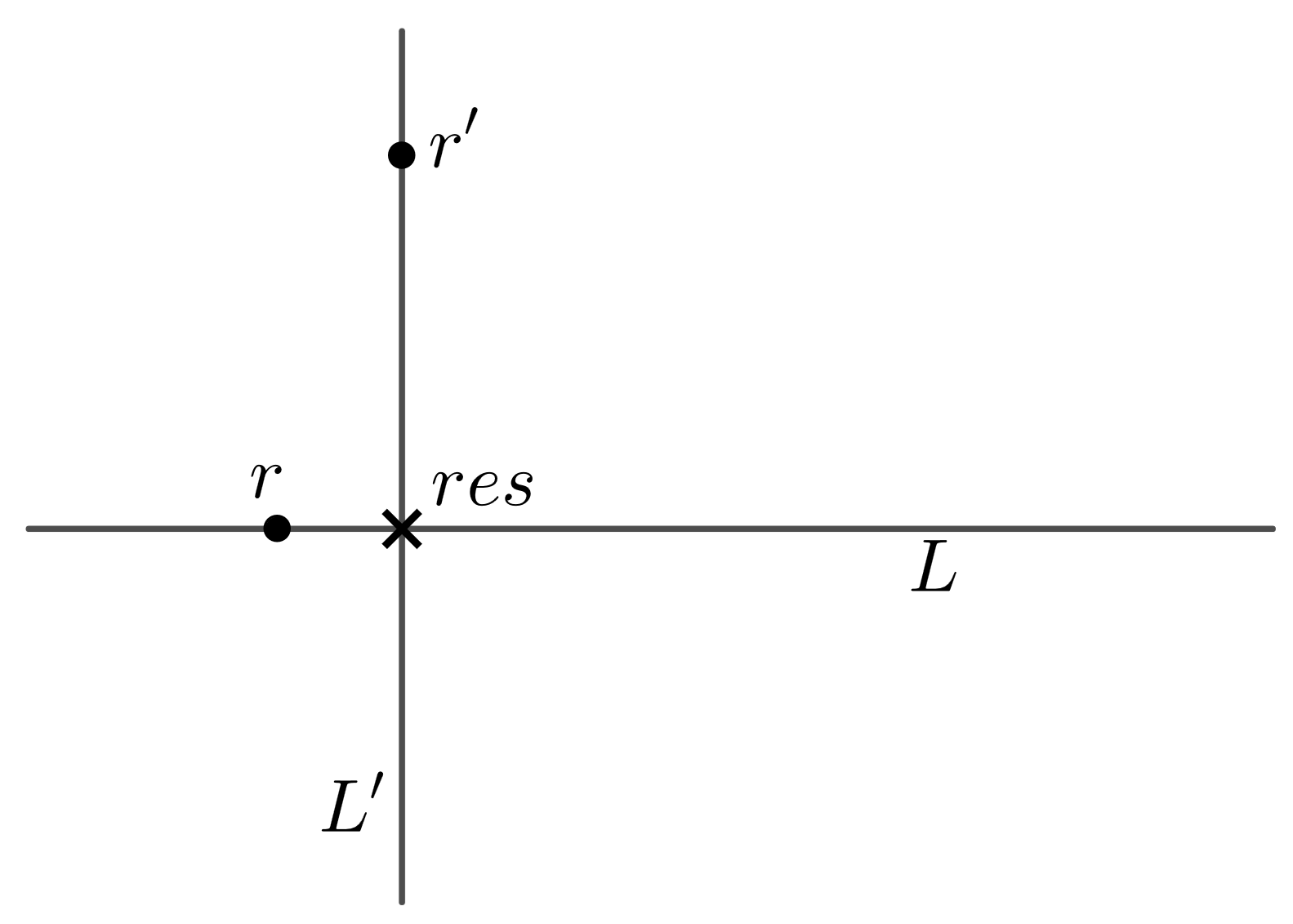}
     \caption{$r'$ is on $L'$. $r$ doesn't move to $L'$.}
     \label{Fig:GL1C1}
\end{minipage}
\hfill
\begin{minipage}[ht]{0.45\linewidth}
\centering
\includegraphics[height=3cm, width=4.5cm]{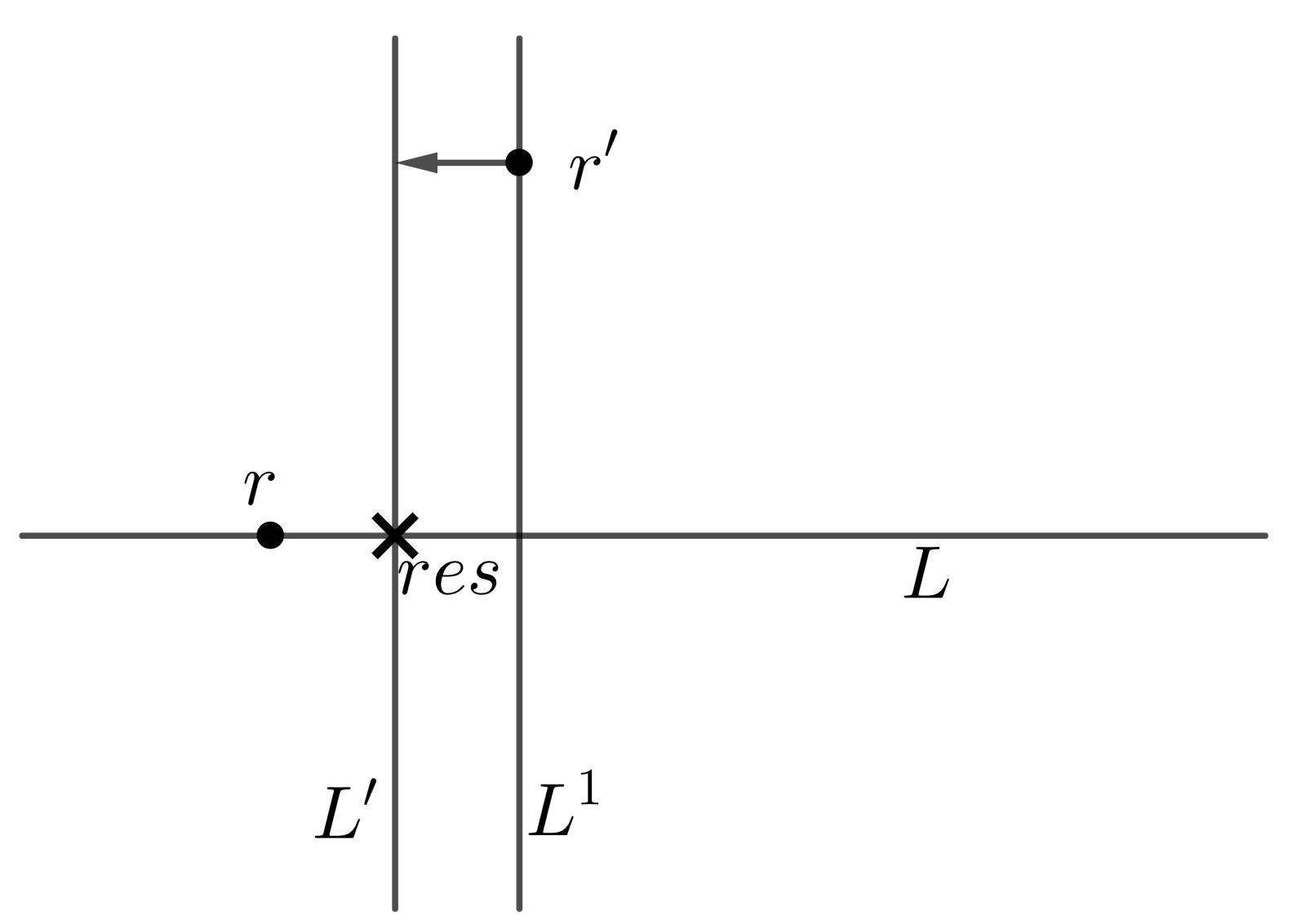}
    \caption{$r'$ is on $L^1$.$r$ doesn't move to $L'$.}
    \label{Fig:GL1C2}
\end{minipage}
\hfill
\begin{minipage}[ht]{\linewidth}
\centering
\includegraphics[height=3cm, width=4.5cm]{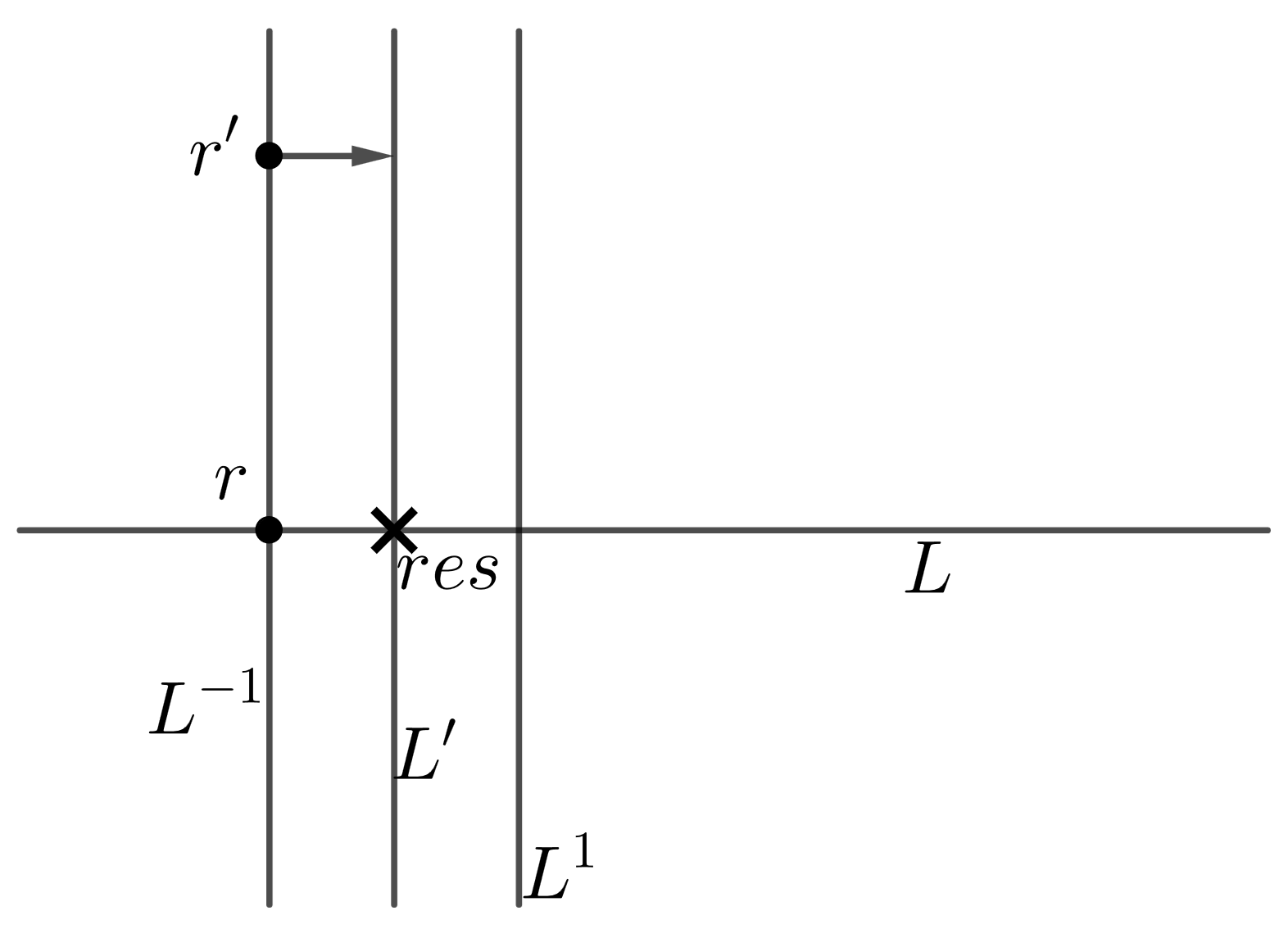}
    \caption{This configuration is not possible.}
    \label{Fig:GL1C3}
\end{minipage}
\end{figure}

Next we have the following lemma.
 \begin{lemma}
     \label{Lemma:InitGAther}
     If no robots terminate, an \textsc{InitGather Configuration} remains an \textsc{InitGather Configuration} after one execution of the algorithm Dynamic Rendezvous. 
 \end{lemma}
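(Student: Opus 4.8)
The plan is to verify, after one synchronous round of \textsc{Dynamic Rendezvous}, the three defining properties of an \textsc{InitGather Configuration}. Property~(1), that the two robots do not end on a common grid line, is precisely the content of Lemma~\ref{lemma:notonSameLine}, so it may be invoked directly; the work therefore reduces to re-establishing properties~(2) and~(3). To this end I would fix the notation introduced just before the lemma: let $r$ be the robot sharing the line $L$ with $res$, let $L'$ be the line through $res$ perpendicular to $L$, and let $L^{1}, L^{-1}$ be the two lines parallel to $L'$ at hop-distance one. By hypothesis $r'$ sits on one of $L^{-1}, L', L^{1}$, and since no robot terminates I may discard every branch of Algorithm~\ref{algo:GathrPhase} in which a robot would step onto $res$ (such a robot co-locates with the resource and terminates).

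The core is a finite case analysis that is the product of two independent choices: the starting line of $r'$ (the three lines $L^{-1}, L', L^{1}$), and the adversary's displacement of $res$ during the round, which either leaves $res$ fixed, moves it parallel to $L$, or moves it perpendicular to $L$ — these exhaust the neighbours of $res$ in the grid. In each case I would record the post-round positions: $r$ moves one hop toward $res$ along $L$ (or remains fixed in the adjacency and corner branches of Algorithm~\ref{algo:GathrPhase}); $r'$ either slides parallel to $L$ onto $L'$ when it starts on $L^{\pm 1}$, or advances toward $res$ along $L'$ when it already lies on $L'$. Writing $res'$ for the new resource position, in each case I would (i) exhibit a robot that again lies on a grid line $\widehat{L}$ with $res'$, establishing property~(2), and (ii) check that the remaining robot is at perpendicular hop-distance at most one from the line through $res'$ perpendicular to $\widehat{L}$, establishing property~(3). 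The qualitative phenomenon to track is that when $res$ moves perpendicular to $L$ the distinguished ``line robot'' switches from $r$ to $r'$: after such a move $r$ no longer shares a line with $res'$, but $r'$, having just reached the line of $res'$ parallel to $L'$, does.

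The main obstacle I anticipate is precisely this role-reversal case, where $res$ steps perpendicular to $L$ while both robots are themselves moving: one must confirm simultaneously that $r'$ lands exactly on the line of $res'$ (so that property~(2) is inherited by $r'$) and that $r$, though displaced along $L$, stays within perpendicular hop-distance one of $res'$ (so that property~(3) survives). A secondary source of friction is the family of ``frozen'' branches of Algorithm~\ref{algo:GathrPhase} — where $res$ is adjacent to $r$, or sits at a corner — in which $r$ does not move; these must each be checked to still leave properties~(2) and~(3) intact, together with the degenerate boundary configurations in which $res$ has fewer neighbours so that some adversarial moves are simply unavailable. I expect a reflection symmetry swapping the two sides of $L'$ to collapse the $L^{1}$ and $L^{-1}$ subcases and keep the bookkeeping manageable.
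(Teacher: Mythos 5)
Your proposal is correct and follows essentially the same route as the paper's proof: property~(1) is delegated to Lemma~\ref{lemma:notonSameLine}, and properties~(2) and~(3) are re-established by a case analysis over the resource's move (fixed, along $L$, along $L'$) combined with the position of $r'$ on $L^{-1}, L', L^{1}$, with the decisive observation being exactly the role reversal you flag --- when $res$ moves along $L'$, the robot $r'$ lands on $L'$ together with $res$ and becomes the new line robot, while $r$, staying on $L$, ends at perpendicular distance one from the new cross line. The paper merely organizes the cases by the resource's move alone (handling all positions of $r'$ uniformly inside each case) rather than as a full product, so no substantive difference remains.
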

 \begin{proof}
Let at the beginning of some round $t$ the configuration is an \textsc{InitGather Configuration}. Let $r$ be the robot on the same line $L$ along with $res$. Now, the other robot $r'$ can be on any one of $L^{-1}, L'$ and $L^1$ at the beginning of round $t$. By the Lemma~\ref{lemma:notonSameLine}, it can be made sure that, $r$ and $r'$ does not move onto same line after completion of round $t$. So, to prove this lemma it is sufficient to show that, after completion of round $t$, one of $r$ and $r'$ is on the same line, say $L_d$, along with $res$ and the perpendicular distance of the other robot to the line $L_d'$ is at most one where, $L_d'$ is the line perpendicular to $L_d$ and passing through $res$. Now based on the movement of $res$ during round $t$, we have three cases as following.
\begin{itemize}
    \item [I.] $res$ moves along $L$.
    \item [II.] $res$ moves along $L'$.
    \item [III.] $res$ does not move.
\end{itemize}

\textit{Case I:} Let $res$ moves along line $L$ during round $t$. Now even if $r$ moves in round $t$, it stays on $L$ according to algorithm~\ref{algo:GathrPhase}. Now, let $r'$ can be on $L'$or, $L^1$ or, $L^{-1}$  at the beginning of round $t$. Irrespective of the position, $r'$ reaches either on $L^1$ or on $L^{-1}$ in the new configuration after completion of the round $t$ (Fig.~\ref{Fig:Lemma1pic1case1} and Fig.~\ref{Fig:lemma1pic2case1}). Thus for this case the configuration remains an \textsc{InitGather Configuration}.

 \begin{figure}[h!]
\begin{minipage}[ht]{0.45\linewidth}
\centering
\includegraphics[width=6 cm, height=4 cm]{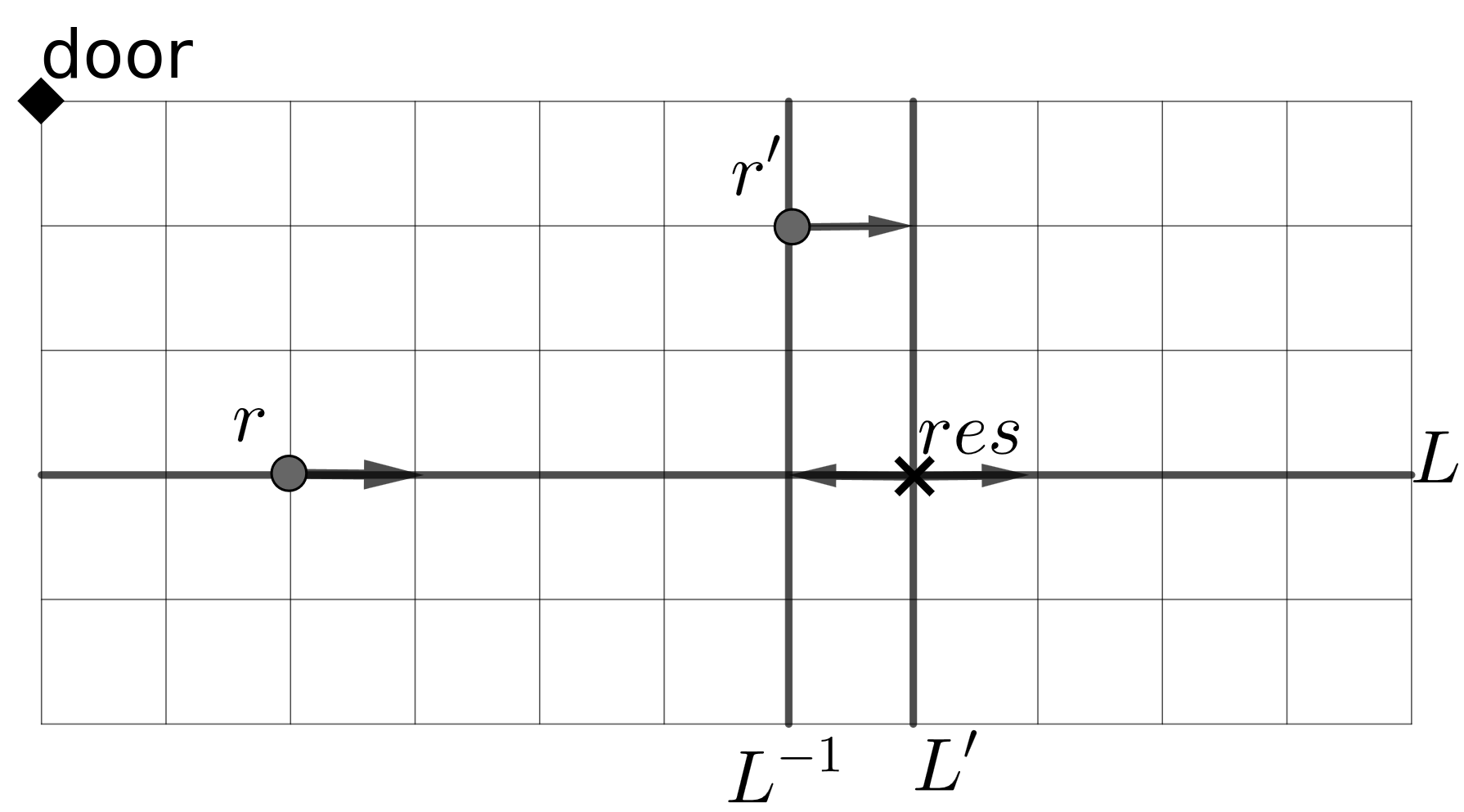}
     \caption{$r'$ moves to $L'$ and $res$ moves to either $L^{-1}$ or $L^1$ during round $t$. $r$ stays on $L$ along with $res$.}
     \label{Fig:Lemma1pic1case1}
\end{minipage}
\hfill
\begin{minipage}[ht]{0.45\linewidth}
\centering
\includegraphics[height=4cm, width=6cm]{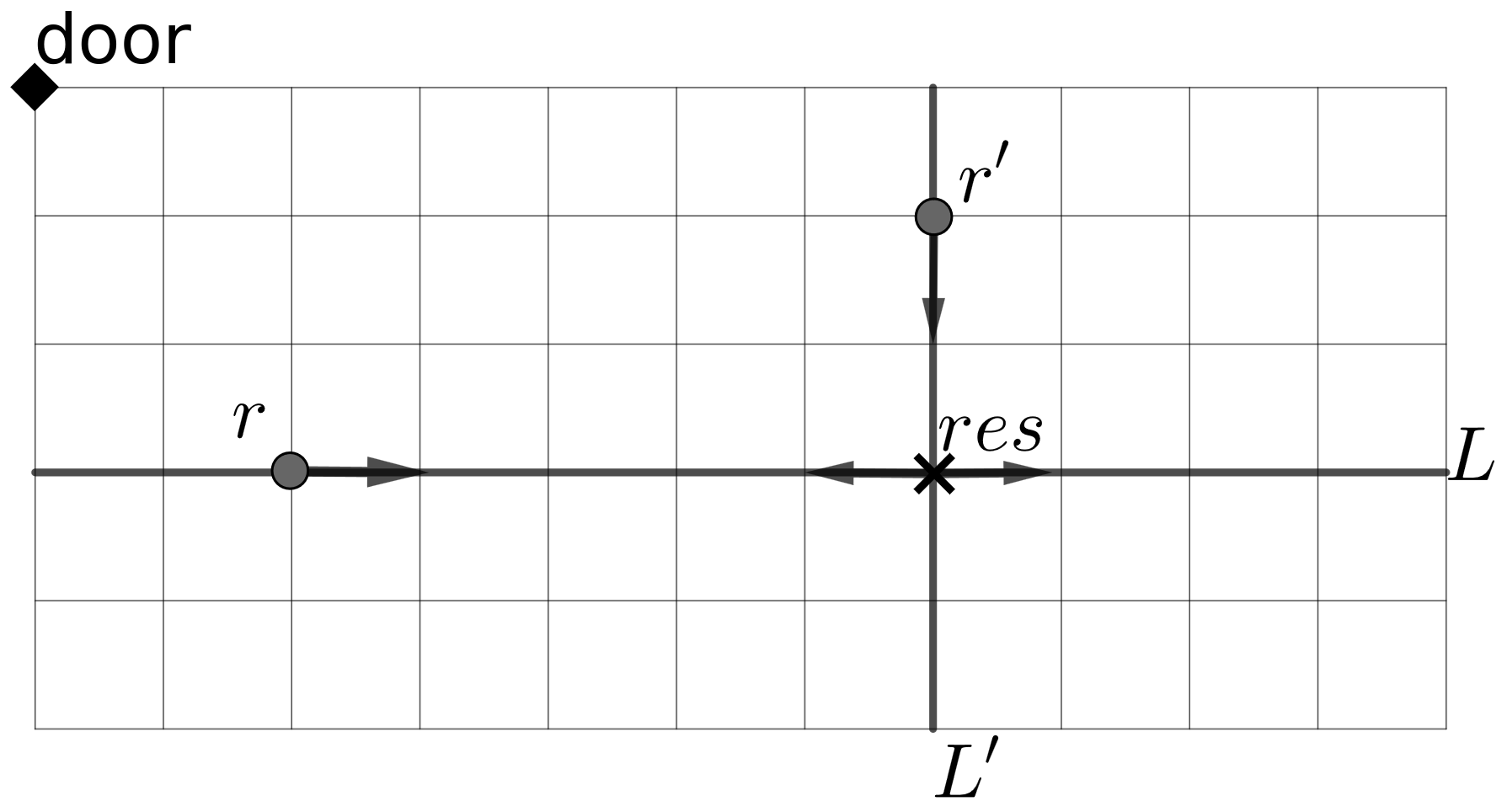}
    \caption{$r'$ stays on $L'$ but $res$ moves to either $L^{-1}$ or $L^1$ during round $t$. $r$ stays on $L$ with $res$.}
    \label{Fig:lemma1pic2case1}
\end{minipage}
\end{figure}

\textit{Case II:} Let res moves along $L'$ during the round $t$. In this case irrespective of the location of $r'$ at the beginning of round $t$ (i.e., $L^{-1}, L', L^1$), it moves onto $L'$ along with $res$ in the new configuration after completion of the round. Let $L''$ be the line perpendicular to $L'$ on which $res$ moves after completion of round $t$. Also note that during round $t$, $r$ stays on $L$ even if it moves. Now since, $L''$ is parallel to $L$ and one hop away from $L$, the perpendicular distance of $r$ to $L''$ becomes one after completion of the round $t$ (Fig.~\ref{Fig:Lemma1pic1case2} and Fig.~\ref{Fig:Lemma1pic2case2}). Thus the new configuration remains an \textsc{initGather Configuration} after completion of round $t$. 

\begin{figure}[h!]
\begin{minipage}[ht]{0.45\linewidth}
\centering
\includegraphics[width=6cm, height=4cm]{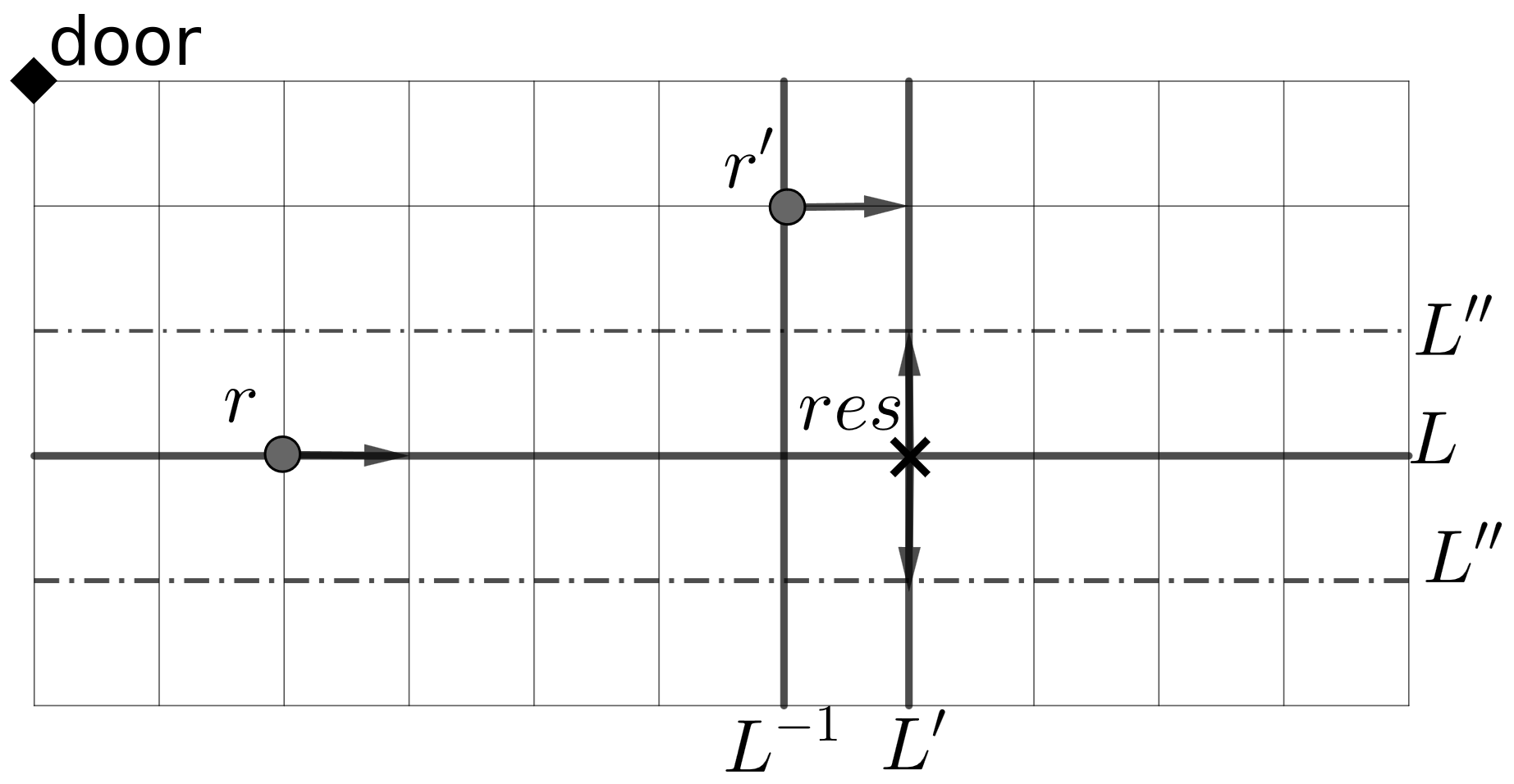}
     \caption{$res$ moves to $L''$ along $L'$, $r$ stays on $L$ and $r'$ moves to $L'$ from  $L^{-1}$.}
     \label{Fig:Lemma1pic1case2}
\end{minipage}
\hfill
\begin{minipage}[ht]{0.45\linewidth}
\centering
\includegraphics[height=4cm, width=6cm]{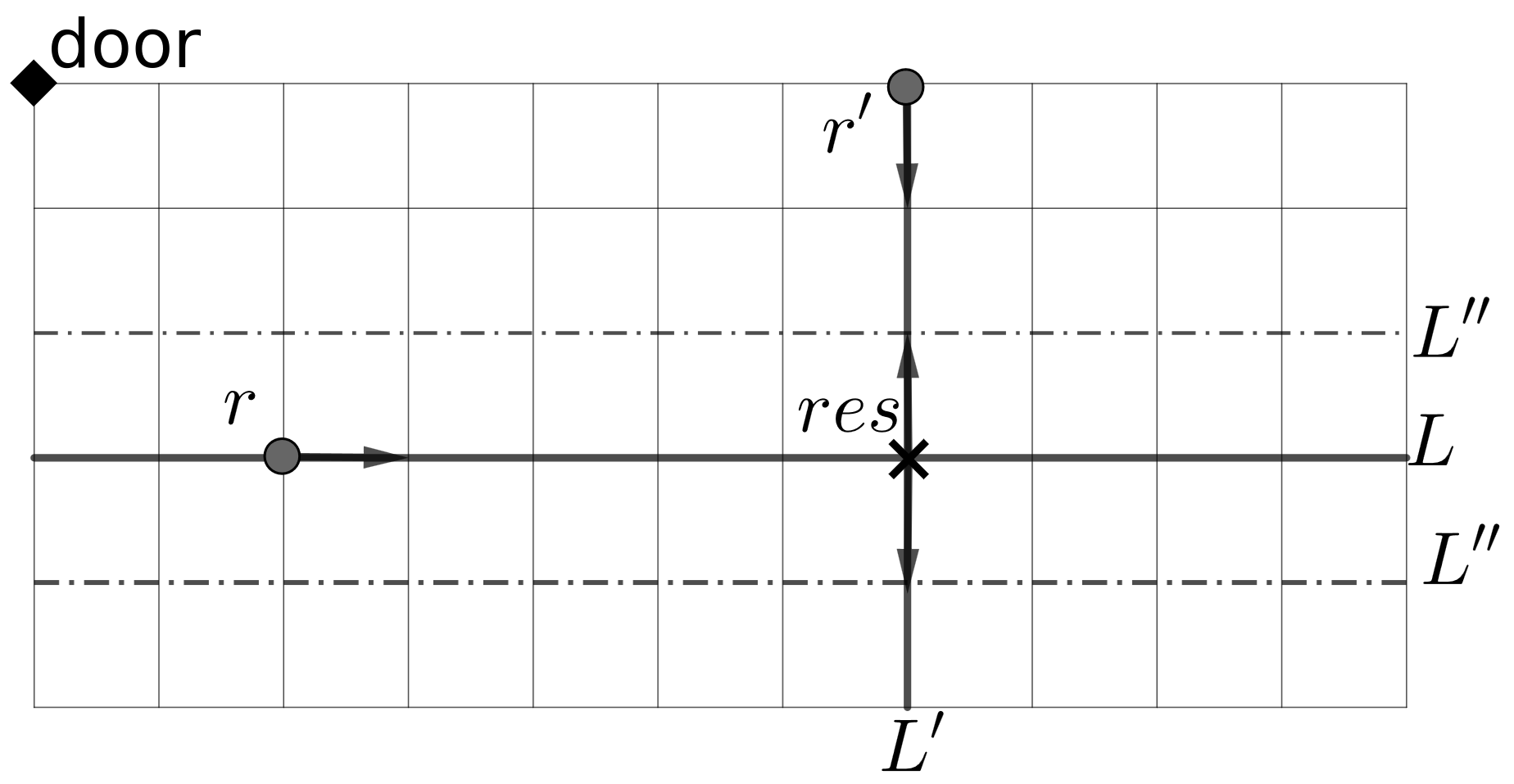}
    \caption{$res$ moves to $L''$ along $L'$, $r$ stays on $L$ and $r'$ stays on $L'$.}
    \label{Fig:Lemma1pic2case2}
\end{minipage}
\end{figure}

\textit{Case III:} Let $res$ does not move during round $t$. Then, irrespective of the location of $r'$ at the beginning of round $t$, it reaches $L'$ along with $res$ after completion of the round. Also, even if $r$ moves it stays on $L$ along with $res$ after completion of round $t$. So, after completion of round $t$, in the new configuration $r$ stays on $L$ along with $res$ and $r'$ stays on $L'$. Hence the new configuration is again an \textsc{InitGather Configuration}.

For all of the above cases, the configuration remains an \textsc{InitGather Configuration} and we have the lemma.
     \qed
 \end{proof}

 Let at the beginning of a particular round during the \textsc{Gather Phase} a robot $r$ is on the same line $L$ along with the resource, $res$. Let us define two lines, firstly, $L_1$ passing through $r$ and perpendicular to $L$, and secondly $L_2$, passing through the vertex of the other robot $r'$ and parallel to $L$. Note that the lines $L_1$ and $L_2$ divides the entire grid into one or more rectangles. The rectangle inside of which the resource $res$ is located is called the "Containing Rectangle" and it is denoted as $R_{Con}$ (Fig.~\ref{fig:RCon}). Observe that, at the beginning of the first round of \textsc{Gather Phase}, $L_1$ is $BD(r)$ and $L_2$ is $BD(r')$ and $R_{Con} = G$.  
 
 \begin{figure}[ht!]
    \centering
    \includegraphics[width=.5\textwidth]{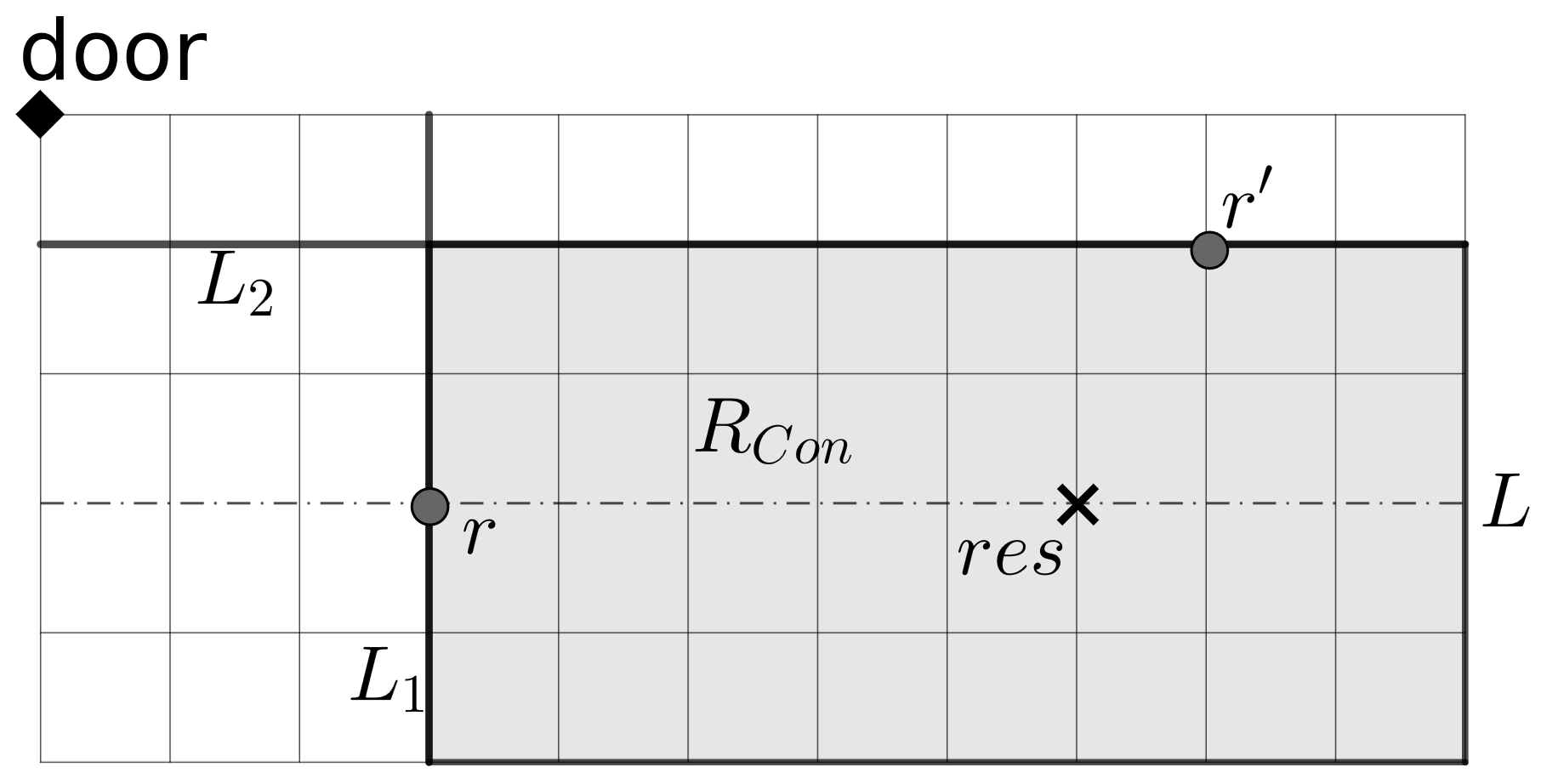}
    \caption{shaded region is $R_{Con}$}
    \label{fig:RCon}
\end{figure}

 \begin{lemma}
     \label{Lemma:resXl1l2}
     The resource $res$ never moves onto $L_1$ or $L_2$ during the \textsc{Gather Phase} without colliding with any robot.
 \end{lemma}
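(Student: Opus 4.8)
The plan is to fix an arbitrary round of the \textsc{Gather Phase} and argue locally, using the fact that by Lemma~\ref{Lemma:InitGAther} the configuration is an \textsc{InitGather Configuration} at the start of every such round, so the same geometric picture is valid each time. First I would set up coordinates with $r$ (the robot sharing line $L$ with $res$) at the origin, $L$ the horizontal axis, and $res$ at $(d,0)$, where $d\ge 1$ is the hop distance of $res$ from $r$ along $L$ (it is at least one, since $d=0$ would make $r$ terminate). Then $L_1$ is the vertical line $x=0$ and $L_2$ is the horizontal line $y=y'$ through $r'=(x',y')$, and the \textsc{InitGather} conditions give $|x'-d|\le 1$ and $y'\ne 0$. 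Since the resource moves at most one hop and the grid has no diagonal edges, $res$ can land on $L_1$ only by stepping to $(0,0)$ (possible only when $d=1$) and on $L_2$ only by stepping to $(d,y')$ (possible only when $|y'|=1$); every other one-hop destination keeps it strictly on the $res$-side of both lines. This reduces the lemma to these two boundary cases.

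For the $L_1$ case ($d=1$, $res$ steps to $(0,0)$), I would note that $r$ occupies $(0,0)$ at the start of the round. By Algorithm~\ref{algo:GathrPhase}, $r$ either does not move (so $res$ lands on the vertex occupied by $r$, a collision) or, in the sole exceptional branch, moves along $L$ towards $res$ to $(1,0)$, in which case $r$ and $res$ exchange the endpoints of the edge $\{(0,0),(1,0)\}$ and collide on that edge. Either way $res$ cannot reach $L_1$ without a collision.

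For the $L_2$ case ($|y'|=1$, say $y'=1$, so $res$ steps to $(d,1)$), the argument splits on whether $r'$ lies on $L'$. If $x'=d$, then $r'$ already sits on the target vertex $(d,1)$; since $r'$ is then on a (vertical) line with $res$ and adjacent to it, Algorithm~\ref{algo:GathrPhase} keeps $r'$ fixed unless $res$ is at a corner with the other robot adjacent to $res$, so $r'$ either stays (and $res$ lands on it) or moves onto $res$ (and the two exchange the edge $\{(d,0),(d,1)\}$); both are collisions. If instead $x'=d\pm 1$, then $r'$ is off both lines through $res$, so it executes the last branch of Algorithm~\ref{algo:GathrPhase} and moves parallel to $L$ towards $res$, i.e.\ horizontally to $(d,1)$ --- exactly the vertex $res$ is moving to, forcing a collision there.

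Combining the two cases shows that in every round the resource can touch $L_1$ or $L_2$ only through a collision with $r$ or $r'$, and since the \textsc{InitGather} invariant makes the same picture valid at the start of each round, the statement holds throughout the phase. I expect the main obstacle to be the careful, exhaustive treatment of $r'$'s move in the $L_2$ case --- in particular verifying that both the ``move parallel to $L$ towards $res$'' rule and the corner exception land $r'$ on the contested vertex or on the opposite endpoint of the traversed edge --- together with invoking the edge-crossing collision rule of the model, rather than only vertex collisions, so that the swap situations are genuinely ruled out.
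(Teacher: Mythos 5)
Your proof is correct and takes essentially the same route as the paper's: reduce to the only two one-hop possibilities (the resource stepping along $L$ onto $r$'s vertex on $L_1$, or along $L'$ onto $r'$'s row $L_2$) and show that Algorithm~\ref{algo:GathrPhase} forces a vertex or edge collision in each case. You are in fact slightly more thorough than the paper, which glosses over the corner-exception branch in which the adjacent robot moves onto $res$ and the two swap endpoints of an edge --- your explicit appeal to the model's edge-collision rule cleanly closes that subcase.
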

 \begin{proof}
 Initially at the beginning of \textsc{Gather Phase}, resource $res$ must not be on $BD(r)$ or $BD(r')$. Otherwise, $res$ must have crossed or moved onto both $PD(r)$ and $PD(r')$ during the \textsc{Boundary Phase}, which is not possible due to claim (3). So, at the beginning of \textsc{Gather Phase}, $res$ is not on any of $L_1$ or $L_2$.

If possible let during some round $t$ of \textsc{Gather Phase}, $res$ moves onto either $L_1$ or $L_2$ for the first time, without colliding with any robots. Let, $res$ is on the same line  $L$ with $r$ at the beginning of round $t$. Now, $r$ must be on the vertex $L_1 \cap L$ and $r'$ is on any of the three vertices, $L_2 \cap L^{-1}, L_2 \cap L'$ and $L_2 \cap L^1$ at the beginning of round $t$ (Fig.~\ref{fig:l3}). Now we have two cases,

\textit{Case I:} Let $res$ moves onto $L_1$ during round $t$. For this to happen, $res$ must be on the adjacent vertex of $L_1 \cap L$ (i.e, location of $r$) at the beginning of round $t$ and it must move along $L$ during round $t$.So, $res$ reaches $L_1 \cap L$ after completion of the round. Now since $r$ is adjacent to $res$ on $L$ at the beginning of round $t$, it does not move during round $t$ (Algorithm~\ref{algo:GathrPhase}) and stays on $L_1 \cap L$ after completion of the round. So after completion of round $t$, $res$ collides with $r$ contrary to our assumption.

\textit{Case II:} Let $res$ moves onto $L_2$ during round $t$. For this to happen, $res$ must move along $L'$ and reaches $L' \cap L_2$ after completion of the round. Now, irrespective of the position of $r'$ at the beginning of round $t$, it reaches $L_2 \cap L'$ (Algorithm~\ref{algo:GathrPhase}) after completion of the round (Fig.~\ref{fig:l3}). Since both $res$ and $r'$ reaches $L_2 \cap L'$ after completion of round $t$, they collides contradicting our assumption. 

Since in both cases we reach contradiction our assumption must be incorrect. So, $res$ never moves onto $L_1$ or $L_2$ without colliding with a robot during \textsc{Gather Phase}. 
     \qed
 \end{proof}
 
  \begin{figure}[ht!]
    \centering
    \includegraphics[width=.6\textwidth]{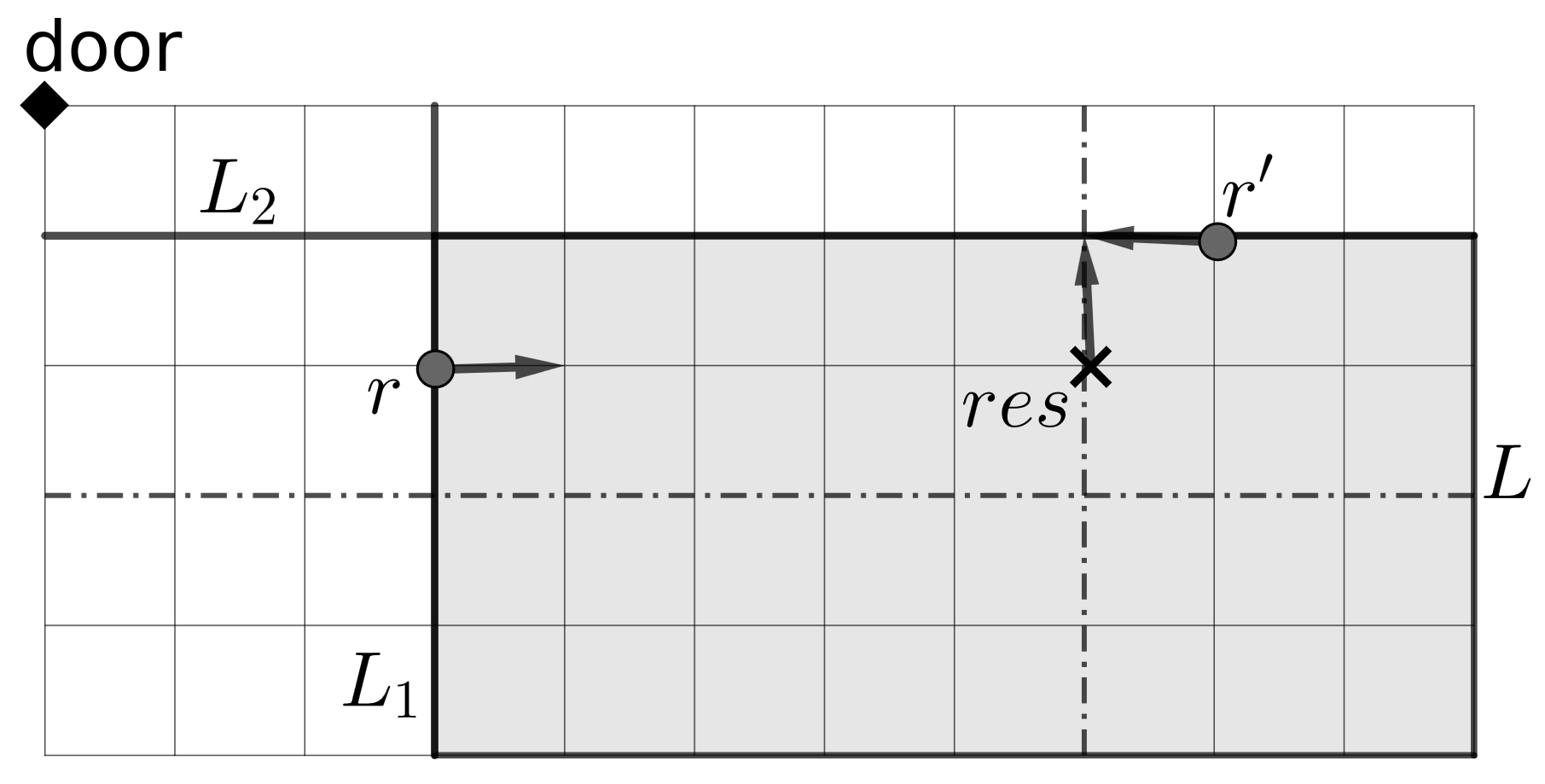}
    \caption{$res$ collides with $r'$ if moves onto $L_2$}
    \label{fig:l3}
\end{figure}
 \begin{corollary}
 The resource $res$ never moves outside $R_{Con}$ during the \textsc{Gather Phase} without colliding with a robot.
 \end{corollary}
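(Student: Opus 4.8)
The plan is to derive this corollary directly from Lemma~\ref{Lemma:resXl1l2} by exploiting two facts: that $R_{Con}$ is a rectangular region whose entire boundary consists of segments of $L_1$, of $L_2$, and of the boundary of the grid $G$; and that $res$ moves at most one hop in any round. First I would fix a round $t$ of the \textsc{Gather Phase} and consider the region $R_{Con}$ as determined by the positions of $r$ and $r'$ at the beginning of that round. Since $L_1$ is the line through $r$ perpendicular to $L$ and $L_2$ is the line through $r'$ parallel to $L$, the rectangle $R_{Con}$ containing $res$ is delimited on each side either by a portion of $L_1$, by a portion of $L_2$, or by a side of the grid $G$. By Lemma~\ref{Lemma:resXl1l2} the resource is not on $L_1$ or $L_2$ at the start of round $t$, and it is trivially not outside $G$, so $res$ lies strictly inside $R_{Con}$.

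The crucial observation I would make is that a single hop along a grid edge can never carry a vertex from strictly one side of a grid line to the strictly opposite side: the two endpoints of any edge lie in adjacent columns or adjacent rows, so to pass across the vertical line $L_1$ or the horizontal line $L_2$ the moving vertex must first land exactly on that line. Consequently, were $res$ to exit $R_{Con}$ during round $t$, its one-hop move would have to either cross the boundary of $G$—which is impossible, since $G$ has no vertices beyond its boundary—or step onto $L_1$ or $L_2$.

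Finally I would invoke Lemma~\ref{Lemma:resXl1l2} once more: $res$ cannot move onto $L_1$ or $L_2$ without colliding with a robot. Combining this with the previous step shows that any move taking $res$ out of $R_{Con}$ forces a collision, which is exactly the assertion of the corollary. The only point requiring real care is the geometric observation in the second paragraph, namely that one-hop movement cannot jump across a bounding line but must occupy a vertex on it; once this is granted the corollary follows immediately from the lemma, so I expect the write-up to be short.
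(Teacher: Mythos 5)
There is a genuine gap: you treat $L_1$ and $L_2$ as static grid lines during round $t$, but these lines are defined by the positions of $r$ and $r'$, and under the fully synchronous scheduler the robots move in the \emph{same} round as the resource. Your one-hop observation therefore only shows that the resource cannot leave the rectangle determined by the robots' positions at the \emph{beginning} of round $t$. What the corollary must deliver (and what is later used in Lemma~\ref{Lemma:decreaseArea}, Lemma~\ref{Lemma:areaFour}, and the final $2\times 2$ endgame) is the per-round invariant that the resource ends the round strictly inside the rectangle determined by the robots' \emph{new} positions. Since the robots only move toward the resource, the new $R_{Con}$ is contained in the old one, so your conclusion is strictly weaker than the required one. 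Concretely, the case you never address is when a bounding line shifts: if $r'$ lies on $L'$ (the line through $res$ perpendicular to $L$) and moves toward $res$, then $L_2$ moves inward during the round, and a priori the resource could end up on, or beyond, the \emph{new} $L_2$ without ever stepping onto the \emph{old} $L_2$ --- a situation about which Lemma~\ref{Lemma:resXl1l2} (which concerns the lines as positioned at the start of the round) says nothing. The same issue arises for $L_1$, which shifts whenever $r$ advances along $L$.

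This shifting-line analysis is exactly what the paper's proof supplies: it notes that an escape would have to be a crossing ``without moving onto'' the lines (that much is Lemma~\ref{Lemma:resXl1l2}), then argues that $res$ cannot cross $L_1$ without colliding with $r$ (both lie on $L$, and $r$ only moves toward $res$), that $res$ cannot cross $L_2$ while $r'$ is on the same line with $res$ (again a collision), and that in the only remaining case $r'$ moves \emph{along} $L_2$ by Algorithm~\ref{algo:GathrPhase}, so $L_2$ does not shift --- and only then does your static-line argument (crossing forces landing on the line, contradicting Lemma~\ref{Lemma:resXl1l2}) apply. A secondary, related flaw: your opening claim that ``by Lemma~\ref{Lemma:resXl1l2} the resource is not on $L_1$ or $L_2$ at the start of round $t$'' is circular, since being off the old lines at the end of round $t-1$ does not by itself imply being off the (possibly shifted) lines at the start of round $t$; that implication is precisely the invariant this corollary exists to establish. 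To repair your write-up you must add the cases in which $L_1$ or $L_2$ shifts and resolve them through the collision rules of the model, as the paper does.
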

 \begin{proof}
    If $res$ moves out of $R_{Con}$ then it must cross either $L_1$ or $L_2$ without moving onto them.
    
     Now the resource, $res$ never crosses $L_1$ without colliding with $r$. Also, for the same reason, $res$ never crosses $L_2$ while $r'$ is also on the same line along with $res$.  So let us assume $r'$ is not on the same line along with $res$  at the beginning of some round $t$ during which $res$ crosses $L_2$. So, during round $t$ $r'$ must move along $L_2$ . So, the line $L_2$ does not shift after the completion of round $t$. This implies $res$ must move onto $L_2$ to cross it which is not possible due to Lemma~\ref{Lemma:resXl1l2}. Hence $res$ never moves out of $R_{Con}$.   \qed
 \end{proof}

 Let at the beginning of the first round of \textsc{Gather Phase}, $R_{Con}$ be a $m_1 \times n_1$ grid. Let the height and width of $R_{Con}$ be $m_1$ and $n_1$ respectively where, both $m_1 >2$ and $n_1 >2$. We will prove that within $T_f+1$ rounds either both $m_1$ and $n_1$ decrease or one of $m_1$ and $n_1$ decreases and the other one stays the same.
 \begin{lemma}
     \label{Lemma:decreaseArea}
     During the \textsc{Gather Phase}, if both height and width of the $R_{Con}$ be more than two and none of the robot terminates then, within $T_f+1$ rounds either both height and width of $R_{Con}$ decreases or one of height or width decreases and the other remains same.  
 \end{lemma}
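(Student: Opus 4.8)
The plan is to treat the two side lengths of $R_{Con}$ as a monotone progress measure and to show they cannot both stall for more than $T_f+1$ consecutive rounds. Throughout, by the \emph{width} and \emph{height} of $R_{Con}$ I mean its horizontal and vertical extent as a sub-rectangle of the grid; these are pinned down by the two lines $L_1$ and $L_2$, which are in turn pinned down by the two robot positions, and the corner of $R_{Con}$ opposite $r$ and $r'$ is always the same grid corner. I would first record a structural remark: because both the height and the width exceed two, $res$ is not located at any grid corner, so the two ``$res$ is at a corner'' branches of Algorithm~\ref{algo:GathrPhase} never fire during the rounds under consideration, and a robot that sees $res$ on an adjacent vertex therefore simply waits.

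Next I would establish monotonicity, namely that no side length ever increases. By the corollary following Lemma~\ref{Lemma:resXl1l2}, $res$ never leaves $R_{Con}$, so the robot lying on the line $L$ together with $res$ always advances \emph{towards} the far corner when it moves, pushing $L_1$ strictly inward, while the other robot, when it moves perpendicular to $L$, pushes $L_2$ strictly inward; neither line can retreat. Hence in every round each of the height and width either stays the same or strictly decreases, so it suffices to rule out a run of $T_f+1$ rounds in which neither strictly decreases.

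I would then characterise such a \emph{no-progress} round, i.e.\ one in which neither $L_1$ nor $L_2$ moves. Inspecting Algorithm~\ref{algo:GathrPhase}, $L_1$ fails to move exactly when $res$ occupies the vertex of $L$ adjacent to $r$ (so $r$ waits), and $L_2$ fails to move exactly when $r'$ is either still off $L'$ (in which case it merely steps onto $L'$, leaving $L_2$ fixed for that single round) or is on $L'$ with $res$ on an adjacent vertex (so $r'$ waits). The ``off $L'$'' situation is transient: while $res$ is stationary it resolves in one round, after which $r'$ strictly moves $L_2$ inward unless it has itself become adjacent to $res$. Consequently a run of no-progress rounds must settle into the \emph{trapped} configuration in which $res$ is simultaneously adjacent to both robots and neither robot moves.

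Finally I would bound the trapped run. There the only entity that can change position is $res$, and since no robot terminates it is never co-located with a robot, hence it is alone on its vertex and by the definition of $T_f$ must vacate it within $T_f$ rounds. Because the height and width exceed two, $res$ is away from the far corner and always has a free neighbour inside $R_{Con}$ in one of the two directions away from the flanking robots; moving there keeps it in $R_{Con}$ (again by the corollary) and, while it may interchange the roles of $r$ and $r'$, it leaves the rectangle $R_{Con}$ itself unchanged since both robots stayed put. The decisive point is that the escape opens hop-distance two between $res$ and whichever robot now shares a line with it, so in the very next round that robot advances and strictly moves its line inward. Combining the pieces, the longest stretch without a strict decrease is the trapped stretch, at most $T_f$ rounds before $res$ is forced to move, plus the single round the chasing robot needs to react; thus within $T_f+1$ rounds at least one of the height and width strictly decreases while, by monotonicity, neither increases, which is the claim. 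The step I expect to be most delicate is the bookkeeping across the role-swap at the escape move---verifying that $R_{Con}$ really is the same rectangle before and after, that $res$ always has a legal escape guaranteed by the dimensions exceeding two, and that the $T_f$ budget is charged so the total comes out to $T_f+1$ rather than one more.
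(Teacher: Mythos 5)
There is a genuine gap, and it sits exactly where you flagged it: the role-swap bookkeeping. Your accounting assumes that every maximal no-progress run consists of (at most one transient round, then) the both-adjacent trapped configuration, from which the escape move of $res$ always opens hop-distance two to the robot now sharing a line with it. But $res$ controls the timing of its own move, and the damaging move is a \emph{perpendicular} step along $L'$ taken in the very round in which $r'$ steps from $L^{\pm 1}$ onto $L'$: after that round $res$ sits adjacent to $r'$ on $L'$ (not at distance two), $r$ waited because it was adjacent, $r'$ moved only parallel to $L$, so neither $L_1$ nor $L_2$ shifted --- and, crucially, $res$ has just arrived at a \emph{new} vertex, so its $T_f$ budget restarts. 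The run then continues: one more no-progress round while $r$ repositions parallel to the new shared line, up to $T_f$ further trapped rounds, the escape round, and only then the reaction round in which a line moves. This chain makes a no-progress stretch of length strictly greater than $T_f+1$, so your final charge of ``$T_f$ trapped rounds plus one reaction round'' does not close. This scenario is precisely the paper's \textit{Case II(b)}, where $r$ is adjacent to $res$ and $res$ moves along $L'$; the paper handles it by conceding a second full waiting period and concludes a bound of $2T_f+1$ rounds. (Indeed the lemma's stated bound of $T_f+1$ is not what the paper's own proof establishes --- the discussion immediately after the lemma reverts to ``in every $2T_f+1$ rounds'' --- so the printed constant is itself inconsistent, and your scenario-free derivation of $T_f+1$ should have been a warning sign. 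The asymptotic conclusion $O(T_f\times(m+n))$ survives either way, since only the constant changes.)

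Two smaller points. First, your opening structural remark is false as stated: when both dimensions of $R_{Con}$ exceed two, $res$ \emph{can} occupy a grid corner, namely the corner of $R_{Con}$ diagonally opposite $L_1\cap L_2$, which is a corner of $G$; $res$ is only barred from $L_1$ and $L_2$ (Lemma~\ref{Lemma:resXl1l2}), not from the far boundaries. The corner branches of Algorithm~\ref{algo:GathrPhase} still do not fire, but for the reason the paper gives: they additionally require a robot adjacent to $res$, and adjacency to a corner-located $res$ would force one side of $R_{Con}$ to have length two, contradicting the hypothesis. Second, your monotonicity step and the reduction to bounding maximal no-progress runs are sound and essentially mirror the paper's \textit{Case I} versus \textit{Case II} decomposition; with the nested stall of \textit{Case II(b)} added to your case analysis, your argument would recover the paper's proof at the corrected bound $2T_f+1$, but not the lemma's literal $T_f+1$.
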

 \begin{proof}
     Let at the beginning of some round $t$ during the \textsc{Gather Phase}, $r$ be a robot on the line $L$ along with the resource, $res$. The lines $L_1$ (line passing through $r$ and perpendicular to $L$)and $L_2$ (line passing through the other robot, $r'$ and parallel to $L$) and the boundaries that do not contain the door vertex forms a rectangle $R_{Con}$.  We have proved that $res$ always remains contained within $R_{Con} \setminus \{L_1 \cup \L_2\}$ and never moves out of it if none of the robots are terminated. Let the dimension of $R_{Con}$ at the beginning of round $t$ be $m_1 \times n_1$ where both $m_1$ and $n_1$ are greater than two. Thus  even if $res$ is at a corner at the beginning of round $t$, both $r$ and $r'$ are not adjacent to $res$. Thus during round $t$, no robot moves to the location of $res$. 
     
     \textit{Case I:}
     Let at the beginning of round $t$, $r$ is not adjacent to $res$. Also, without loss of generality let the length of the side of $R_{Con}$, which is parallel to $L$ at the beginning of round $t$ is the width of $R_{Con}$. Now according to the algorithm, $r$ moves along $L$ towards $res$ i.e towards the direction of the interior of $R_{Con}$. Hence $L_1$ shifts towards the interior of $R_{Con}$. So the width of $R_{Con}$ decreases during round $t$. Now, if $r'$ is not on the line $L'$ (line passing through $res$ and perpendicular to $L$) or, adjacent to $res$ on the line $L'$ at the beginning of round $t$ then, $r'$ moves along $L_2$ (Fig.~\ref{Fig:Lemma4c1pic1}) or does not move at all. In both of these cases, the height of $R_{Con}$ remains the same after the completion of the round. on the other hand if at the beginning of round $t$, $r'$ is on $L'$ along with $res$ and not adjacent to $r'$  then, $r'$ moves along $L'$ towards the direction of $res$ (Fig.~\ref{Fig:Lemma4c1pic2}). Note that in this case $L_2$ also shifts towards the interior of $R_{Con}$ and decreases the height of $R_{Con}$ after completion of round $t$.
     
     \begin{figure}[h!]
\begin{minipage}[ht]{0.45\linewidth}
\centering
\includegraphics[width=5.5cm, height=3.5cm]{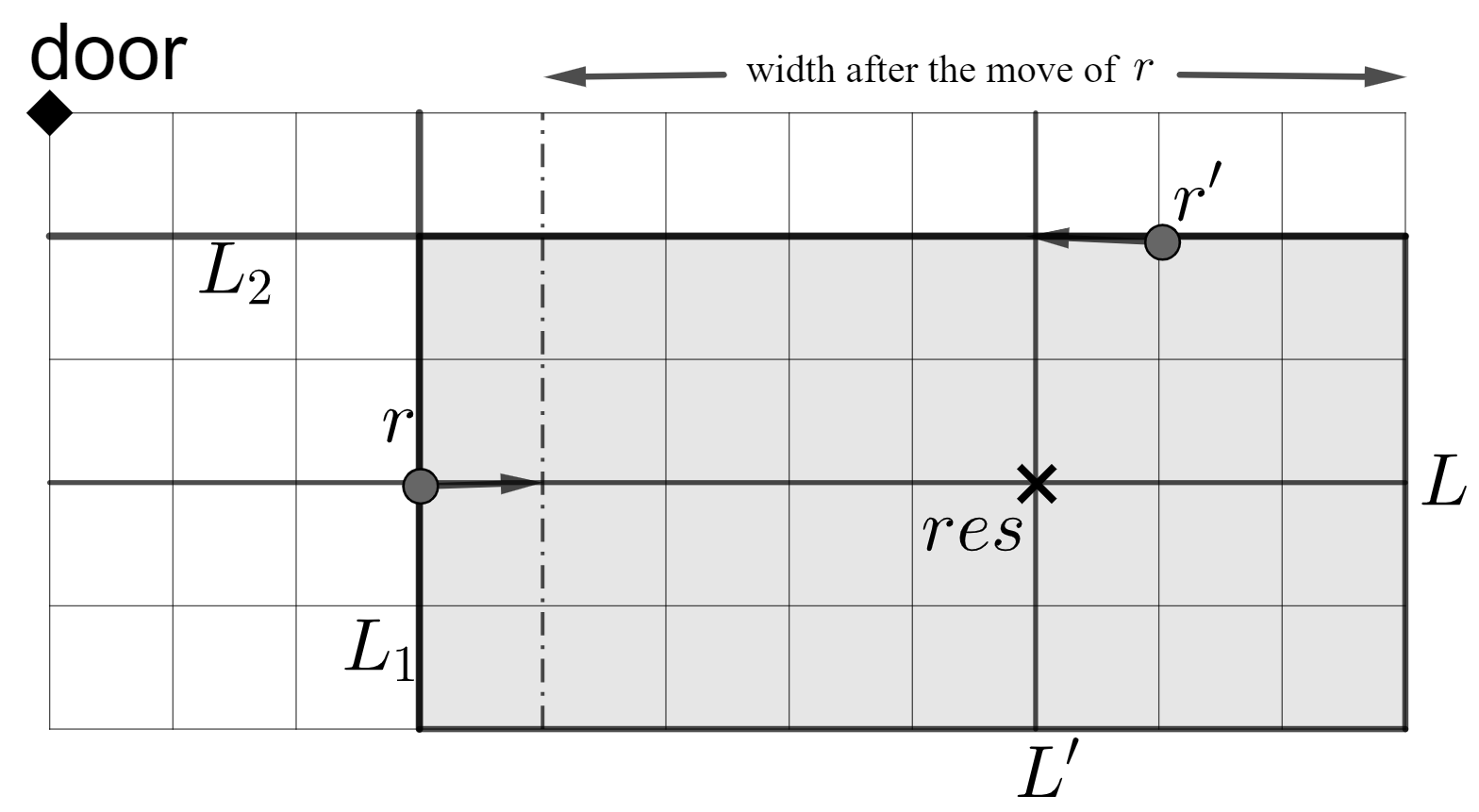}
     \caption{Only width of $R_{Con}$ decreases and height remains same.}
     \label{Fig:Lemma4c1pic1}
\end{minipage}
\hfill
\begin{minipage}[ht]{0.45\linewidth}
\centering
\includegraphics[height=3.5cm, width=6cm]{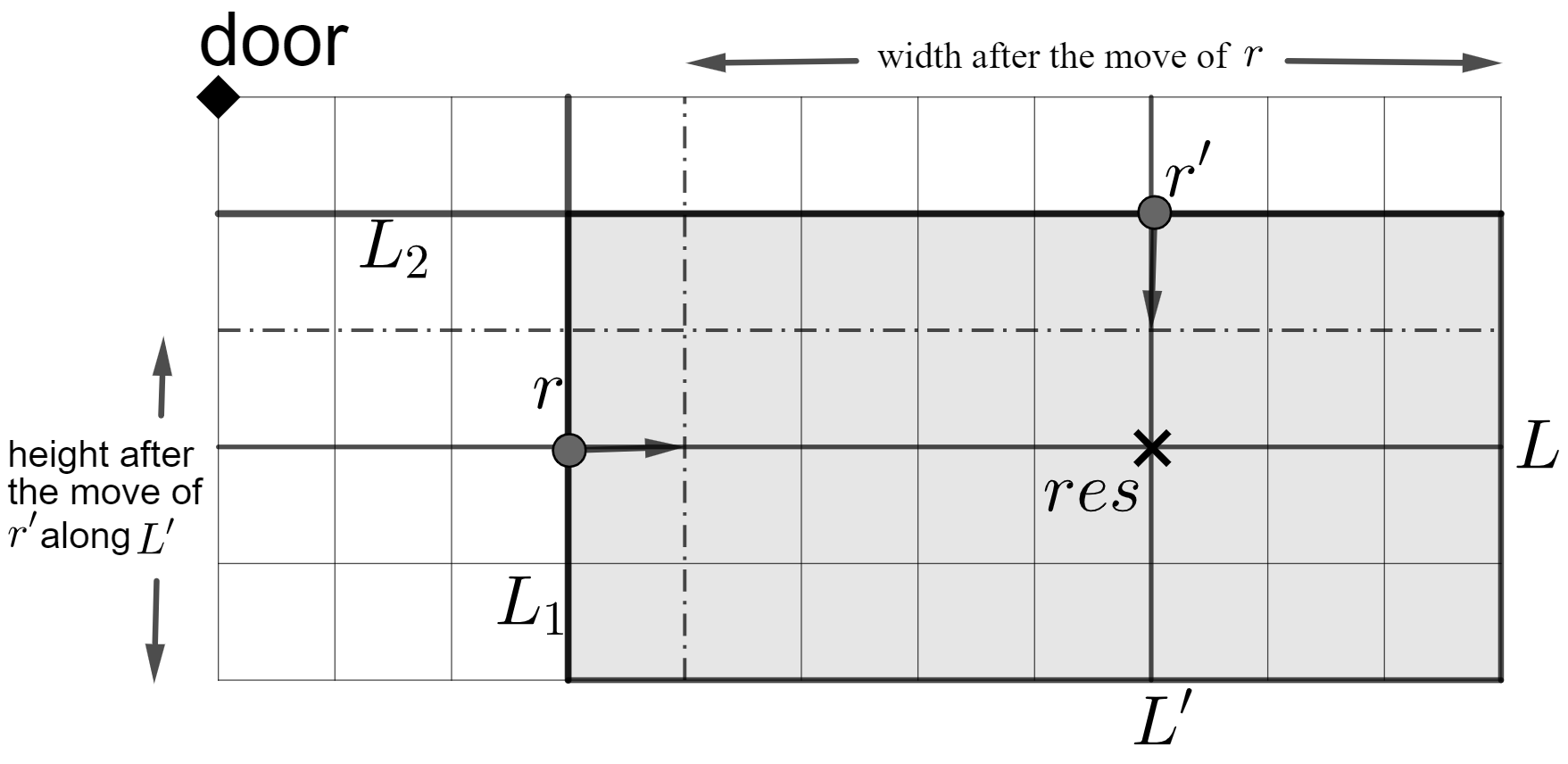}
    \caption{both height and width of $R_{Con}$ decreases.}
    \label{Fig:Lemma4c1pic2}
\end{minipage}
\end{figure}
     
     So we have shown that if $r$, a robot on a line $L$ with $res$ is not adjacent to $res$ then either both height and width decrease or only width decreases in one round.
     
     \textit{Case II:} Let,  $res$ is adjacent to $r$ on $L$ at the beginning of round $t$ then, $r$ will not move along $L$. It is assumed that $res$ will not stay at the same location for more than $T_f$ consecutive rounds. Note that since both height and width are more than two, $res$ gets an empty vertex to move. Now in the worst case during the round $t+T_f$, $res$ must have moved either along $L$ or along $L'$. Note that $res$ can not move towards $r$ along $L$ as it would end up colliding with $r$.
     
     \textit{Case II(a):} Let during the round $t+T_f$, $res$ moves along $L$ opposite to $r$ then, at the beginning of round $t+T_f+1$,  $r$ and $res$ are not adjacent along $L$. Hence during this round, either only the width of $R_{Con}$ decreases and height remains the same, or both height and width of $R_{Con}$  decrease by a similar argument as in \textit{case I}. 
     
     \textit{Case II(b):} Now let us consider the case where $r$ is adjacent to $res$ on $L$ at the beginning of round $t+T_f$ but $res$ moves perpendicular to $L$ i.e., along $L'$ during the round $t+T_f$.
    
     If at the beginning of round $t+T_f$, $r'$ was on $L'$ then, by the same argument as in \textit{Case I} and \textit{Case II(a)} we can conclude that in the worst case, after completion of round $t+2T_f+1$, the height of $R_{Con}$ must decrease while width either remains same or also decreases. 
    
    Let us now consider $r'$ is not on $L'$ at the beginning of round $t+T_f$. In this case, after completion of round $t+T_f$, $r'$ and $res$ must be on the line $L'$ and hence according to the same argument as above cases in the worst during round $t+2T_f+1$ either height and width of $R_{Con}$ both decrease or height decreases while the width remains same.
      \qed
 \end{proof}

 By Lemma~\ref{Lemma:InitGAther} we can conclude that, after one execution of the \textsc{Gather Phase}, if no robot is terminated then in the next round, \textsc{Gather Phase} will be executed again.  
Also, from the Lemma~\ref{Lemma:decreaseArea} it is evident that if the dimension of the $R_{Con}$ is $m_1 \times n_1$ where both $m_1 >2$ and $n_1> 2$ then, in the worst case in every $2T_f+1$ round, the height or the width of the configuration decreases and none of them ever increase. So within $O(T_f \times (m+n))$ there will be a round (say $t_0$) when either the height or the width of $R_{Con}$ becomes two. Without loss of generality let the dimension of $R_{Con}$ be $ m_1 \times 2$, at the beginning of round $t_0$, where $m_1 > 2$. Now we claim the following lemma.
\begin{lemma}
    \label{Lemma:areaFour}
    If the dimension of $R_{Con}$ is $ m_1 \times 2$ (resp. $2 \times n_1$) and no robot terminates then, within $(T_f+1)(m_1-1)$ (resp. $(T_f+1)(n_1-1)$) rounds dimension of $R_{Con}$ becomes $2 \times 2$.
\end{lemma}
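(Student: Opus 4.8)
The plan is to prove the statement for the case $m_1 \times 2$; the case $2 \times n_1$ follows verbatim after interchanging the two grid directions. First I would pin down the geometry that the hypothesis ``width two'' forces. Let $r$ be the robot lying on the line $L$ with $res$, so that $L_1$ (through $r$, perpendicular to $L$) together with $L_2$ (through $r'$, parallel to $L$) and the two boundaries avoiding the door bound $R_{Con}$. Since the side of $R_{Con}$ parallel to $L$ has length two and, by Lemma~\ref{Lemma:resXl1l2}, $res$ can never sit on $L_1$, the resource must occupy the single remaining column of that side, namely the grid boundary one hop from $r$ along $L$. Hence $res$ is pinned adjacent to $r$ on $L$, against the far boundary, throughout this regime, and by the corollary to Lemma~\ref{Lemma:resXl1l2} it stays inside $R_{Con}$, i.e.\ inside the two-wide corridor whose long axis is perpendicular to $L$.

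Next I would show that the only way $R_{Con}$ can evolve is that its long dimension (the height $m_1$) shrinks, one unit at a time, and that each such unit costs at most $T_f+1$ rounds. The argument mirrors \textit{Case I} of Lemma~\ref{lemma:crossesPDR}, with $r$ playing the role of the robot jammed against a corner and $r'$ the role of the robot that must walk the corridor. Because $res$ is adjacent to $r$ and (generically) not at a corner, $r$ does not move by Algorithm~\ref{algo:GathrPhase}, so it cannot shorten the corridor; progress is made by $r'$. If $res$ is not adjacent to $r'$ along the corridor, then $r'$ advances one vertex toward $res$, pushing $L_2$ inward and decreasing $m_1$ by one in a single round. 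If instead $res$ is adjacent to $r'$ as well, neither robot shortens the corridor; but $res$ cannot remain fixed for more than $T_f$ rounds, and its confinement (it may move neither onto $L_1$, by Lemma~\ref{Lemma:resXl1l2}, nor onto $L_2$, nor across the far boundary) forces its next move to be along the corridor, after which $r'$ advances in the following round. Throughout, Lemma~\ref{Lemma:InitGAther} guarantees the configuration remains an \textsc{InitGather Configuration}, so these case distinctions are exhaustive.

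Finally I would count: carrying the corridor from height $m_1$ down to height two requires $m_1-2$ unit decrements, each completed within $T_f+1$ rounds by the previous paragraph, and allowing one extra block of $T_f+1$ rounds for the at-most-one transitional round yields the stated bound $(T_f+1)(m_1-1)$. The main obstacle I anticipate is precisely this bookkeeping: when $res$ is forced to move along the corridor, the labels $r$ and $r'$ can swap and $L_1,L_2$ must be recomputed, and when $res$ reaches the far corner one must check the interaction with the corner rule of Algorithm~\ref{algo:GathrPhase}. The crux is therefore to verify that a non-decrementing (transitional) round never recurs without an intervening unit of progress, so that the waiting time $T_f$ is charged at most once per decrement and the total never exceeds $(T_f+1)(m_1-1)$. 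The symmetric claim for $2\times n_1$ is identical after swapping the two axes.
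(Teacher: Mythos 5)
Your proof rests on a geometric premise that the hypothesis does not supply and that is in general false: you assert that ``the side of $R_{Con}$ parallel to $L$ has length two,'' and from this you conclude that $res$ is pinned adjacent to $r$ against the far boundary, with $r$ frozen and all progress made by $r'$ walking the long corridor. But the hypothesis ``$R_{Con}$ has dimension $m_1\times 2$'' does not orient the short side relative to $L$; with your labelling ($r$ is \emph{the} robot on the line $L$ with $res$), the side of length two can just as well be \emph{perpendicular} to $L$. That perpendicular case is exactly the one the paper's proof treats, and it is the generic, stable one: there the two long lines of the corridor are a grid boundary (which, by Lemma~\ref{Lemma:resXl1l2}, must carry both $res$ and the chasing robot $r$, since $res$ can sit on neither $L_1$ nor $L_2$) and the adjacent parallel line $L_2$ carrying $r'$. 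In that configuration your mechanics are inverted: $res$ is \emph{not} adjacent to $r$, it is $r$ that advances along $L$ (shifting $L_1$ inward and decrementing $m_1$), $r'$ merely shadows $res$ along $L_2$, and the step your argument is missing is the paper's collision argument --- since $r'$ stays within perpendicular distance one of $L'$ on the adjacent line, any perpendicular move of $res$ forces a collision with $r'$ and hence termination, so $res$ can only flee along $L$ or wait at most $T_f$ rounds.

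Your pinned configuration does arise, but it is a transient sub-case, not an invariant: your claim that it persists ``throughout this regime'' fails as soon as $res$ moves along the corridor (after which $r$ is diagonal to $res$ for a round, and the on-line role passes to the other robot). Moreover, in your configuration the first condition of an \textsc{InitGather Configuration} (the robots are not on the same line) forces $r'$ onto the boundary column through $res$ --- so your case reduces to the paper's picture after swapping the roles of $r$ and $r'$ and recomputing $L_1,L_2$. You anticipate precisely this in your closing paragraph (``the labels $r$ and $r'$ can swap and $L_1,L_2$ must be recomputed'') but dismiss it as bookkeeping rather than carrying out the reduction; as written, the complementary case --- which is the one the paper proves and on which the $(T_f+1)$-per-decrement accounting actually runs --- is never handled. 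The final count $(T_f+1)(m_1-1)$ agrees with the paper's, but the argument supporting it covers only the special sub-configuration, so the proof has a genuine gap.
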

\begin{proof}
 Without loss of generality let the dimension of $R_{Con}$ be $m_1 \times 2$ at the beginning of some round (say $t_0$) where, $m_1 > 2$. This implies exactly one of the height or width of $R_{Con}$ is two. Without loss of generality let the width is two and height be $m_1 >2$. Thus $R_{Con}$ consists of exactly two lines perpendicular to the width. One of these two lines (say, $L$) is a boundary of $G$ which does not contains the door vertex and the other one is the line parallel and adjacent to it (Say $L_2$). Since no robots are terminated, the configuration at the beginning of round $t_0$ is an \textsc{InitGather Configuration} (Lemma~\ref{Lemma:InitGAther}). So,  each of these two lines $L$ and $L_2$ contains exactly one robot. Let without loss of generality $r$ be on the line $L$ and $r'$ is on $L_2$. Now by Lemma~\ref{Lemma:resXl1l2}, $res$ must be on $L$ and below $L_1$. Also, the distance of $r'$ to the line perpendicular to $L$ and passing through $res$ (say $L'$) is at most one as the configuration is an \textsc{InitGather Configuration} at the beginning of round $t_0$. So if during round $t_0$, $res$ moves perpendicular to $L$ it must collides with $r'$ and $r'$ terminates contrary to the assumption. So let us consider $res$ either move along $L$ or does not move at all during round  $t_0$ (Fig.~\ref{fig:l5}). Now, in the worst case within $T_f+1$ rounds the height of $R_{Con}$ must decrease by one unit. Also, since $m_1 >2$,  at the beginning of round $t_0$, if $res$ is at a corner, $r$ is not adjacent to $res$ and hence $r'$ can only move parallel to $L_2$. So unless $m_1 =2$, width of $R_{Con}$ remains two. Thus in the worst case, the height of $R_{Con}$ becomes two while the width still remains two in $(T_f+1)(m_1-1)$ rounds. Hence the lemma. \qed 
\end{proof}

\begin{figure}[ht!]
    \centering
    \includegraphics[width=.6\textwidth]{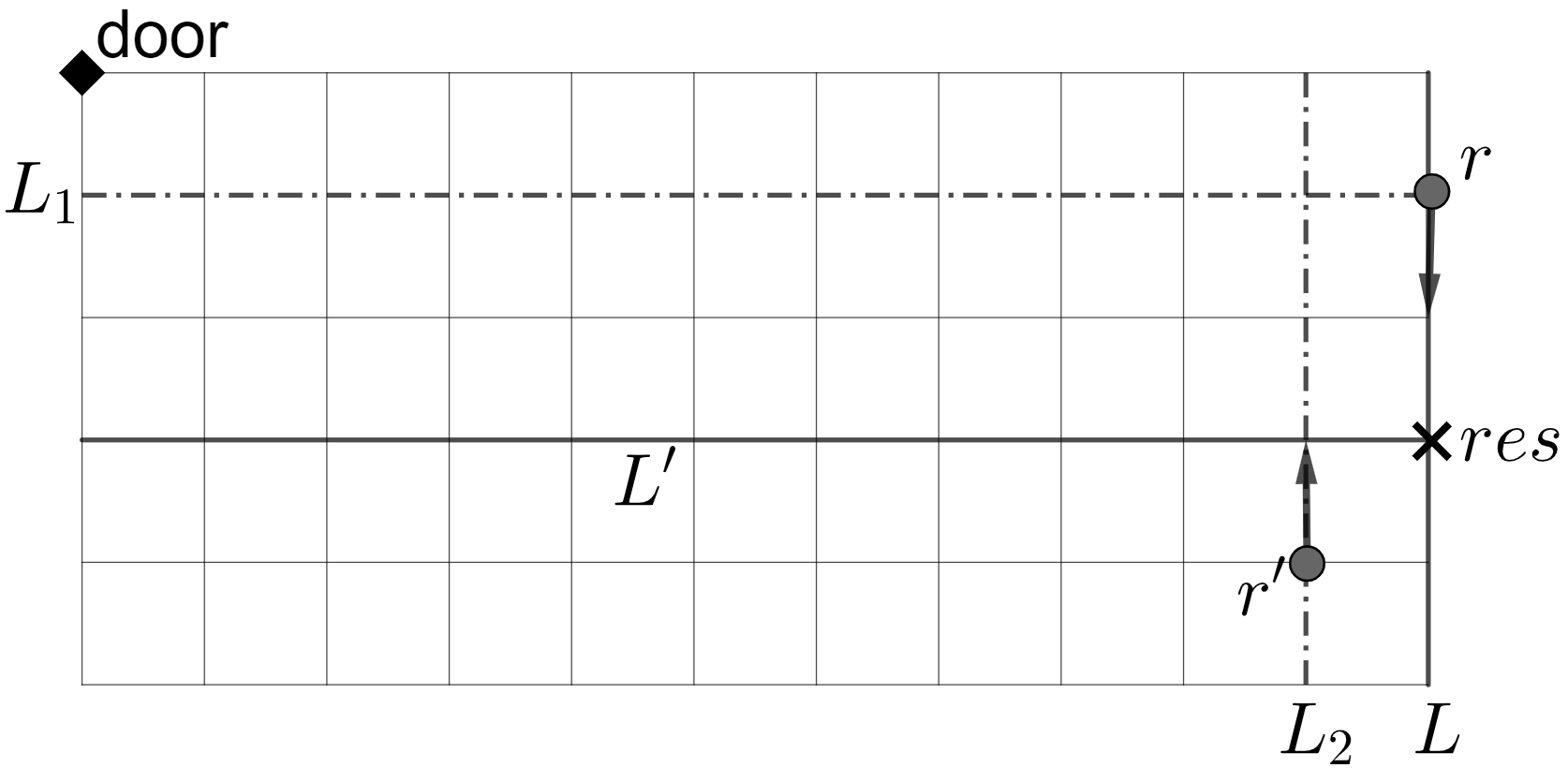}
    \caption{Dimension of decreases from $m_1 \times 2$ to $(m_1-1)\times 2$ where $m_1 >2$. }
    \label{fig:l5}
\end{figure}

Now we have proved that if no robot terminates then, within $O(T_f\times (m+n))$ rounds there is a round $t_1$ such that at the beginning of it, $R_{Con}$ is a $2\times 2$ rectangle on the bottom right corner of the grid $G$ (Fig.~\ref{fig:final}). At the beginning of round $t_1$, $res$ must be at the corner of the Grid diagonally opposite to the door vertex (by Lemma~\ref{Lemma:resXl1l2}). Here two robots $r$ and $r'$ must be on two different adjacent vertices of $res$ as the configuration is an \textsc{InitGather Configuration} (Lemma~\ref{Lemma:InitGAther}). Hence by the algorithm of \textsc{Gather Phase} $r$ and $r'$ both move to the vertex of $res$ during the round $t_1$, while $res$ has no other edges to move out as it is on the corner. So, both robot reaches the location of $res$ and terminates.
From this discussion, we can conclude the following Theorem.

\begin{figure}[ht!]
    \centering
    \includegraphics[width=.6\textwidth]{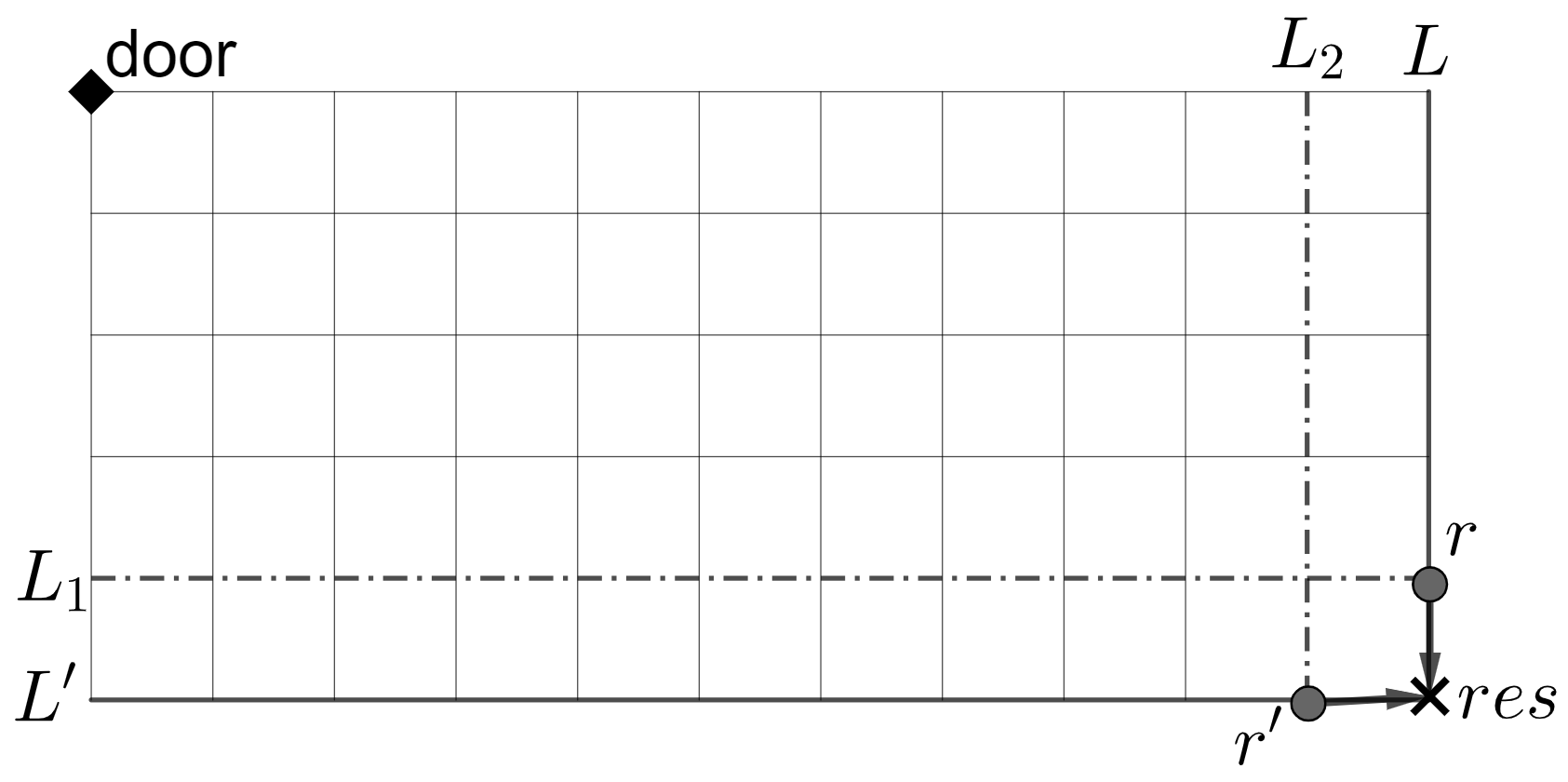}
    \caption{$R_{Con}$ has dimension $2 \times 2$. }
    \label{fig:final}
\end{figure}
\begin{theorem}
    For a grid of dimension $m \times n$, the \textsc{Gather Phase} terminates within $O(T_f\times (m+n))$ rounds.
\end{theorem}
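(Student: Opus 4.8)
The plan is to combine the structural invariant with the monotone shrinkage of the containing rectangle $R_{Con}$ and a final forced-collision argument at the corner, all under the assumption (for contradiction) that no robot ever terminates. First I would invoke Lemma~\ref{Lemma:InitGAther} to guarantee that, as long as no robot terminates, every configuration reached during the \textsc{Gather Phase} is again an \textsc{InitGather Configuration}. This keeps the objects $L$, $L_1$, $L_2$, and hence $R_{Con}$, well defined in every round, and by the corollary to Lemma~\ref{Lemma:resXl1l2} the resource $res$ stays strictly inside $R_{Con}$ and never escapes it without colliding with a robot. These two facts let me treat $R_{Con}$ as a well-defined, non-increasing ``search box'' whose dimensions I can track round by round.

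Next I would carry out the counting argument in two regimes. While both the height and the width of $R_{Con}$ exceed two, Lemma~\ref{Lemma:decreaseArea} guarantees that in every window of $2T_f+1$ rounds at least one of the two dimensions strictly decreases while neither ever increases. Since the sum of the two dimensions starts bounded by $m+n$ and each such window removes at least one unit, after at most $O\bigl((m+n)(2T_f+1)\bigr)=O(T_f(m+n))$ rounds one of the dimensions has dropped to two. At that point I switch to the boundary regime and apply Lemma~\ref{Lemma:areaFour}: from an $m_1\times 2$ (resp.\ $2\times n_1$) rectangle the dimension collapses to $2\times 2$ within $(T_f+1)(m_1-1)$ (resp.\ $(T_f+1)(n_1-1)$) additional rounds, which is again $O(T_f(m+n))$.

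The concluding step is the forced collision at the $2\times 2$ stage. Once $R_{Con}$ is $2\times 2$, Lemma~\ref{Lemma:resXl1l2} pins $res$ at the grid corner diagonally opposite the door, and the invariant of Lemma~\ref{Lemma:InitGAther} forces the two robots onto the two distinct vertices adjacent to that corner. The corner has no exit edge available to $res$, so by the \textsc{Gather Phase} rule that moves a robot onto $res$ when $res$ is at a corner and the other robot is adjacent to it, both robots step onto $res$ and terminate. This contradicts the standing assumption that no robot terminates, so in fact a robot must reach $res$ within the $O(T_f(m+n))$ rounds already accounted for. Finally I would add the short clean-up: after one robot terminates on $res$ the resource becomes fixed, and the surviving robot reaches it along a shortest path in a further $O(m+n)$ rounds, which is absorbed into $O(T_f(m+n))$.

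I expect the main obstacle to be making the two-regime round count airtight rather than any conceptual difficulty. In particular, I would need to justify carefully that the decrease window in Lemma~\ref{Lemma:decreaseArea} is genuinely $O(T_f)$ in every worst case, including the delicate subcase where $res$ is adjacent to $r$ and escapes perpendicular to $L$ while $r'$ is off $L'$, which can cost up to $2T_f+1$ rounds; and to verify that the handoff from the interior regime to the boundary regime of Lemma~\ref{Lemma:areaFour} neither loses nor double-counts rounds. Everything else is a direct assembly of the preceding lemmas.
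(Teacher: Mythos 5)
Your proposal is correct and follows essentially the same route as the paper: the invariant of Lemma~\ref{Lemma:InitGAther}, containment of $res$ in $R_{Con}$ via Lemma~\ref{Lemma:resXl1l2} and its corollary, the two-regime shrinkage count using Lemma~\ref{Lemma:decreaseArea} (with the $2T_f+1$ window, which that lemma's Case~II(b) already justifies) followed by Lemma~\ref{Lemma:areaFour}, and the forced termination once $R_{Con}$ is $2\times 2$ with $res$ cornered diagonally opposite the door. Your added clean-up step for a surviving robot is a harmless refinement the paper handles implicitly through the shortest-path rule in Algorithm~\ref{algo:GathrPhase}.
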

Now since the termination of \textsc{Gather Phase} implies termination of the whole algorithm we can conclude with the following theorem.
\begin{theorem}
    Algorithm \textsc{Dynamic Rendezvous} terminates within $O(T_f \times (m+n))$ rounds.
\end{theorem}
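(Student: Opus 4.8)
The plan is to assemble the final theorem directly from the per-phase termination results already established, since the three phases are executed sequentially and each has been bounded separately. First I would observe that the algorithm \textsc{Dynamic Rendezvous} dispatches control into exactly one of the three phases in every round (per Algorithm~\ref{algo:main}), and that the phases are entered in a fixed order: \textsc{Entry Phase}, then \textsc{Boundary Phase} (if the resulting configuration is not an \textsc{InitGather Configuration}), and finally \textsc{Gather Phase}. Crucially, the control flow is monotone forward: once the \textsc{Boundary Phase} begins, no robot ever returns to a corner (so the \textsc{Entry Phase} guard can never fire again), and once an \textsc{InitGather Configuration} is formed it is preserved by Lemma~\ref{Lemma:InitGAther} (so the \textsc{Gather Phase} guard stays satisfied and the algorithm never falls back to the \textsc{Boundary Phase}). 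I would state this monotonicity explicitly, as it is what licenses summing the three phase bounds rather than worrying about oscillation between phases.

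Next I would bound each phase in turn. The \textsc{Entry Phase} terminates in $O(1)$ rounds: each robot traverses the door vertex to an adjacent vertex at most once, so after at most two rounds both robots occupy the two distinct neighbours of the door. The \textsc{Boundary Phase} terminates within $O(T_f \times \max\{m,n\})$ rounds by Theorem~\ref{Thm:bdryPhsTrmintn}, at the end of which either both robots have reached $res$ (and the algorithm is done) or an \textsc{InitGather Configuration} has been formed. In the latter case the \textsc{Gather Phase} is invoked and terminates within $O(T_f \times (m+n))$ rounds by the immediately preceding theorem, with both robots co-located with $res$ at a corner. Adding these three contributions gives a total of $O(1) + O(T_f \times \max\{m,n\}) + O(T_f \times (m+n))$, and since $\max\{m,n\} \le m+n$ the dominant term is $O(T_f \times (m+n))$.

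I would then close by noting that termination of whichever phase is reached last coincides with both robots being located on $res$, which is precisely the rendezvous condition; hence the whole algorithm solves the problem and halts within $O(T_f \times (m+n))$ rounds in the worst case. The main (and really the only) obstacle is not a calculation but a correctness subtlety: I must be careful to justify that the phase transitions are genuinely one-directional so that the bounds compose additively. In particular I would want to confirm that the configuration handed off from the \textsc{Boundary Phase} to the \textsc{Gather Phase} is a valid \textsc{InitGather Configuration} (guaranteed by the termination condition of Theorem~\ref{Thm:bdryPhsTrmintn}) and that no boundary-phase step can accidentally satisfy the \textsc{Entry Phase} corner guard (guaranteed by the phase's invariant that no robot moves to a corner). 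Once these hand-offs are pinned down, the theorem follows immediately as a corollary of the earlier results, so I would keep this proof short and essentially just invoke them.
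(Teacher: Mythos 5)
Your proposal is correct and takes essentially the same route as the paper, which likewise concludes this theorem by composing the per-phase bounds (a constant-round \textsc{Entry Phase}, the $O(T_f \times \max\{m,n\})$ bound of Theorem~\ref{Thm:bdryPhsTrmintn}, and the $O(T_f \times (m+n))$ bound for the \textsc{Gather Phase}), observing that termination of the \textsc{Gather Phase} implies termination of the whole algorithm. Your explicit justification of the one-directional phase transitions only spells out invariants the paper states informally (no robot moves to a corner during the \textsc{Boundary Phase}, and Lemma~\ref{Lemma:InitGAther} preserves the \textsc{InitGather Configuration}), so it is a careful rendering of the same argument rather than a different one.
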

\section{Conclusion}
\label{sec:5}
Gathering is a classical problem in the field of swarm robotics. Rendezvous is a special case of gathering where two robots gather at a single point in the environment. All the previous works on gathering considered the meeting point to be not known by the robots but here we have considered the robots to know the meeting point but the meeting point can move in the environment until a robot reaches it. To the best of our knowledge, it is the first work that considers a dynamic meeting point. In this work, we have shown that it is impossible for two robots to gather at a known dynamic meeting point on a finite grid if the scheduler is semi-synchronous. Then considering a fully synchronous scheduler we have provided a distributed algorithm \textsc{Dynamic Rendezvous} which gathers the two robots on the known dynamic meeting point called the resource, within $O(T_f \times (m+n))$ rounds where $m \times n$ is the dimension of the grid and $T_f$ is the upper bound of the number of consecutive rounds the resource can stay at a single vertex alone. We have also provided a lower bound of time i.e., $\Omega(m+n)$ to solve this problem considering a $m \times n$ grid. So, if $T_f \le k$ for some constant $k$ then our algorithm is time optimal.

For future courses of research, one can think of solving this problem on other different networks such as tree, ring, etc. In ring networks, solving this problem with limited visibility can be really interesting. Also, One can think of finding out the minimum number of robots needed to gather at a known dynamic meeting point for different schedulers in different networks. 
%

%
%
 \bibliographystyle{splncs04}
 \bibliography{samplepaper}
\end{document}